\newtheorem{theorem}{Theorem}
\newtheorem{lemma}{Lemma}
\newtheorem{corollary}{Corollary}
\newtheorem{definition}{Definition}
\newcommand{\sm}{SM(G,\vec{\alpha})}
\newcommand{\ep}{\epsilon}
\title{%\alert{Social Context}
  Friendship, Altruism,
  and Reward Sharing in\\
  Stable Matching and Contribution Games\thanks{This work was supported in part by NSF grants CCF-0914782 and CCF-1101495.}}
\author{%
  Elliot Anshelevich\thanks{Dept. of Computer Science, Rensselaer
    Polytechnic Institute, Troy, NY}
  \and Onkar Bhardwaj\thanks{Dept. of Electrical, Computer, \& Systems
    Engineering, Rensselaer Polytechnic Institute, Troy, NY}
  \and Martin Hoefer\thanks{Dept. of Computer Science, RWTH Aachen
    University, Germany}
}
\begin{document}
\maketitle
\thispagestyle{empty}

\begin{abstract}
  We study stable matching problems in networks where players
  are embedded in a social context, and may incorporate friendship
  relations or altruism into their decisions. Each player is a node in
  a social network and strives to form a good match with a neighboring
  player. We consider the existence, computation, and inefficiency of stable
  matchings from which no pair of players wants to deviate. When the benefits
  from a match are the same for both players, we show that incorporating
  the well-being of other players into their matching decisions significantly
  decreases the price of stability, while the price of anarchy remains unaffected.
  Furthermore, a good stable matching achieving the price of stability bound always
  exists and can be reached in polynomial time. We extend these results to more
  general matching rewards, when players matched to each other may receive different
  utilities from the match. For this more general case, we show that incorporating
  social context (i.e., ``caring about your friends") can make an even larger
  difference, and greatly reduce the price of anarchy. We show a variety of existence
  results, and present upper and lower bounds on the prices of anarchy and
  stability for various matching utility structures. Finally, we extend most of our
  results to network contribution games, in which players can decide how much effort
  to contribute to each incident edge, instead of simply choosing a single node to match with.
\end{abstract}

%\newpage
\setcounter{page}{1}

\section{Introduction}

Stable matching problems form the basis of many important assignment
and allocation tasks in economics and computer science. The central
approach to analyzing such scenarios is two-sided matching, which has
been studied intensively since the 1970s in both the algorithms and
economics literature~\cite{Gusfield89,RothBook90}. An important
variant of stable matching is matching with cardinal utilities, when
each match can be given numerical values expressing the \emph{quality}
or \emph{reward} that the match yields for each of the incident
players~\cite{Anshelevich10}. Cardinal utilities specify the quality
of each match instead of just a preference ordering, and they allow
the comparison of different matchings using measures such as social
welfare. A particularly appealing special case of cardinal utilities
is known as correlated stable matching, where both players who are
matched together obtain the same reward. Apart from the wide-spread
applications of correlated stable matching in, e.g., market
sharing~\cite{Goemans06}, job markets~\cite{Arcaute09}, social
networks~\cite{HoeferICALP11}, and distributed computer
networks~\cite{Goemans06,Mathieu08}, this model also has favorable
theoretical properties such as the existence of a potential function. It
guarantees existence of a stable matching even in the non-bipartite
case, where every pair of players is allowed to
match~\cite{Abraham08,Mathieu08}.

When matching individuals in a social environment, it is often
unreasonable to assume that each player cares only about their own
match quality. Instead, players may incorporate the well-being of
their friends/neighbors as well, or that of
friends-of-friends. Players may even be altruistic to some degree, and
consider the welfare of all players in the network. Caring about
friends and altruistic behavior is commonly observed in practice and
has been documented in laboratory
experiments~\cite{Levine98,Eshel98}. However, results in algorithmic
game theory about the impact of social context on stable outcomes are
only recently starting to
appear~\cite{Ashlagi08,HoeferIJCAI11,Buehler11,Chen08,HoeferESA09,HoeferCoor09,Chen11}.
%However, quantitative results
%about the impact of social context on stable outcomes in
%game-theoretic models are only recently starting to
%appear~\cite{Ashlagi08,HoeferIJCAI11,Buehler11,Chen08,HoeferESA09,HoeferCoor09,Chen11}.
In this paper, we study how social context influences stability and efficiency in matching scenarios. We use a general approach incorporating the social
context of a player into its decisions. Every player may consider the
well-being of every other player to some degree, with the degree of
this regardfulness possibly decaying with the hop distance in the
network. Players who only care about their neighbors, as well as fully
altruistic players, are special cases of this model. Our model of
altruism is a strict generalization of recent approaches in
algorithmic game theory in which the social welfare of the whole
population is (part of) the utility of each player. %In the majority of
%this paper, we study how the addition of social context
%%presence of friendship and altruism
%affects the quality of stable matchings.

Moreover, for matching in social environments, the standard model of
correlated stable matching may be too constraining compared to general
cardinal utilities, because matched players receive exactly the same
reward. Such an \emph{equal sharing} property is intuitive and bears a
simple beauty, but there are a variety of other reward sharing methods
that can be more natural in different contexts. For instance, in
theoretical computer science it is common practice to list authors
alphabetically, but in other disciplines the author sequence is
carefully designed to ensure a proper allocation of credit to the
different participants of a joint paper. Here the credit is often
supposed to be allocated in terms of input, i.e., the first author
should be the one that has contributed most to the project. Such
input-based or proportional sharing is then sometimes overruled with
sharing based on intrinsic or acquired social status, e.g., when a
distinguished expert in a field is easily recognized and
subconsciously credited most with authorship of an article. In this
paper, we are interested in how such unequal reward sharing rules affect
stable matching scenarios. In particular, we consider a large class of
local reward sharing rules and characterize the impact of unequal
sharing on existence and inefficiency of stable matchings, both in
cases when players are embedded in a social context and when they are not.

Recently, correlated matching problems have become the basis for
analyzing more general contribution and participation games in
networks. In such games, each player must decide how much effort to
contribute to each relationship or project that it is involved
in. Insights on such problems may advance the understanding of
contribution incentives in networked societies and improve the design
of user-based platforms. As we show in Sections~\ref{sec:CCG}
and~\ref{sec:CCG2}, we are able to extend most of our results about
stable matching in the presence of social context and general reward
sharing to \emph{network contribution games} which have recently been
introduced in~\cite{AnshelevichAlgo11}.

%######################################################################
\subsection{Stable Matching and Contribution Games}
\label{sec:model}
%######################################################################

In this paper we consider two classes of games: stable matching with
cardinal utilities, and convex contribution games. We consider both
scenarios in the presence of social context, and unequal reward
sharing.

\paragraph{Stable Matching} Correlated stable matching is a prominent
subclass of general ordinary stable matching. In this game, we are
given a (non-bipartite) graph $G=(V,E)$ with edge weights $r_e$. In a
matching $M$, if node $u$ is matched to node $v$, the utility of node
$u$ is defined to be exactly $r_e$. This can be interpreted as both
$u$ and $v$ getting an identical reward from being matched
together. We will also consider unequal reward sharing, where node $u$
obtains some reward $r_e^u$ and node $v$ obtains reward $r_e^v$ with
$r_e^u+r_e^v=r_e$. Therefore, the preference ordering of each node
over its possible matches is implied by the rewards that this node
obtains from different edges. A pair of nodes $(u,v)$ is called a {\em
  blocking pair} in matching $M$ if $u$ and $v$ are not matched to
each other in $M$, but can both strictly increase their rewards by
being matched to each other instead. A matching with no blocking pairs
is called a {\em stable matching.}

While the matching model above has been well-studied, in this paper we
are interested in stable matchings that arise in the presence of
social context. Denote the reward obtained by a node $v$ in a matching
$M$ as $R_v$. We now consider the case when node $u$ not only cares
about its own reward, but also about the rewards of its
friends. Specifically, the {\em perceived} or {\em friendship utility}
of node $v$ in matching $M$ is defined as
$$U_v=R_v+\sum_{d=1}^{diam(G)}\alpha_d\sum_{u\in N_d(v)}R_u,$$
where $N_d(v)$ is the set of nodes with shortest distance exactly $d$
from $v$, and $1\geq\alpha_1\geq\alpha_2\geq\ldots\geq 0$ (we use
$\vec{\alpha}$ to denote the vector of $\alpha_i$ values). In other
words, for a node $u$ that is distance $d$ away from $v$, the utility
of $v$ increases by an $\alpha_d$ factor of the reward received by
$u$. Thus, if $\alpha_d=0$ for all $d\geq 2$, this means that nodes
only care about their neighbors, while if all $\alpha_d>0$, this means
that nodes are altruistic and care about the rewards of everyone in
the graph. The perceived utility is the quantity that the nodes are
trying to maximize, and thus, in the presence of friendship, a
blocking pair would be a pair of nodes such that each could increase
its {\em perceived utility} by matching to each other.

\paragraph{Contribution Games} While most of the results in this paper
concern stable matching, we also study convex contribution games (CCG)
(for detailed definition and discussion see
\cite{AnshelevichAlgo11}). In these games, we are given a graph
$G=(V,E)$ with a {\em reward function} $f_e:\mathbb{R}_{\geq
  0}\times\mathbb{R}_{\geq 0}\rightarrow\mathbb{R}_{\geq 0}$ for each
edge $e$, which is assumed to be nondecreasing and convex in each of
its arguments, and obeys the property that $f_e(0,y)=f_e(x,0)=0$ for
all $x,y$. The nodes are players of this game: each node $v$ has a
budget $B_v$, and its strategy consists of deciding how to allocate
this budget among its incident edges. The reward to node $v$ from edge
$e=(v,u)$ is equal to $f_e(s_v(e),s_u(e))$, where $s_v(e)$ and
$s_u(e)$ are the amounts of budget allocated to edge $e$ by nodes $v$
and $u$ respectively. For the case where reward to endpoints of an
edge is allowed to be different, we instead have two functions:
$f_e^v$ and $f_e^u$ such that $v$ receives $f_e^v(s_v(e),s_u(e))$
reward and $u$ receives $f_e^u(s_v(e),s_u(e))$ reward.  The total
reward of a node $v$ (which we denote by $R_v$) is simply the total
reward it collects from incident edges.

In this paper, just as in \cite{AnshelevichAlgo11}, we will mostly be
concerned with {\em pairwise equilibria} of CCG. A pairwise
equilibrium (a.k.a. a 2-strong equilibrium) is a solution where no
pair of players can switch their strategies (budget allocations)
simultaneously such that both players strictly increase their
rewards. Also, a pairwise equilibrium must not possess any unilateral
improving deviations by any individual player.

Just as with stable matching, we are interested in the properties of
CCG with social context and friendship utilites. This version is
defined analogously: the perceived utility of a node $v$ is
$U_v=R_v+\sum_{d=1}^{diam(G)}\alpha_d\sum_{u\in N_d(v)}R_u,$ and this
is what node $v$ is attempting to maximize. Therefore, a pairwise
equilibrium in the presence of friendship means that there is no pair
of nodes that can simultaneously improve their perceived utility.

\paragraph{Centralized Optimum and the Price of Anarchy} We study the
social welfare of equilibrium solutions and compare them to an optimal
centralized solution. The social welfare is the sum of rewards, i.e.,
the optimal solution is the one that maximizes $\sum_v R_v$. Notice
that, while this is equivalent to maximizing the sum of player
utilities when $\vec{\alpha}=0$, this is no longer true with social
context (i.e., when $\vec{\alpha}\neq 0$). Nevertheless, as in
e.g.~\cite{Meier08,Chen11}, we believe this is a well-motivated and
important measure of solution quality, as it captures the overall
performance of the system, while ignoring the perceived ``good-will"
effects of friendship and altruism. For example, when considering
projects done in pairs, the reward of an edge can represent actual
productivity, while the perceived utility may not.

To compare stable solutions with the centralized optimum, we will
often consider the price of anarchy and the price of stability. When
considering stable matchings, by the price of anarchy
(resp. stability) we will mean the ratio between the optimum
centralized solution and the worst (resp. best) stable
matching. Similarly, when considering CCG, by the price of anarchy
(resp. stability) we will mean the ratio between the optimum
centralized solution and the worst (resp. best) {\em pairwise
  equilibrium}.

\subsection{Our Results}

For stable matching with cardinal utilities we show the following.

\begin{itemize}
\item For friendship utilities and equal reward sharing, a stable
  matching exists and the price of anarchy (ratio of the
  maximum-weight matching with the worst stable matching) is at most
  2, the same as in the case without friendship. The price of
  stability, on the other hand, improves in the presence of
  friendship, as we can show a tight bound of
  $\frac{2+2\alpha_1}{1+2\alpha_1+\alpha_2}$. Moreover, we present a
  dynamic process that converges to a stable matching of at least this
  quality in polynomial time, if initiated from the maximum-weight
  matching.

\item When two nodes matched together may receive different rewards, a
  stable matching may not exist. However, for several natural local
  reward sharing rules (e.g., when reward shares depend on inherent
  properties of the two incident nodes, see
  Section~\ref{sec:reward-sharing}), we show that a stable matching
  exists. Moreover, for arbitrary oblivious reward sharing (i.e., when rewards for the incident players are arbitrary but independent of the matching decisions of other players), we show that prices of anarchy and stability depend on the level of inequality among reward shares. Specifically, if $R$ is the maximum ratio over all edges $(u,v) \in E$ of the reward shares of node $u$ and $v$, then the price of anarchy is at most $1+R$ without friendship, and at most $\frac{(1+R)(1+\alpha_1)}{1+\alpha_1R}$ with
  friendship utilities. We also show tight or almost-tight lower bounds on
  the price of anarchy, and give improved results for several
  particular reward sharing rules.
\end{itemize}

Our results imply that for socially aware players, the price of
stability can greatly improve: e.g., if
$\alpha_1=\alpha_2=\frac{1}{2}$, then the price of stability is at
most $\frac{6}{5}$, and a solution of this quality can be obtained
efficiently. Moreover, if reward sharing is extremely unfair ($R$ is
unbounded), then friendship becomes even more important: changing
$\alpha_1$ from $0$ to $\frac{1}{2}$ reduces the price of anarchy from
being unbounded to being at most $3$.

We next consider convex contribution games. While friendship changes
the properties of these games (e.g., there might be instances without
a strong equilibrium), we show that all of the results mentioned above
for stable matching also hold for convex contribution games, replacing
``stable matching" with ``pairwise equilibrium". For the case where
players do not have to spend all of their budget, this is not
difficult to show, as there is a one-to-one correspondence between
stable matchings and pairwise equilibria. For the case where players
{\em must} spend their entire budget, however, this becomes somewhat
trickier, as the types of deviations available to players in convex
contribution games are significantly more numerous than in stable
matching models. Nevertheless, we show that the same results hold for
the case of {\em local} friendship, i.e., where $\alpha_i=0$ for all
$i\geq 2$. We also show new results for specific reward sharing rules
in convex contribution games, such as proportional sharing, where a
node's share of reward from an edge is proportional to the amount of
effort it contributes to the edge.

% [Finally, we propose several ways of improving the quality of stable
% matchings, as well as pairwise equilibrium, by influencing the game
% in question.]

\subsection{Related Work}

Stable matching problems have been studied intensively over the last
few decades. On the algorithmic side, existence, efficient algorithms,
and improvement dynamics for two-sided stable matchings have been of
interest (for references, see standard
textbooks~\cite{Gusfield89,RothBook90}). In this paper we address the
more general stable roommates problem, in which every player can be
matched to every other player. For general preference lists, there
have been numerous works characterizing and algorithmically deciding
existence of stable
matchings~\cite{Irving85,Teo98,Chung00,Roth90}. For the correlated
stable roommates problem, existence is guaranteed by a potential
function argument~\cite{Abraham08,Mathieu08}, and convergence time of
random improvement dynamics is polynomial~\cite{Ackermann11}. In
\cite{AnshelevichD09}, price of anarchy and stability bounds for {\em
  approximate} correlated stable matchings were provided. In contrast,
we study friendship, altruism, and unequal reward sharing in stable
roommate problems with cardinal utilities.

Another line of research closely connected to some of our results
involves game-theoretic models for contribution. A prominent example
is the general approach by Ballester et al~\cite{Ballester06}, in
which equilibria exhibit similarities with a commonly known centrality
index in social networks. There are numerous extensions and variants
of this game. In all these games, however, players contribute quite
generally to the whole society, and not to particular links or
relationships. See~\cite{Galeotti10} for an analysis of a broad
framework that includes this game and several others (such as public
goods games~\cite{BramoulleJET07}). Instead,
in~\cite{AnshelevichAlgo11} we consider a contribution game tied more
closely to matching problems. Here players have a budget of effort and
contribute parts of this effort towards specific projects and
relationships. For more related work on the contribution game,
see~\cite{AnshelevichAlgo11}. All previous results for this model
concern equal sharing and do not address the impact of the player's
social context.

Analytical aspects of reward sharing have been a central theme in game
theory since its beginning, especially in cooperative
games~\cite{Peleg03}. Recently, there have been prominent algorithmic
results also for network bargaining~\cite{KleinbergSTOC08,Kanoria11}
and credit allocation problems~\cite{Kleinberg11}. %Independently from our work, \cite{BachrachUnpublished} also considers unequal reward sharing in generalized versions of contribution games.
In addition, the work in \cite{Edith} considers various reward sharing schemes in coalition formation; their motivation resembles ours, although they mostly consider Nash equilibrium solutions in hypergraphs, while we consider pairwise equilibria in the presence of social context. Work such as \cite{Zick} could also be considered a generalization of contribution games, but in the cooperative setting and without the players having a social context.

Our notion of a player's social context is based on numerical
influence parameters that determine the impact of player rewards on
the (perceived) utilities of other players. A recently popular model
of altruism is inspired by Ledyard~\cite{Ledyard97} and has generated
much interest in algorithmic game
theory~\cite{Chen08,Chen11,HoeferESA09,HoeferCoor09}. Our model
smoothly interpolates between this global approach and the idea of
\emph{surplus collaboration} among players in a given social context
put forward in~\cite{Meier08,Ashlagi08} and considered recently
in~\cite{Buehler11}.

In addition, our work is more generally related to the area of
strategic network creation games, in which selfish players build
networks and optimize different trade-offs between creation cost and
benefits from network structure. For an introduction to this
literature see recent expositions~\cite{TardosChapter07,Jackson08}. In
this literature, there also originated a notion of pairwise
equilibrium, which allows fewer player deviations than what we term
pairwise equilibrium here. In our case, it corresponds exactly to
2-strong equilibrium; for a discussion see~\cite{AnshelevichAlgo11}.

%\input{model}
%######################################################################
\section{Stable Matching with Friendship Utilities}
\label{sec:friendship-equal-sharing}
%######################################################################

\newcommand{\BRBP}{{\sc Best-Relaxed-Blocking-Pair}}
\newcommand{\BBP}{{\sc Best-Blocking-Pair}}

We begin by considering correlated stable matching in the presence of
friendship utilities. In this section, the reward received by both
nodes of an edge in a matching is the same, i.e., we use equal reward
sharing, where every edge $e$ has an inherent value $r_e$ and both
endpoints receive this value if edge $e$ is in the matching. We
consider more general reward sharing schemes in
Sections~\ref{sec:reward-sharing} and~\ref{sec:friendship-RS}. Recall
that the friendship utility of a node $v$ increase by $\alpha_d R_u$
for every node $u$, where $d$ is the shortest distance between $v$ and
$u$. We abuse notation slightly, and let $\alpha_{uv}$ denote
$\alpha_d$, so if $u$ and $v$ are neighbors, then
$\alpha_{uv}=\alpha_1$.

%We will use following notation:
%\begin{eqnarray*}
% G	&=& \text{ Graph under consideration } \\
% E	&=& E(G) = \text{ Edge set of G } \\
% N_i(u) &=& \{v\in V(G): d(u,v)=i\} \text{ , We will say } N_0(u)={u}\\
% %B_v    &=& \text{ Budget of node $v$ } \\
% r_{uv} &=& \text{Edge reward of edge $r_{uv}$.} \\
% r_{e}  &=& \text{ another shorthand $r_{uv}(B_u,B_v)$ when
%		   the context is clear} \\
% %x_{vw}(v) &=& \text{ investment of node $v$ on edge $(vw)$ } \\
% R_v    &=& r_{uv} \text{ if $v$ is matched to a node $u$}\\
% M      &=& \text{ Some matching under consideration } \\
% w(M)   &=& \sum \limits_{e\in M} r_e \,\,\,\, \text{i.e. the welfare
%		of a matching }\\
% M^*    &=& \text{ A social optimum i.e. a matching for which
%		$w(\cdot)$ attains maximum }\\
%\end{eqnarray*}
%
%Consider a matching game where each node tries to
%maximize its own utility given by following expression
%(for a node $u$):
%\begin{eqnarray*}
%  U_u = R_u + \alpha \sum\limits_{v\in N_1(u)} R_v
%	\,\,\,\,\,\,\,\,\, 0\leq \alpha \leq 1
%\end{eqnarray*}
%where $1\geq \alpha \geq 0$.
%If both nodes involved
%in that deviation strictly improve upon their utilities before the
%deviation then we will call it an improving deviation.
%The pairwise equilibria of this game is exactly
%the set of stable matchings.

%%======================================================================
%\subsection{Classification and Properties of Improving Deviations}
%\label{subsec:swivel-biswivel}

\begin{figure}
  \centering
  \includegraphics[scale=0.5]{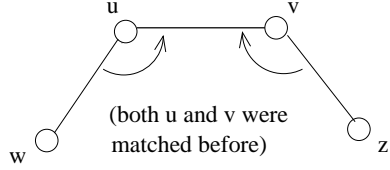}
  \caption{biswivel deviation}
  \label{fig:biswivel}
\end{figure}

\begin{figure}
  \centering
  \includegraphics[scale=0.5]{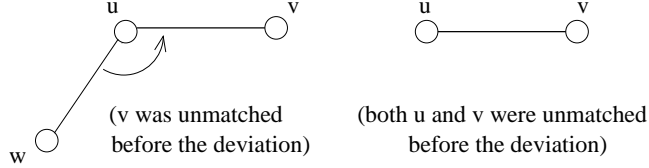}
  \caption{swivel deviation}
  \label{fig:swivel}
\end{figure}

Given a matching $M$, we begin by classifying the following types of
improving deviations that a blocking pair can undergo.

\begin{definition}
  We call an improving deviation a {\bf biswivel} whenever two
  neighbors $u$ and $v$ switch to match to each other, such that both
  $u$ and $v$ were matched to some other nodes before the deviation in
  $M$.
\end{definition}
See Figure~\ref{fig:biswivel} for explanation. For such a biswivel to
exist in a matching, the following necessary and sufficient conditions
must hold.
\begin{eqnarray}
 (1+\alpha_1)r_{uv} &>& (1+\alpha_1)r_{uw}
			+ (\alpha_1+\alpha_{uz})\,r_{vz}
	\label{eqn:biswivel-1}\\
 (1+\alpha_1)r_{uv} &>& (1+\alpha_1)r_{vz}
		      + (\alpha_1+\alpha_{vw})\,r_{uw}
	\label{eqn:biswivel-2}
\end{eqnarray}
Inequality~\eqref{eqn:biswivel-1} can be explained as follows: The
left side quantifies the utility gained by $u$ because of getting
matched to $v$ and the right side quantifies the utility lost by $u$
because of $u$ and $v$ breaking their present matchings with $w$ and
$z$ respectively. Hence Inequality~\eqref{eqn:biswivel-1} implies that
$u$ gains more utility by getting matched with $v$ than it loses
because of $u$ and $v$ breaking their matchings with $v$ and $z$.
Inequality~\eqref{eqn:biswivel-2} can similarly be explained in the
context of node
$v$.
%
% Notice that these conditions are necessary, but not sufficient, for
% the biswivel to be improving: for example, if $u$ and $z$ were
% neighbors, then the utility lost by $u$ due to the biswivel would be
% $(1+\alpha_1)r_{uw} + 2\alpha_1r_{vz}$ instead. Since $\alpha_1\geq
% \alpha_2$, Inequality (\ref{eqn:biswivel-1}) must still hold in this
% case.

\begin{definition}
  We call an improving deviation a {\bf swivel} whenever two neighbors
  get matched such that at least one node among the two neighbors was
  not matched before the deviation.
\end{definition}
See Figure~\ref{fig:swivel} for explanation. For such a swivel to
occur, the following set of conditions must hold.
\begin{eqnarray}
  (1+\alpha_1) r_{uv} &>& (1+\alpha_1) r_{uw}
				\label{eqn:support-401}\\
  (1+\alpha_1) r_{uv} &>& (\alpha_1+\alpha_{vw}) r_{uw}
				\label{eqn:support-403}
\end{eqnarray}
Inequality~\eqref{eqn:support-401} says that $u$ gains more utility by
getting matched with $v$ than it loses by breaking its matching with
$w$. Inequality~\eqref{eqn:support-403} says that $v$ gains more
utility by getting matched with $v$ than the utility it loses because
of $u$ breaking its matching with $w$.  As $\alpha_1+\alpha_{vw} \leq
1+\alpha_1$, Inequality~\eqref{eqn:support-403} is implied by
Inequality~\eqref{eqn:support-401}. This means that if $v$ is
unmatched, the only condition for $(u,v)$ to be a blocking pair is
that $u$ should have net increase in utility by getting matched with
$v$. This is true even if $v$ and $w$ are neighbors. Canceling the
factor of $1+\alpha_1$, we can thus summarize this (necessary and
sufficient) condition for swivel to be an improving deviation as:
\begin{eqnarray}
  r_{uv} > r_{uw} \label{eqn:swivel}
\end{eqnarray}

All improving deviations by a blocking pair can be classified as
either a biswivel or a swivel, depending only on whether both nodes
are matched or not.
% [Also notice that the conditions for $u$ and $v$ doing a biswivel or
% swivel depend only on the set of nodes $S$ which $u$ and $v$ have
% been currently matched with and not on the rest of the graph, i.e.,
% the question of whether $(u,v)$ is a blocking pair is entirely
% settled by $S$.  This implies that if $(u,v)$ is not a blocking pair
% given $S$, then it will not be a blocking pair until $S$
% changes. {\em MOVE LATER?}]
Now we make the following observation which will be useful later:

\begin{lemma} \label{lem:increase-edge-reward}
  Suppose a node $u$ is matched to $w$ in matching $M$. If $(u,v)$
  form a blocking pair, then $r_{uv} > r_{uw}$.
\end{lemma}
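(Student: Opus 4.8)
The plan is to invoke the dichotomy stated just above the lemma---every improving deviation of a blocking pair is either a swivel or a biswivel---together with the necessary conditions already derived for each of these two types. Since $(u,v)$ is a blocking pair, $u$ and $v$ must be adjacent in $G$ (otherwise they could not match at all), so $r_{uv}$ is well defined, and by hypothesis $u$ is currently matched to $w$ in $M$. I would then split into two cases according to whether $v$ is matched in $M$.

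If $v$ is unmatched, the deviation in which $u$ and $v$ match each other is a \emph{swivel}, with $u$ playing the role of the matched endpoint and $w$ its former partner. The necessary condition for $u$ itself to profit from this deviation is exactly Inequality~\eqref{eqn:support-401}, i.e.\ $(1+\alpha_1)r_{uv} > (1+\alpha_1)r_{uw}$; cancelling the positive factor $1+\alpha_1$ gives $r_{uv} > r_{uw}$, which is the summary condition~\eqref{eqn:swivel} and is precisely the claim. If instead $v$ is matched, say to some node $z \neq u$, then the deviation is a \emph{biswivel}, so Inequalities~\eqref{eqn:biswivel-1} and~\eqref{eqn:biswivel-2} both hold. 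Using only~\eqref{eqn:biswivel-1}, we have $(1+\alpha_1)r_{uv} > (1+\alpha_1)r_{uw} + (\alpha_1+\alpha_{uz})\,r_{vz} \ge (1+\alpha_1)r_{uw}$, where the last step uses that $\alpha_1, \alpha_{uz}, r_{vz} \ge 0$; dividing by $1+\alpha_1 > 0$ again yields $r_{uv} > r_{uw}$.

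There is essentially no hard step here; the lemma is really the observation that, from $u$'s own viewpoint, switching partners can only help if the new edge is strictly more valuable to $u$ than its current one, and the friendship terms involving $v$'s former partner $z$ can never reverse this. The only points requiring (minor) care are: (i) identifying which of the two necessary conditions in each case encodes $u$'s incentive---namely the one carrying $(1+\alpha_1)r_{uw}$ on its right-hand side rather than the one with the attenuated coefficient $\alpha_1+\alpha_{vw}$; and (ii) noting that the extra nonnegative term $(\alpha_1+\alpha_{uz})r_{vz}$ in the biswivel inequality can be discarded, since removing a nonnegative quantity from the larger side of a strict inequality preserves it. I do not anticipate any genuine obstacle.
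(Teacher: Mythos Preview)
Your proof is correct and follows the same approach as the paper: the paper's proof simply points to inequalities~\eqref{eqn:biswivel-1} and~\eqref{eqn:biswivel-2} for the biswivel case and inequality~\eqref{eqn:swivel} for the swivel case, and you have filled in exactly those details. Your additional remarks about which inequality encodes $u$'s incentive and why the nonnegative term can be dropped are accurate and make the argument more explicit than the paper's one-line version.
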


\begin{proof}
  It is straightforward to see it from
  inequalities~\eqref{eqn:biswivel-1} and~\eqref{eqn:biswivel-2} for a
  biswivel and inequality~\eqref{eqn:swivel} in case of a swivel.
\qed \end{proof}

%======================================================================
\subsection{Existence and Price of Anarchy of Stable Matching with Friendship Utilities}
\label{subsec:existence-stable-matching}

\begin{theorem} \label{thm:existence-stable-matching}
  A stable matching exists in stable matching games with friendship
  utilities. Moreover, the set of stable matchings without friendship
  (i.e., when $\vec{\alpha}=\mathbf{0}$) is a subset of the set of
  stable matchings with friendship utilities on the same graph.
\end{theorem}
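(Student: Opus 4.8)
The plan is to prove both parts via a potential-function argument, mirroring the classical approach for correlated stable matching but adapted to friendship utilities. Define the potential of a matching $M$ to be $\Phi(M) = \sum_{e \in M} r_e$, the total weight of the matched edges (equivalently, half the social welfare). The key claim is that whenever a blocking pair $(u,v)$ performs its improving deviation — be it a swivel or a biswivel — the potential $\Phi$ strictly increases. Since $\Phi$ takes finitely many values (bounded above by the max-weight matching), this establishes that any sequence of improving deviations terminates, and a terminal matching is by definition stable. Hence a stable matching exists.

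The main work is verifying that $\Phi$ strictly increases under each deviation type, and here I would lean directly on the characterizations already established in the excerpt. For a swivel, the deviation replaces edge $uw$ (if present) by edge $uv$; the net change in $\Phi$ is $r_{uv} - r_{uw}$ (or just $+r_{uv}$ if $u$ was also unmatched), which is positive exactly by Inequality~\eqref{eqn:swivel}. For a biswivel, edges $uw$ and $vz$ are removed and edge $uv$ is added, so $\Delta\Phi = r_{uv} - r_{uw} - r_{vz}$. To see this is positive, add Inequalities~\eqref{eqn:biswivel-1} and~\eqref{eqn:biswivel-2}: the left side gives $2(1+\alpha_1)r_{uv}$, and the right side gives $(1+\alpha_1)(r_{uw}+r_{vz}) + (\alpha_1+\alpha_{uz})r_{vz} + (\alpha_1+\alpha_{vw})r_{uw}$. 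Since $\alpha_1+\alpha_{uz} \le 1+\alpha_1$ and $\alpha_1+\alpha_{vw}\le 1+\alpha_1$ (because $\alpha_1$ is the largest coefficient), the right side is at most $2(1+\alpha_1)(r_{uw}+r_{vz})$, so $2(1+\alpha_1)r_{uv} > 2(1+\alpha_1)(r_{uw}+r_{vz})$, giving $r_{uv} > r_{uw} + r_{vz}$ after dividing by $2(1+\alpha_1)>0$. Thus $\Delta\Phi > 0$ in every case.

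For the second part — that every stable matching without friendship remains stable with friendship — I would argue contrapositively: if $M$ admits a blocking pair $(u,v)$ under friendship utilities (i.e., the swivel or biswivel conditions hold), then $(u,v)$ is also a blocking pair under $\vec\alpha = \mathbf 0$. In the $\vec\alpha = \mathbf 0$ case, $(u,v)$ blocks $M$ iff $r_{uv} > r_{uw}$ and $r_{uv} > r_{vz}$ (replacing absent edges by reward $0$). But Lemma~\ref{lem:increase-edge-reward} gives precisely $r_{uv} > r_{uw}$ and, by symmetry, $r_{uv} > r_{vz}$, from the friendship blocking conditions. Hence $(u,v)$ blocks $M$ without friendship as well, so $M$ is not stable without friendship — establishing the desired set inclusion.

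The step I expect to be the most delicate is making sure the deviation-type case analysis is genuinely exhaustive and that the potential accounting handles the boundary cases correctly (a node being unmatched, or the two "old partners" $w$ and $z$ coinciding with each other or with $u,v$ in degenerate ways). The excerpt already asserts that every blocking-pair deviation is a swivel or biswivel, so I would cite that; but I would double-check that when, say, $w = z$ is impossible (since $u,v$ are matched to distinct partners in a matching) and that freed-up nodes only decrease no other matched edges — which is immediate since a matching deviation by $(u,v)$ touches only the edges incident to $u$ and $v$. Everything else is the routine inequality manipulation sketched above.
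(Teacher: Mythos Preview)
Your argument for the second part (the set inclusion) is correct and is essentially what the paper does. The gap is in your existence argument via the potential $\Phi(M)=\sum_{e\in M} r_e$: it is \emph{not} true that $\Phi$ strictly increases under a biswivel. Your inequality manipulation has the direction backwards. After adding \eqref{eqn:biswivel-1} and \eqref{eqn:biswivel-2} you get
\[
2(1+\alpha_1)r_{uv} \;>\; (1+\alpha_1)(r_{uw}+r_{vz}) + (\alpha_1+\alpha_{uz})r_{vz} + (\alpha_1+\alpha_{vw})r_{uw},
\]
and you then replace the right-hand side by the \emph{upper} bound $2(1+\alpha_1)(r_{uw}+r_{vz})$. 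But from $A>B$ and $B\le C$ you cannot conclude $A>C$; you would need a \emph{lower} bound on the right-hand side of at least $2(1+\alpha_1)(r_{uw}+r_{vz})$, and no such bound holds. Concretely, take $\alpha_1=0.1$, $\alpha_{uz}=\alpha_{vw}=0$, $r_{uw}=r_{vz}=1$, $r_{uv}=1.2$: both biswivel inequalities read $1.32>1.2$ and are satisfied, yet $\Delta\Phi = 1.2-1-1 = -0.8 < 0$. (Even for $\vec\alpha=\mathbf 0$ the sum-of-weights potential fails: the biswivel conditions reduce to $r_{uv}>r_{uw}$ and $r_{uv}>r_{vz}$, which do not imply $r_{uv}>r_{uw}+r_{vz}$.)

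The fix is to drop the potential argument entirely and notice that your second-part argument already yields existence: a stable matching for $\vec\alpha=\mathbf 0$ exists (classical correlated stable matching), and by the inclusion you proved it remains stable under friendship utilities. This is exactly the paper's proof.
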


\begin{proof}
  For $\vec{\alpha}=\mathbf{0}$, our model is a subcase of correlated
  stable matching, so a stable matching $M$ exists. All we need to
  prove now is that the same $M$ is stable when we have friendship
  utilities.

  Suppose it is not the case, i.e., $M$ is unstable for some value of
  $\vec{\alpha}$. This is possible only if we have a blocking pair
  $(u,v)$. But this cannot happen because:
  \begin{itemize}
  \item If both $u$ and $v$ were unmatched in $M$ then $M$ could not
    have been stable for $\vec{\alpha}=\mathbf{0}$.
  \item If exactly one of $u$ and $v$ is unmatched in $M$, say $u$ is
    matched to $w$ and $v$ is unmatched, then for $(u,v)$ to be a
    blocking pair, $r_{uv}>r_{uw}$ by
    Lemma~\ref{lem:increase-edge-reward}. But in such a case, $M$
    could not have been stable for $\vec{\alpha} =\mathbf{0}$.
  \item Suppose both $u$ and $v$ are matched in $M$, say $u$ is
    matched to $w$ and $v$ is matched to $z$. In such a case if
    $(u,v)$ forms a blocking pair corresponding to a biswivel, then by
    Lemma~\ref{lem:increase-edge-reward}, we have $r_{uv} > r_{uw}$
    and $r_{uv} > r_{vz}$ and thus $M$ could not have been stable for
    $\vec{\alpha}= \mathbf{0}$.
  \end{itemize}
  Hence we have shown that no blocking pair exists in $M$ with
  friendship utilities, %implying that $M$ is stable with friendship
%utilities,
  thus proving the theorem.
\qed \end{proof}

%%======================================================================
%\subsection{Price of Anarchy (PoA)}
%\label{subsec:poa}

\begin{theorem}\label{thm:poa}
  The price of anarchy in stable matching games with friendship
  utilities is at most 2, and this bound is tight.
\end{theorem}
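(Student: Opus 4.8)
The plan is to prove the two halves separately: the upper bound $\text{PoA} \le 2$, and a matching lower bound construction showing tightness.

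For the upper bound, I would argue that any stable matching $M$ (with friendship utilities) has total reward at least half that of a maximum-weight matching $M^*$. The natural approach is a charging argument: fix an optimal matching $M^*$ and, for each edge $(u,v) \in M^*$, charge its reward $r_{uv}$ to edges of $M$ incident to $u$ or $v$. The key fact I would invoke is Lemma~\ref{lem:increase-edge-reward}: if $(u,v)$ were a blocking pair, then $u$'s current matched reward is strictly less than $r_{uv}$ (and similarly for $v$). Since $M$ is stable, for every edge $(u,v) \in M^* \setminus M$, the pair $(u,v)$ is \emph{not} blocking, so at least one of $u,v$ must be matched in $M$ to some partner giving reward at least $r_{uv}$ — otherwise, by the swivel/biswivel conditions (in particular, the unmatched case reduces to the simple condition $r_{uv} > r_{uw}$), a deviation would be improving. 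Actually I need to be slightly careful: non-blocking means the biswivel inequalities \eqref{eqn:biswivel-1}--\eqref{eqn:biswivel-2} fail, which is weaker than just comparing single edge rewards. The cleaner route is: if $(u,v) \notin M$ is not blocking, I claim $\max(r_{uw}, r_{vz}) \ge r_{uv}$ where $w,z$ are the $M$-partners of $u,v$ (treating an unmatched node's reward as $0$). This should follow because if \emph{both} $r_{uw} < r_{uv}$ and $r_{vz} < r_{uv}$, then both \eqref{eqn:biswivel-1} and \eqref{eqn:biswivel-2} hold (since the extra terms on the right are at most $(1+\alpha_1)$ times a reward strictly below $r_{uv}$... this needs the coefficient bookkeeping, but $\alpha_1 + \alpha_{uz} \le 1 + \alpha_1$ makes it work). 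Then each $M^*$-edge $(u,v)$ can be charged to an $M$-edge of weight $\ge r_{uv}$ incident to it; since each $M$-edge $(x,y)$ receives charges only from $M^*$-edges at $x$ and at $y$ (at most one each, as $M^*$ is a matching), each $M$-edge of weight $r_{xy}$ absorbs at most $2 r_{xy}$ worth of optimal reward. Summing gives $\sum_{e \in M^*} r_e \le 2 \sum_{e \in M} r_e$, i.e., $\text{PoA} \le 2$.

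For tightness, I would exhibit a simple instance. The classic example is a path on three nodes $a - b - c$ with $r_{ab} = 1$ and $r_{bc} = 1 + \epsilon$ for small $\epsilon > 0$: then $M = \{ab\}$ is stable (checking that $(b,c)$ is not a blocking pair requires the biswivel/swivel conditions — actually here $b$ is matched and $c$ is unmatched, so the condition for $(b,c)$ to block is just $r_{bc} > r_{ab}$, which \emph{does} hold, so this naive example fails). So instead I would use a path on four nodes $a - b - c - d$ with weights $r_{ab} = r_{cd} = 1$ and $r_{bc} = 1 + \epsilon$... again need to recheck stability against $(b,c)$. The robust construction is the standard one for stable roommates PoA: a triangle, or more carefully, take two disjoint edges of weight $1$ adjacent (at their endpoints) to a central heavy edge, arranged so that the central edge being absent is stable. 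Concretely: nodes $1,2,3,4$ with edges $(1,2), (3,4)$ of weight $1$ and edge $(2,3)$ of weight $2 - \epsilon$; the matching $M = \{(1,2),(3,4)\}$ has weight $2$, the optimum $M^* = \{(2,3)\}$... no, optimum would include $(1,2)$ or $(3,4)$ too. The right example: a path $1 - 2 - 3 - 4$ with $r_{12} = r_{34} = 1$ and $r_{23} = 2$, so $\text{OPT} = 2$ (via $\{(1,2),(3,4)\}$) — wait that's not showing a gap either. Let me reconsider: I want a stable matching of weight $\approx W$ and optimum $\approx 2W$. Take a path $1-2-3-4$ with $r_{12} = r_{34} = 1 - \epsilon$ and $r_{23} = 1$; then $M^* = \{(1,2),(3,4)\}$ with weight $2 - 2\epsilon$, and I'd want $M = \{(2,3)\}$ with weight $1$ to be stable. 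But $(1,2)$ blocks $M$ since node $1$ is unmatched and $r_{12} > 0$. The honest fix for non-bipartite correlated matching is to use a cycle or a more elaborate gadget where every node is matched in the bad stable matching; I will search for the minimal such gadget (likely a 6-cycle with alternating weights, or copies of a 3-path glued appropriately), verify stability using \eqref{eqn:biswivel-1}--\eqref{eqn:swivel}, and drive the ratio to $2$ as $\epsilon \to 0$.

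The main obstacle will be the tightness construction rather than the upper bound: I must ensure the "bad" stable matching leaves \emph{no} unmatched node adjacent to a positive-weight edge (since \eqref{eqn:swivel} makes such a node trivially part of a blocking pair), and that the biswivel conditions genuinely fail at every candidate blocking pair — this is delicate precisely because the presence of friendship ($\alpha_1 > 0$) changes the inequalities, so the gadget's weights must be tuned so that the conclusion holds \emph{for all} valid $\vec\alpha$ (or at least for the worst case; since Theorem~\ref{thm:existence-stable-matching} says friendship only \emph{adds} stable matchings, the PoA can only grow with $\vec\alpha$, so exhibiting the gap at $\vec\alpha = \mathbf 0$ already suffices for the lower bound, which simplifies matters considerably). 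With that simplification, a standard correlated-stable-roommates example achieving ratio $2 - \epsilon$ completes the proof.
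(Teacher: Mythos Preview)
Your upper-bound argument contains a genuine gap. The key claim ``if $(u,v)\notin M$ is not blocking then $\max(r_{uw},r_{vz})\ge r_{uv}$'' is false once $\alpha_1>0$. Take $\alpha_1=1$, $\alpha_{uz}=\alpha_{vw}=0$, and $r_{uv}=1$, $r_{uw}=r_{vz}=0.9$. Then the biswivel condition~\eqref{eqn:biswivel-1} would require $2>1.8+0.9=2.7$, which fails; so $(u,v)$ is \emph{not} a blocking pair, yet $\max(r_{uw},r_{vz})=0.9<r_{uv}$. Your informal justification (``the extra terms on the right are at most $(1+\alpha_1)$ times a reward strictly below $r_{uv}$'') forgets that the right-hand side of~\eqref{eqn:biswivel-1} has \emph{two} such terms, not one. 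What does follow from the failure of~\eqref{eqn:biswivel-1} (or~\eqref{eqn:biswivel-2}) is the weaker inequality $r_{uv}\le r_{uw}+r_{vz}$, since $\alpha_1+\alpha_{uz}\le 1+\alpha_1$. This weaker bound still suffices: summing $r_{uv}\le R^M_u+R^M_v$ over all $(u,v)\in M^*$ and using that each node appears at most once in $M^*$ gives $w(M^*)\le\sum_v R^M_v=2\,w(M)$. This is essentially the route the paper takes (it derives the bound as the $R=1$ specialization of the general unequal-sharing result in Theorem~\ref{thm:poa-friendship-RS}).

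For tightness you are working much too hard, and you also make an error. In your path $1\text{--}2\text{--}3\text{--}4$ with $r_{12}=r_{34}=1-\epsilon$ and $r_{23}=1$, you reject $M=\{(2,3)\}$ because ``$(1,2)$ blocks $M$ since node $1$ is unmatched and $r_{12}>0$.'' But a blocking pair requires \emph{both} nodes to strictly improve; node $2$ would need $r_{12}>r_{23}$ by~\eqref{eqn:swivel}, which fails. So that example already works. The paper's construction is simpler still: a path $w\text{--}u\text{--}v\text{--}z$ with all three edge rewards equal to $1$. Then $M=\{(u,v)\}$ is stable (the swivel condition $r_{uw}>r_{uv}$ fails as $1\not>1$), the optimum $\{(u,w),(v,z)\}$ has weight $2$, and this gives ratio exactly $2$ for every $\vec{\alpha}$ simultaneously---no limiting $\epsilon$ needed.
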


\begin{proof}
  This theorem is simply a special case of our much more general
  Theorem~\ref{thm:poa-friendship-RS}, which proves a price of anarchy
  bound of $1+\frac{R+\alpha_1}{1+\alpha_1 R}$, with $R$ being a
  measure of how unequally players can share rewards on an edge. When
  players share edge rewards equally, the price of anarchy bound in
  Theorem~\ref{thm:poa-friendship-RS} reduces to
  $1+\frac{1+\alpha_1}{1+\alpha_1}=2$, as desired. To show that this
  bound is tight, simply consider a 3-edge path with all edge rewards
  being 1, for any value of $\vec{\alpha}$.
\qed \end{proof}

\subsection{Price of Stability and Convergence}
\label{subsec:pos}

The main result in this section bounds the price of stability in
stable matching games with friendship utilities to $\frac{2+2\alpha_1}
{1+2\alpha_1+\alpha_2}$, and this bound is tight (see
Theorem~\ref{thm:pos-friendship-equal-sharing} below). This bound has
some interesting implications. It is decreasing in each $\alpha_1$ and
$\alpha_2$, hence having friendship utilities always yields a lower
price of stability than without friendship utilities.  Also, note that
values of $\alpha_3,\alpha_4,...,\alpha_{diam(G)}$ have no
influence. Thus, caring about players more than distance $2$ away does
not improve the price of stability in any way. Also, if
$\alpha_1=\alpha_2=1$, then $\text{PoS}=1$, i.e., there will exist a
stable matching which will also be a social optimum. Thus
\textit{loving thy neighbor and thy neighbor's neighbor but nobody
  beyond} is sufficient to guarantee that there exists at least one
socially optimal stable matching. In fact, due to the shape of the
curve, even small values of friendship quickly decrease the price of
stability; e.g., setting $\alpha_1=\alpha_2=0.1$ already decreases the
price of stability from $2$ to $\sim 1.7$.

%This upper bound is a decreasing
%function of $\alpha$. Hence $\alpha=1$ results into the best PoS
%bound of $4/3$. In the worst case when $\alpha=0$, the upper
%bound on PoS is $2$. Hence having friendship can improve the
%price of stability.

We will establish the price of stability bound by defining an
algorithm that creates a good stable matching in polynomial time.
%
%We will give an algorithm (called ``best blocking pair"), prove
%that it terminates in polynomial time, and later obtain the PoS bound
%using the algorithm.
%The idea behind the algorithm is as following: start with the best
%possible matching, i.e. a social optimum. This matching may or may not
%be stable. Now choose the ``best" blocking pair $(u,v)$: the one with
%maximum edge reward $r_{uv}$. Allow this blocking pair to get matched
%to each other instead of their current partners. Check if the resulting matching is stable. If it is not stable then allow the best blocking pair for
%this matching to get matched. Repeat the procedure until you
%have no more blocking pairs, thus obtaining a stable matching.
%
One possible idea to create a stable matching that is close to optimum
is to use a \BBP\ algorithm: start with the best possible matching,
i.e. a social optimum, which may or may not be stable. Now choose the
``best" blocking pair $(u,v)$: the one with maximum edge reward
$r_{uv}$. Allow this blocking pair to get matched to each other
instead of their current partners. Check if the resulting matching is
stable. If it is not stable then allow the best blocking pair for this
matching to get matched. Repeat the procedure until there are no more
blocking pairs, thereby obtaining a stable matching.

This algorithm gives the desired price of stability and running time
bounds for the case of ``altruism" when all $\alpha_i$ are the same,
see Corollary~\ref{cor:bbp-algorithm} below. To provide the desired
bound with general friendship utilities, we must alter this algorithm
slightly using the concept of \textit{relaxed} blocking pair.

\begin{definition}
  Given a matching $M$, we call a pair of nodes $(u,v)$ a
  \textit{relaxed blocking pair} if either $(u,v)$ form an improving
  swivel, or $u$ and $v$ are matched to $w$ and $z$ respectively, with
  the following inequalities being true:
  \begin{eqnarray}
    (1+\alpha_1)r_{uv} &>& (1+\alpha_1)r_{uw}
                         + (\alpha_1+\alpha_2)\,r_{vz}
	\label{eqn:relaxed-biswivel-1}\\
    (1+\alpha_1)r_{uv} &>& (1+\alpha_1)r_{vz}
   		         + (\alpha_1+\alpha_2)\,r_{uw}
	\label{eqn:relaxed-biswivel-2}
  \end{eqnarray}
\end{definition}

In other words, a relaxed blocking pair ignores the possible edges
between nodes $u$ and $z$, and has $\alpha_2$ in the place of
$\alpha_{uz}$ (similarly, $\alpha_2$ in the place of $\alpha_{vw}$).
It is clear from this definition that a blocking pair is also a
relaxed blocking pair, since the conditions above are less
constraining than Inequalities~\eqref{eqn:biswivel-1}
and~\eqref{eqn:biswivel-2}. Thus a matching with no relaxed blocking
pairs is also a stable matching. Moreover, it is easy to see that
Lemma~\ref{lem:increase-edge-reward} still holds for relaxed blocking
pairs. We will call a relaxed blocking pair satisfying Inequalities
\eqref{eqn:relaxed-biswivel-1} and \eqref{eqn:relaxed-biswivel-2} a
{\em relaxed biswivel,} which may or may not correspond to an
improving deviation, since a relaxed blocking pair is not necessarily
a blocking pair.

Now we present the algorithm to compute a stable matching that is
close to optimal.

%----------------------------------------------------------------------
\subsubsection{\BRBP\ Algorithm}
\label{subsubsec:bbp-equal-sharing}

\noindent
\begin{enumerate}
  \item	\label{item:initialize}
	Initialize $M=M^*$ where $M^*$ is a socially optimum matching.
  \item \label{item:best-pair-deviate}
	If there is no relaxed blocking pair, terminate. Otherwise make
	the relaxed blocking pair $(u,v)$ with maximum edge reward
	$r_{uv}$ be matched to each other. In other words, remove the edges of $M$ containing $u$ and $v$, and add the edge $(u,v)$ to $M$.
  \item Repeat step~\ref{item:best-pair-deviate}.
\end{enumerate}

%----------------------------------------------------------------------
\subsubsection{Dynamics of \BRBP}
\label{subsubsec:dynamics-bbp-equal-sharing}

%Let us call the deviation corresponding to relaxed conditions in
%inequality~(\ref{eqn:relaxed-biswivel-1})
%and~(\ref{eqn:relaxed-biswivel-2}) as \textbf{relaxed biswivel}.
%Note that a biswivel implies improving deviation whereas a relaxed
%biswivel may or may not be an improving deviation, i.e., a relaxed
%blocking pair corresponding to a relaxed biswivel may not
%be a blocking pair. However, a relaxed blocking pair corresponding to
%a swivel is also a blocking pair, thus a swivel is equivalent to
%a relaxed swivel.

To establish the efficient running time of \BRBP~ and the price of
stability bound of the resulting stable matching, we first analyze the
dynamics of this algorithm and prove some helpful lemmas. We can
interpret the algorithm as a sequence of swivel and relaxed biswivel
deviations, each inserting one edge into $M$, and removing up to two
edges. Note that it is not guaranteed that the inserted edge will stay
forever in $M$, as a subsequent deviation can remove this edge from
$M$. Let $O_1,O_2,O_3,\cdots$ denote this sequence of deviations, and
$e(i)$ denote the edge which got inserted into $M$ because of
$O_i$. Now let us analyze the dynamics of the algorithm by using the
following two lemmas.

\begin{lemma} \label{lem:first-biswivel}
  The first deviation $O_1$ during the execution of \BRBP\ is a
  relaxed biswivel.
\end{lemma}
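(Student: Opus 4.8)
The plan is to argue by contradiction, leaning entirely on the fact that \BRBP\ is initialized at a social optimum $M^*$. If the algorithm terminates at Step~\ref{item:best-pair-deviate} without ever performing a deviation, there is nothing to prove, so assume $O_1$ exists and corresponds to a relaxed blocking pair $(u,v)$ in $M=M^*$. By definition a relaxed blocking pair is either an improving swivel or a relaxed biswivel, so it suffices to rule out the swivel case.

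Suppose then that $O_1=(u,v)$ is an improving swivel. By the definition of a swivel, at least one of $u,v$ is unmatched in $M^*$; say $v$ is unmatched. I would form the matching $M'$ obtained from $M^*$ by deleting the $M^*$-edge incident to $u$ (if any), say $(u,w)$, and inserting the edge $(u,v)$. The next step is to compare the social welfare $\sum_v R_v = 2\sum_{e\in M} r_e$ of $M^*$ and $M'$: if $u$ was matched to $w$ in $M^*$, the change is $2(r_{uv}-r_{uw})$, and if $u$ was also unmatched it is $2r_{uv}$. In the first case the swivel condition~\eqref{eqn:swivel} gives $r_{uv}>r_{uw}$, and in the second case positivity of edge rewards gives $r_{uv}>0$; either way $M'$ has strictly larger social welfare than $M^*$, contradicting the optimality of $M^*$.

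Hence $M^*$ admits no improving swivel, so the relaxed blocking pair selected as $O_1$ cannot be of swivel type and must instead satisfy Inequalities~\eqref{eqn:relaxed-biswivel-1} and~\eqref{eqn:relaxed-biswivel-2}; that is, $O_1$ is a relaxed biswivel. There is no substantial obstacle in this argument; the only point that needs care is the case distinction on whether the other endpoint $u$ is matched in $M^*$, together with the routine check that in both sub-cases the swivel inequality translates into a genuine strict increase of $\sum_v R_v$, which the optimality of $M^*$ rules out.
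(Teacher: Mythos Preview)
Your proof is correct and follows essentially the same approach as the paper: both argue that an improving swivel would strictly increase the weight of the matching, contradicting the optimality of $M^*$. The paper compresses your explicit case analysis into a single appeal to Lemma~\ref{lem:increase-edge-reward}, but the underlying argument is identical.
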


\begin{proof}
  Having $O_1$ as a swivel will strictly improve the value of matching
  by Lemma~\ref{lem:increase-edge-reward}. Hence if we begin the
  algorithm with $M=M^*$, having $O_1$ as a swivel will produce a
  matching with value strictly greater than $M^*$, which is a
  contradiction.
\qed \end{proof}

\begin{lemma} \label{lem:subsequence}
  Let $O_j$ be a relaxed biswivel that takes place during the
  execution of the best relaxed blocking pair algorithm. Suppose a
  deviation $O_k$ takes place before $O_j$. Then we have $r_{e(k)}\geq
  r_{e(j)}$.  Furthermore, if $O_k$ is a relaxed biswivel then
  $e(k)\neq e(j)$ (thus at most $|E(G)|$ relaxed biswivels can take
  place during the execution of the algorithm).
\end{lemma}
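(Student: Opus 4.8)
The statement has two parts: (a) a monotonicity claim that any deviation $O_k$ occurring before a relaxed biswivel $O_j$ has $r_{e(k)} \ge r_{e(j)}$, and (b) that if $O_k$ is also a relaxed biswivel then $e(k) \neq e(j)$, which bounds the number of relaxed biswivels by $|E(G)|$. Let me think about how to establish these.

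The plan is to prove both assertions together by contradiction, exploiting two facts about the algorithm: at every step it inserts the edge of \emph{maximum} reward among all current relaxed blocking pairs, and (as noted above) Lemma~\ref{lem:increase-edge-reward} holds for relaxed blocking pairs, so any deviation can only raise the reward of the partner edge at each endpoint it touches. Write $M_i$ for the matching after $O_i$, with $M_0=M^*$, and $R_i := r_{e(i)}$. Suppose the first assertion fails; then there is a relaxed biswivel $O_j$ together with some $k<j$ satisfying $R_k<R_j$, and I pick $k^*$ to be the \emph{largest} such $k$. Let $e(j)=(u,v)$ with $u$ matched to $w$ and $v$ matched to $z$ in $M_{j-1}$; since $O_j$ is a relaxed biswivel, \eqref{eqn:relaxed-biswivel-1} and \eqref{eqn:relaxed-biswivel-2} give $r_{uv}>r_{uw}$ and $r_{uv}>r_{vz}$, and $u,v,w,z$ are four distinct nodes.

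The heart of the argument is to show that $(u,v)$ is already a relaxed blocking pair in $M_{k^*-1}$, which immediately contradicts the choice of $O_{k^*}$ since then $R_{k^*}\ge r_{uv}=R_j$. First I would show the partner edges $(u,w)$ and $(v,z)$ are ``old'': if $(u,w)$ were ever (re-)inserted, say by $O_p$, then $R_p=r_{uw}<r_{uv}=R_j$, so $p\le k^*$ by maximality of $k^*$; hence, unless $(u,w)=e(k^*)$, this edge is present continuously from a time $\le k^*-1$ through $M_{j-1}$, and in particular lies in $M_{k^*-1}$ (and likewise for $(v,z)$). Then I split on $e(k^*)$. If $e(k^*)\notin\{(u,w),(v,z)\}$, both partner edges lie in $M_{k^*-1}$, so the quantities $r_{uv},r_{uw},r_{vz}$ entering \eqref{eqn:relaxed-biswivel-1}--\eqref{eqn:relaxed-biswivel-2} are the same there as at time $j$, and $(u,v)$ is a relaxed blocking pair in $M_{k^*-1}$. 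If $e(k^*)=(u,w)$ (the case $e(k^*)=(v,z)$ is symmetric), then $(v,z)$ still lies in $M_{k^*-1}$, while in $M_{k^*-1}$ node $u$ is either unmatched or matched to some $u'$ with $r_{uu'}<r_{uw}$ (Lemma~\ref{lem:increase-edge-reward} applied to $O_{k^*}$); plugging $r_{uu'}<r_{uw}$ into \eqref{eqn:relaxed-biswivel-1}--\eqref{eqn:relaxed-biswivel-2}, or, when $u$ is unmatched, noting that $(u,v)$ is then an improving swivel because $r_{uv}>r_{vz}$ by \eqref{eqn:relaxed-biswivel-2}, again exhibits $(u,v)$ as a relaxed blocking pair in $M_{k^*-1}$. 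Either way we reach the contradiction, proving $r_{e(k)}\ge r_{e(j)}$.

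For the ``furthermore'' I would reuse the first part: if $O_j$ is a relaxed biswivel with $e(j)=(u,v)$ and $u,v$ matched to $w,z$ in $M_{j-1}$, then $(u,w)$ and $(v,z)$ were never inserted by any deviation (an insertion $O_p$ would give $R_p=r_{uw}<r_{uv}=R_j$ with $p<j$, contradicting the first part), so they lie in $M^*$ and, by the same no-reinsertion reasoning, in every $M_i$ with $i\le j-1$. Hence if some earlier relaxed biswivel $O_k$ with $k<j$ also had $e(k)=(u,v)$, then $u$ would be matched to $v$ in $M_k$, contradicting $(u,w)\in M_k$ with $w\neq v$. Therefore distinct relaxed biswivels insert distinct edges, so there can be at most $|E(G)|$ of them.

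I expect the bookkeeping in the case $e(k^*)=(u,w)$ to be the main obstacle, since this is exactly the scenario where a relaxed biswivel threatens to ``undo'' the most recent deviation and so create a larger edge reward. The reason it cannot happen is that the deviation $O_{k^*}$ that put $u$ onto the edge $(u,w)$ could only have improved $u$'s partner edge, so the partner $u'$ of $u$ just before satisfies $r_{uu'}<r_{uw}<r_{uv}$; this slack is precisely what is needed to turn the relaxed-biswivel inequalities valid at time $j$ into relaxed-blocking-pair conditions valid at time $k^*-1$, contradicting the maximality built into the selection rule.
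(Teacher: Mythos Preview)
Your argument is correct, and it follows a somewhat different line from the paper's own proof, though the two are closely related.

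The paper assumes some $k<j$ with $r_{e(k)}<r_{e(j)}$, observes that the pair $e(j)$ therefore could not have been a relaxed blocking pair in $M_{k-1}$, and then tracks the \emph{last} deviation $O_l$ (between $O_k$ and $O_j$) that changed the partner set of the two nodes in $e(j)$ to its final configuration. A case split on whether $e(j)$ was already a relaxed blocking pair just before $O_l$, combined with Lemma~\ref{lem:increase-edge-reward}, yields the contradiction. For the ``furthermore'' clause, the paper reuses the same partner-set-change argument.

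You instead pick the extremal index at the other end: the \emph{largest} $k^*<j$ with $r_{e(k^*)}<r_{e(j)}$. Maximality of $k^*$ forces every insertion of a partner edge of $e(j)$ to occur at time at most $k^*$, so those partner edges (or strictly lighter replacements, via Lemma~\ref{lem:increase-edge-reward}) are already in place in $M_{k^*-1}$; the relaxed-biswivel inequalities valid at time $j$ then hold verbatim or a fortiori at time $k^*-1$, contradicting the choice of $O_{k^*}$. Your treatment of the ``furthermore'' is cleaner than the paper's: once part (a) is established, any insertion of a partner edge before $O_j$ would itself violate (a), so the partner edges of $e(j)$ lie in $M^*$ and persist throughout $M_0,\ldots,M_{j-1}$; an earlier biswivel inserting $e(j)$ would then force the deviating node to be matched to two distinct partners simultaneously.

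Both approaches ultimately exploit the same two ingredients---the greedy selection rule and Lemma~\ref{lem:increase-edge-reward}---but your ``largest bad index'' formulation avoids explicitly tracking how the partner set evolves and instead reasons directly about insertion times, which makes the bookkeeping a bit more transparent. The paper's ``last change of $S$'' argument, on the other hand, makes the causal story (why the configuration that enables the biswivel could not have arisen) more explicit.
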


It is important to note that this lemma does \textit{not} say that
$r_{e(i)}\geq r_{e(j)}$ for $i < j$. We are only guaranteed that
$r_{e(i)}\geq r_{e(j)}$ for $i < j$ if $O_j$ is a {\em relaxed
  biswivel}.  Between two successive relaxed biswivels $O_k$ and
$O_j$, the sequence of $r_{e(i)}$ for consecutive swivels can and does
increase as well as decrease, and the same edge may be added to the
matching multiple times. All that is guaranteed is that $r_{e(j)}$ for
a biswivel $O_j$ will have a lower value than all the preceding
$r_{e(i)}$'s. Thus, this lemma suggests a nice representation of
\BRBP\ in terms of phases, where we define a \textit{phase} as a
subsequence of deviations that begins with a relaxed biswivel and
continues until the next relaxed biswivel. Lemma~\ref{lem:subsequence}
guarantees that at the start of each phase, the $r_{e(j)}$ value is
smaller than the values in all previous phases, and that there is only
a polynomial number of phases. Now we proceed to prove
Lemma~\ref{lem:subsequence}.

\begin{proof}
  Let $e(j)=(vz)$ get inserted in $M$ because of a relaxed biswivel
  $O_j$.  We first give a brief outline of the proof. Suppose that the
  claim $r_{e(k)} \geq r_{e(j)}$ for $k<j$ is false and we have an
  $O_k$ with $k<j$ such that $r_{e(k)} < r_{e(j)}$. Clearly $(v,z)$
  could not have been a relaxed blocking pair just before $O_k$,
  otherwise the algorithm would have chosen $(v,z)$ as the best
  relaxed blocking pair instead of $O_k$. We will show that this leads
  to a conclusion that $(v,z)$ cannot be a relaxed blocking pair even
  for $O_j$. This is a contradiction, hence our assumption of
  $r_{e(k)}<r_{e(j)}$ could not have been correct. Thus for all $O_k$
  such that $k<j$ we will have $r_{e(k)}\geq r_{e(j)}$. Later we will
  use similar reasoning to prove that if $O_i$ with $i<j$ is a relaxed
  biswivel that takes place before a relaxed biswivel $O_j$ then
  $e(i)\neq e(j)$. Now let us proceed to the proof.

  Suppose to the contrary that we have $O_k$ with $k<j$ such that
  $r_{e(k)} < r_{e(j)}$ with $O_j$ being a relaxed biswivel. As
  discussed in the outline of the proof, this implies that $(v,z)$ was
  not a relaxed blocking pair at the time $O_k$ was selected. Let $S$
  be the set of nodes with whom $v$ and $z$ are matched at the time
  that $O_k$ is selected.  As long as $S$ does not change, $v$ and $z$
  will not be a relaxed blocking pair, since the change in utility
  experienced by $v$ and $z$ from matching to each other depends only
  on their partners in the current matching, i.e., the set $S$. Thus
  for the relaxed biswivel $O_j$ to occur, $S$ must change between
  $O_k$ and $O_j$. We will show that this leads to a contradiction:
  that $(v,z)$ cannot be a relaxed blocking pair for the time $O_j$ is
  selected.

  Suppose $v$ is matched to $x$ and $z$ is matched to $y$ just before
  biswivel $O_j$. Since $(v,z)$ is a relaxed blocking pair at this
  point, we thus have
  \begin{eqnarray}
    (1+\alpha_1)r_{vz} &>& (1+\alpha_1) r_{vx} +
    (\alpha_1+\alpha_2) r_{zy} \label{eqn:support-5}\\
    (1+\alpha_1)r_{vz} &>& (1+\alpha_1) r_{zy} +
    (\alpha_1+\alpha_2) r_{vx}. \label{eqn:support-6}
  \end{eqnarray}
  Recall that $(v,z)$ was not a relaxed blocking pair just before
  $O_k$, and to make it a relaxed blocking pair for $O_j$, $S$ must
  change between $O_k$ and $O_j$. Let $O_l$ be the last deviation
  which changed $S$ to $\{x,z\}$. Without loss of generality, we can
  assume that $O_l$ adds the edge $(v,x)$.  Now we have two cases:
  \begin{itemize}
  \item	$(v,z)$ was a relaxed blocking pair at the time $O_{l}$ is selected:
	in this case $(v,x)$ could not have been the best relaxed
	blocking pair for $O_l$ because
	inequality~\eqref{eqn:support-5} tells us $r_{vz} > r_{vx}$.
  \item	$(v,z)$ was not a relaxed blocking pair at the time $O_{l}$
	is selected: Suppose $v$ was matched with $w$ before $O_l$.
	As $(v,z)$ was not a relaxed blocking pair just before $O_l$
	we have
	\begin{eqnarray}
	\text{Either   }
  	(1+\alpha_1)r_{vz} &\leq& (1+\alpha_1) r_{vw} +
		(\alpha_1+\alpha_2) r_{zy} \label{eqn:support-501}\\
	\text{OR   }
  	(1+\alpha_1)r_{vz} &\leq& (1+\alpha_1) r_{zy} +
		(\alpha_1+\alpha_2) r_{vw} \label{eqn:support-502}
	\end{eqnarray}
	(If $v$ was not matched just before $O_l$ then substitute
	$r_{vw}=0$ to obtain appropriate condition.) Assume that
	it is inequality~\eqref{eqn:support-501} that holds.
	Then, because $O_l$ removes edge $(v,w)$ and adds edge $(v,x)$,
	we have $r_{vx} > r_{vw}$ as
	Lemma~\ref{lem:increase-edge-reward} holds for relaxed blocking
	pairs. Thus, the following must be true:
	\begin{eqnarray}
  	(1+\alpha_1)r_{vz} &\leq& (1+\alpha_1) r_{vx} +
		(\alpha_1+\alpha_2) r_{zy} \label{eqn:support-503}
	\end{eqnarray}
    	This contradicts inequality~\eqref{eqn:support-5}, and thus
   	$(v,z)$ cannot be a relaxed blocking pair at the time $O_j$ is
	selected. The same conclusion can be reached if we assume
	inequality~\eqref{eqn:support-502} holds true.
  \end{itemize}
  Either way we arrive at a contradiction, thus showing that if $O_j$
  is a relaxed biswivel then for all $O_k$ with $k<j$, we have
  $r_{e(k)} < r_{e(j)}$.

  Now the only remaining piece is to prove $e(k)\neq e(j)$ if $O_k$ is
  a relaxed biswivel.  All we need to notice that if $e(k)=e(j)=(v,z)$
  then $S$ has to change between $O_k$ and $O_j$. Now we use exactly
  the reasoning from the previous paragraph to arrive at a
  contradiction, thus proving that $e(k)\neq e(j)$.
\qed \end{proof}

%----------------------------------------------------------------------
\subsubsection{Convergence of \BRBP}
\label{subsubsec:convergence-bbp-equal-sharing}

For the case where $\alpha_1=\alpha_2$, the conditions for a blocking
pair are identical to the conditions for a relaxed blocking
pair. Hence, our algorithm corresponds to letting the best blocking
pair deviate at each step. As a special case, for
$\vec{\alpha}=\mathbf{0}$ and correlated stable matching, this
algorithm is known to provide a stable matching in polynomial
time~\cite{Ackermann11}. For friendship utilities, however, (quick)
convergence was previously unknown. Here we will show that even with
the addition of friendship, \BRBP\ (and thus \BBP\ for the case when
$\alpha_1=\alpha_2$) terminates and produces a stable
matching. Moreover, it does this in polynomial time.

% In \cite{C00}, the authors proved that Roth-Vande Vate process
% of allowing random blocking pairs to deviate as described in
% \cite{RV90} converges to a stable matching. Here we will show
% that even with the addition of friendship, the best relaxed blocking
% pair algorithm terminates and produces a stable matching. Moreover,
% it does this in polynomial time.

Note that if instead of the best we pick some arbitrary blocking pair,
then there exists an instance in which, starting from the empty
matching, a sequence of blocking pairs of length $2^{\Omega(n)}$
exists until reaching a stable matching, even without friendship. This
is directly implied by recent results in correlated stable
matching~\cite{HoeferICALP11}.

A trivial adjustment of the gadget in~\cite{HoeferICALP11} allows us
to construct the exponential sequence even when starting from the
social optimum. We scale the reward of each (original) edge $i \in
\{1,\ldots,m\}$ in the gadget from $i$ to $1+i\cdot\epsilon$, for some
tiny $\epsilon > 0$. This preserves all incentives and the structure
of all blocking pairs. Then, we add an auxiliary neighbor for each
(original) player and connect it via an auxiliary edge of reward
1. The social optimum is obviously given by matching each original
player with his auxiliary neighbor. However, the exponential sequence
of blocking pairs still exists, as auxiliary edges are not rewarding
enough to influence blocking pairs among original players. Due to the
fact that such exponential-length sequences exist, it is perhaps
surprising that our algorithm indeed finds a stable matching and it
terminates in polynomial time.

\begin{theorem} \label{thm:convergence-bbp}
  \BRBP\ outputs a stable matching after $O(m^2)$ iterations, where
  $m$ is the number of edges in the graph.
\end{theorem}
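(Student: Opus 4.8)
The plan is to bound the total number of iterations by counting phases and counting swivels within each phase. By Lemma~\ref{lem:subsequence}, each relaxed biswivel inserts an edge $e(j)$ whose reward is strictly less than the reward of every edge inserted by any earlier deviation; in particular, no edge is ever inserted twice by a relaxed biswivel, so there are at most $m$ relaxed biswivels, hence at most $m$ phases (a phase being a maximal run of deviations starting with a relaxed biswivel and continuing through the swivels that follow, up to but not including the next relaxed biswivel). It therefore suffices to show that each phase contains only $O(m)$ swivels, which gives the $O(m^2)$ bound.

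The heart of the argument is to analyze what happens inside a single phase. A phase begins with a relaxed biswivel that inserts some edge of reward $r$; I claim that throughout the rest of that phase, every swivel inserts an edge of reward strictly greater than the reward of the edge it removes (this is just Lemma~\ref{lem:increase-edge-reward}/Inequality~\eqref{eqn:swivel}, which applies to swivels regardless of $\vec\alpha$), and moreover that the rewards of edges inserted by swivels within a phase are bounded above by $r$ — this is exactly what Lemma~\ref{lem:subsequence} gives us once we observe that the next relaxed biswivel would have been available had any larger-reward blocking pair existed. The key structural observation is that a swivel strictly increases the weight of the matching $M$: a swivel matches two nodes at least one of whom was unmatched, removing at most one edge of smaller reward and adding one of larger reward, so $w(M)$ strictly increases. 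Since during a phase no relaxed biswivel occurs, the only deviations are swivels, and $w(M)$ is strictly monotone increasing through the phase while being bounded (e.g.\ by the social optimum $w(M^*)$, or more tightly using the per-phase reward ceiling). I would turn this monotonicity into a polynomial count by noting that each swivel in the phase matches a previously-unmatched node and that the number of matched nodes can only change by a bounded amount, or alternatively by a direct potential/edge-counting argument: each swivel in a phase either increases the number of matched vertices, or the inserted edge is distinct from all previously inserted edges in that phase; either way at most $O(m)$ swivels fit in a phase.

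I expect the main obstacle to be making the intra-phase swivel count rigorous. The subtlety flagged in the text is that within a phase the sequence $r_{e(i)}$ for successive swivels can both increase and decrease, and the same edge can re-enter $M$ several times, so a naive "rewards are monotone" argument fails. The clean way around this is to lean on the fact that a swivel strictly increases $w(M)$ while a relaxed biswivel is what can decrease it, so within a phase $w(M)$ is strictly increasing; combined with Lemma~\ref{lem:subsequence}'s ceiling on inserted rewards, the number of distinct matching weights — and hence swivels — in a phase is polynomially bounded. One must be careful that a swivel removes \emph{at most} one edge (the removed partner edge of the single matched endpoint) and adds exactly one, and that by Inequality~\eqref{eqn:swivel} the added edge strictly dominates the removed one; I would write this step out carefully since it is where the "swivel = net gain" claim is actually earned. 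Once the $O(m)$-per-phase and $O(m)$-phases bounds are both in hand, multiplying gives $O(m^2)$ iterations, and termination at a matching with no relaxed blocking pair — which is a stable matching, as already noted after the definition of relaxed blocking pair — completes the proof.
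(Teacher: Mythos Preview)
Your overall strategy—bounding the number of phases by $m$ via Lemma~\ref{lem:subsequence} and then bounding swivels within each phase by $O(m)$—is natural, but the key step is not established. None of the three arguments you offer works. (i) Strict monotonicity of $w(M)$ within a phase is correct but gives no polynomial count, since $w(M)$ ranges over sums of real edge-rewards; ``the number of distinct matching weights'' is not polynomially bounded by anything you have shown. (ii) The observation that every swivel matches a previously-unmatched node is true, but a replacement swivel simultaneously \emph{unmatches} a node, so the matched-vertex count is unchanged; this only bounds the add-type swivels. (iii) The claim that a replacement swivel always inserts an edge not previously inserted in the phase is asserted but never argued—and the paper itself remarks, right after Lemma~\ref{lem:subsequence}, that within a phase ``the same edge may be added to the matching multiple times.'' You would need to show this can only happen via add-type swivels, and you give no reason why.

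The paper avoids the per-phase question entirely with a \emph{global} potential argument. Let $\Psi(M)$ be the sum, over $e\in M$, of the rank of $r_e$ among all $m$ edge rewards; then $0\le \Psi(M)\le m\cdot|M|\le m^2$. Every swivel increases $\Psi$ by at least $1$ (it either adds an edge or replaces one by a strictly higher-ranked one), while a relaxed biswivel net-deletes one edge and hence decreases $\Psi$ by at most $m$. Since Lemma~\ref{lem:subsequence} bounds the number of relaxed biswivels by $m$, the total decrease over the whole run is at most $m^2$, so the total increase—and therefore the number of swivels—is at most $m^2+m^2=O(m^2)$. This is what the paper means by ``without any biswivels \ldots\ at most $m^2$'' and ``each relaxed biswivel can allow at most $m$ extra swivels, since it deletes one edge.'' To repair your argument you should either switch to this global accounting, or supply a genuine $O(m)$-per-phase bound—which, given that edges can repeat within a phase, is at best delicate and may not even hold.
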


\begin{proof}
  Consider the three possible changes that can occur to the matching
  $M$ during each iteration: a swivel could add a new edge, or it
  could delete an edge and add an edge with strictly higher $r_e$
  value. A relaxed biswivel deletes two edges, and adds an edge with
  higher $r_e$ value than either. Thus, without any biswivels taking
  place, the total number of consecutive swivels is at most $m^2$,
  since no edges are deleted by swivels. Each relaxed biswivel can
  allow at most $m$ extra swivels to occur, since it deletes one
  edge. As there are at most $m$ relaxed biswivel deviations possible
  by Lemma~\ref{lem:subsequence}, the algorithm terminates after at
  most $m^2+m^2$ deviations. Since there are no more relaxed blocking
  pairs for the algorithm to continue, and since a blocking pair is
  also a relaxed blocking pair, then the final matching produced by
  the algorithm is a stable matching.
\qed \end{proof}

%\begin{proof}
%The first deviation $O_1$ is a relaxed biswivel by
%Lemma~\ref{lem:first-biswivel}. We claim that between any two
%consecutive relaxed biswivels, at most $O(m^2)$ consecutive swivel
%deviations can take place. This is true because a swivel
%deviation either inserts a new edge into the matching, or
%replaces an edge in the matching with another edge of higher $r_e$
%value. Swivel deviations never remove an edge, and so the total number
%of consecutive swivels is at most $O(m^2)$. Given this, as there
%are at most $m$ relaxed biswivel deviations possible by
%Lemma~\ref{lem:subsequence}, the algorithm terminates after
%$O(m^3)$ deviations. Since there are no more relaxed blocking
%pairs for the algorithm to continue, and since a blocking pair is
%also a relaxed blocking pair, then the final matching produced by
%the algorithm is a stable matching.
%\qed \end{proof}

%\textit{Discussion}:
As we can have only a polynomial number of consecutive swivel
deviations between each relaxed biswivel, we know that every phase
(defined as a maximal subsequence of consecutive swivels) lasts only a
polynomial amount of time, and there are only $O(m)$ phases by
Lemma~\ref{lem:subsequence}. Moreover, in each phase, the value of the
matching only increases, since swivels only remove an edge if they add
a better one. Below, we use the fact that only relaxed biswivel
operations reduce the cost of the matching to bound the cost of the
stable matching this algorithm produces.

%----------------------------------------------------------------------
\subsubsection{Upper Bound on Price of Stability}
\label{subsubsec:upper-bound-pos}

Before proceeding to prove the bound, we will introduce some notation
and prove some useful lemmas.

We define a sequence of mappings from $M^*$ to $E(G)$. Define
$h_0:M^*\rightarrow E(G)$ as $h_0(e)=e$. Depending on $O_i$, we will
define $h_i$ as follows:
%\begin{itemize}
%  \item If $O_i$ is a biswivel which excludes
%	$h_{i-1}(e_j)$ and $h_{i-1}(e_k)$ from $M$
%	and includes $e_l$ in $M$ then make $h_i(e_j)=e_l$
%	and $h_i(e_k)=e_l$ keeping $h_i(e)=h_{i-1}(e)$
%	other $e\in M^*$.
%  \item Suppose $O_i$ is a swivel such that one of
%	the nodes in the blocking pair was matched
%	before $O_i$. Such a swivel excludes an edge, say
%	$e_j$ from $M$ and includes another edge, say
%	$e_k$ in $M$. Then make $h_i(e_j)=e_k$ keeping
%	$h_i(e)=h_{i-1}(e)$ for other $e\in M^*$.
%  \item	If $O_i$ is a swivel such that none of the nodes
%	in the blocking pair were matched before $O_i$
%	then make $h_i(e)=h_{i-1}(e)$ for all $e\in M^*$.
%\end{itemize}
Suppose $O_i$ is a deviation that removes edge $h_{i-1}(e_j)$ from
$M$. If $O_i$ inserts edge $e_l$ in $M$ then set $h_i(e_j) = e_l$. For
all other $e_k\in M^*$, keep $h_{i}(e_k)$ same as $h_{i-1}(e_k)$. Let
us note that a deviation $O_i$ may not remove any edges from
$\{h_{i-1}(e_j): e_j\in M^*\}$. This can happen because during the
course of the algorithm, two unmatched nodes can get matched, say to
insert $e_p$ into $M$. No edges in $M^*$ get mapped to $e_p$. If this
edge is removed from $M$ by a later deviation, the mapping may not
change, since no edge is mapped to $e_p$. To summarize, $h_i$ may be
the same as $h_{i-1}$, or may differ from $h_{i-1}$ in one location
(in case of a swivel), or in two locations (in case of a relaxed
biswivel).
%Call an edge inserted in such a way by $e_p$.
%Now $e_p$ can participate in a biswivel $O_k$ with one of
%the edges from $\{h_{k-1}(e_j):e_j\in M^*\}$, thus in this
%case mapping of only one edge in $M^*$ changes. If $e_p$
%undergoes a swivel $O_k$ then the mapping does not change. In
%other words, a deviation $O_k$ makes the mapping change only when
%it involves removal of one of the edges in
%$\{h_{k-1}(e_j):e_j\in M^*\}$.
Denote the resulting mapping when our algorithm terminates by $h_M$.

Coupling Lemma~\ref{lem:increase-edge-reward} with the definition of
mappings $h_i$, we immediately have the following result:
\begin{lemma} \label{lem:increasing-sequence}
  $\{r_{h_i(e)}\}_{i\geq 0}$ is a nondecreasing sequence and
  $r_{h_{i+1}(e)} > r_{h_i(e)}$ whenever $h_{i+1}(e)\neq h_i(e)$.
\end{lemma}

The following lemma will be instrumental in proving the price of
stability bound.

\begin{lemma} \label{lem:final-mapping}
  If $h_M(e_i)=h_M(e_j)$ with $e_i\neq e_j$ then
  \begin{enumerate}
  \item	There must exist a relaxed biswivel $O_k$ such that
	$h_{k-1}(e_i)\neq h_{k-1}(e_j)$ but $O_k$
	makes $h_k(e_i)=h_k(e_j)$. Furthermore,
	for all $p\geq k$ we have $h_p(e_i)=h_p(e_j)$.
  \item	There does not exist another $e_l\in M^*$ such that
	$h_M(e_l)=h_M(e_i)=h_M(e_j)$.
  \item \label{item:decrease-biswivel}
	$r_{e_i} + r_{e_j} <
	\frac{2+2\alpha_1}{1+2\alpha_1+\alpha_2}\times r_{h_M(e_i)}$
  \end{enumerate}
\end{lemma}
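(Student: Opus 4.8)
The plan is to follow, for each edge $e\in M^*$, the trajectory $h_0(e),h_1(e),\dots,h_M(e)$ of images it receives, and to play off two monotonicities against each other: along any single trajectory the reward $r_{h_i(e)}$ is nondecreasing and strictly increases whenever $h$ changes (Lemma~\ref{lem:increasing-sequence}), whereas the rewards of the edges inserted by successive relaxed biswivels are nonincreasing over the whole run (Lemma~\ref{lem:subsequence}). I will also use the elementary invariant, proved by induction on $i$ directly from the definition of $h_i$, that $h_i(e)$ always lies in the matching $M$ as it stands after step $O_i$: every edge deleted from the image of $h_{i-1}$ is immediately re-mapped to the just-inserted edge, and every image that is not deleted survives in $M$.

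\emph{Part 1.} Since $h_0(e_i)=e_i\neq e_j=h_0(e_j)$ while $h_M(e_i)=h_M(e_j)$, there is a first step $O_k$ with $h_{k-1}(e_i)\neq h_{k-1}(e_j)$ and $h_k(e_i)=h_k(e_j)$. A swivel deletes at most one edge, hence re-maps at most one $e\in M^*$, and the edge it inserts is not present in $M$ beforehand, so by the invariant it cannot coincide with any surviving $h_{k-1}(e)$; thus a swivel can never make two distinct trajectories agree, and $O_k$ must be a relaxed biswivel. The same reasoning shows the trajectories of $e_i$ and $e_j$ agree forever once they agree: every later deviation either leaves their common image untouched, or deletes it and (by the definition of $h$) re-maps both $e_i$ and $e_j$ to the same new edge.

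\emph{Part 2.} I claim no edge ever has three preimages. Preimage counts can only grow at a relaxed biswivel, so suppose $O_k$ is the first relaxed biswivel after which some edge has $\ge 3$ preimages; then $O_k$ deletes two edges whose (disjoint) preimage sets sum to $\ge 3$, so one of them, say $g$, already has exactly two preimages $\{e_i,e_j\}$. By Part 1 this pair was merged at some earlier relaxed biswivel $O_{k'}$ whose inserted edge $e(k')$ is the common image at time $k'$, and by Lemma~\ref{lem:increasing-sequence} the reward of that common image only rises, so $r_g\ge r_{e(k')}$. But $O_k$ deletes $g$ and inserts $e(k)$, and since Lemma~\ref{lem:increase-edge-reward} holds for relaxed blocking pairs, $r_{e(k)}>r_g\ge r_{e(k')}$ --- impossible, since Lemma~\ref{lem:subsequence} forces $r_{e(k')}\ge r_{e(k)}$ for the relaxed biswivel $O_{k'}$ preceding $O_k$. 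Hence $h_M(e_i)=h_M(e_j)$ excludes a third $e_l$ with the same image.

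\emph{Part 3, and the main obstacle.} Now apply the relaxed-biswivel inequalities~\eqref{eqn:relaxed-biswivel-1}--\eqref{eqn:relaxed-biswivel-2} at the merging step $O_k$ of Part 1: there $O_k$ deletes the two distinct edges $h_{k-1}(e_i)$ and $h_{k-1}(e_j)$ and inserts $e(k)=h_k(e_i)$, so these play the roles of $r_{uw}$ and $r_{vz}$. Adding the two inequalities (their sum is symmetric in $r_{uw},r_{vz}$, so the labeling is irrelevant) gives $2(1+\alpha_1)\,r_{e(k)} > (1+2\alpha_1+\alpha_2)\bigl(r_{h_{k-1}(e_i)}+r_{h_{k-1}(e_j)}\bigr)$. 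Since $r_{e_i}\le r_{h_{k-1}(e_i)}$, $r_{e_j}\le r_{h_{k-1}(e_j)}$ and $r_{e(k)}=r_{h_k(e_i)}\le r_{h_M(e_i)}$ by Lemma~\ref{lem:increasing-sequence}, rearranging yields $r_{e_i}+r_{e_j} < \frac{2+2\alpha_1}{1+2\alpha_1+\alpha_2}\,r_{h_M(e_i)}$, as claimed. The genuinely delicate part is Part 2 together with the ``stays merged'' half of Part 1: it is exactly the clash between trajectory rewards only going up and relaxed-biswivel rewards only going down --- a trajectory born at a relaxed biswivel can never be destroyed by a later one --- that keeps the merging pattern at worst two-to-one; once that is in hand, Part 3 is a one-line computation.
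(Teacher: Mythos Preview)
Your proof is correct and follows essentially the same approach as the paper: Part~1 identifies the merging step as a relaxed biswivel, Part~2 derives a contradiction from the clash between Lemma~\ref{lem:increasing-sequence} (trajectory rewards rise) and Lemma~\ref{lem:subsequence} (biswivel rewards fall), and Part~3 sums the relaxed-biswivel inequalities at the merge step. Your Part~2 is packaged slightly differently---you argue no edge ever acquires three preimages, whereas the paper picks two merge events along $e_i$'s trajectory directly---but the underlying contradiction is identical; one minor imprecision is the clause ``re-maps at most one $e\in M^*$'' (a swivel could re-map several already-merged preimages), though your subsequent invariant argument correctly establishes that a swivel never merges distinct trajectories.
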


\begin{proof}
  To prove the first part, say $O_l$ was the first deviation such that
  $h_{l-1}(e_i)\neq h_{l-1}(e_j)$ and $h_l(e_i)=h_l(e_j)$. It cannot
  happen because of a swivel deviation because a swivel can make
  $h_l(e)\neq h_{l-1}(e)$ for at most for one $e\in M^*$. Thus $O_l$
  must be a relaxed biswivel. Set $k=l$ and it is easy to see that for
  $p \geq k$ we have $h_p(e_i)=h_p(e_j)$.  Hence the first part is
  proven.

  To prove the second part, suppose there exists an $e_l$ with
  $e_l\neq e_i\neq e_j$ such that $h_M(e_l)=h_M(e_i)=h_M(e_j)$. From
  the first part, there must exist a relaxed biswivel $O_k$
  s.t. $h_{k-1}(e_i)\neq h_{k-1}(e_l)$ but $h_k(e_i)=h_k(e_l)$.
  Similarly there must exist a relaxed biswivel $O_p$ s.t.
  $h_{p-1}(e_i)\neq h_{p-1}(e_j)$ but $h_p(e_i)=h_p(e_j)$.  Without
  loss of generality say $p>k$. Using Lemma~\ref{lem:subsequence} we
  get $r_{e(k)}\geq r_{e(p)}$.  But from
  Lemma~\ref{lem:increasing-sequence}, we have $r_{e(k)}<r_{e(p)}$,
  since $e(p)=h_p(e_i) > h_k(e_i)=e(k)$.  Hence we have a
  contradiction here, thus proving that there does not exist another
  $e_l\in M^*$, with $h_M(e_l)=h_M(e_i)=h_M(e_j)$.

  To prove the third part, consider a relaxed biswivel $O_k$ such that
  $h_{k-1}(e_i)\neq h_{k-1}(e_j)$ and $h_k(e_i)=h_k(e_j)$. Substitute
  $r_{uv} = r_{h_k(e_i)}$, $r_{uw} = r_{h_{k-1}(e_i)}$ and
  $r_{vz}=r_{h_{k-1}(e_j)}$ in inequalities~\eqref{eqn:biswivel-1}
  and~\eqref{eqn:biswivel-2}. Adding these inequalities and
  simplifying, we get
  \begin{eqnarray} \label{eqn:bound-intermediate} r_{h_{k-1}(e_i)} +
    r_{h_{k-1}(e_j)} < \frac{2+2\alpha_1}{1+2\alpha_1+\alpha_2} \times
    r_{h_k(e_i)}
  \end{eqnarray}
  From Lemma~\ref{lem:increasing-sequence}, we have that
  $\{r_{h_i(e)}\}_{i\geq 0}$ as a nondecreasing sequence. Using this
  in~\eqref{eqn:bound-intermediate} we get
  \begin{eqnarray} \label{eqn:decrease-biswivel}
    r_{e_i} + r_{e_j}<
    \frac{2+2\alpha_1}{1+2\alpha_1+\alpha_2}\times r_{h_M(e_i)}
  \end{eqnarray}
\qed \end{proof}

Using Lemma~\ref{lem:final-mapping}, we can partition edges of $M^*$
into two sets as follows: Let $B$ denote the set of edges $e_i\in M^*$
such that $h_M(e_i)=h_M(e_j)$ for some $e_j\in M^*$ and let $A$ denote
the remaining edges in $M^*$. We can further partition set $B$ into
two sets $P$ and $Q$ as follows: choose a pair $e_i$ and $e_j$ in $B$
such that $h_M(e_i)=h_M(e_j)$. Denote $e_j$ by $\mu(e_i)$. Put $e_i$
in $P$ and $\mu(e_i)$ in $Q$.  Notice that value of the matching $M$
that \BRBP\ gives as output is at least $\sum_{e\in A}r_{h_M(e)} +
\sum_{e\in P}r_{h_M(e)}$. (The possible additional edges in $M$ are
produced because of swivels which match two previously unmatched nodes
with each other.)

We are now in position to prove the main theorem of this section:

\begin{theorem}\label{thm:pos-friendship-equal-sharing}
  The price of stability in stable matching games with friendship
  utilities is at most $\frac{2+2\alpha_1}{1+2\alpha_1+\alpha_2}$, and
  this bound is tight.
\end{theorem}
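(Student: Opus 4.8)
The plan is to establish the two halves of the statement separately: the upper bound $\mathrm{PoS} \le \frac{2+2\alpha_1}{1+2\alpha_1+\alpha_2}$ via the \BRBP\ algorithm, and then a matching lower-bound construction. For the upper bound, I would use all the machinery already assembled. The matching $M$ produced by \BRBP\ is stable (Theorem~\ref{thm:convergence-bbp}), so it suffices to bound the social welfare of $M^*$ against the welfare of $M$ via the final mapping $h_M$. Recall the partition of $E(M^*)$ into $A$ (edges $e$ with $h_M(e)$ not shared) and $B = P \cup Q$ (edges sharing their $h_M$-image in pairs $e_i, \mu(e_i)$). First I would observe that the welfare of $M$ is at least $\sum_{e \in A} r_{h_M(e)} + \sum_{e \in P} r_{h_M(e)}$, as noted right before the theorem. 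Then I would bound $\sum_{e \in E(M^*)} r_e$ term by term: for $e \in A$, Lemma~\ref{lem:increasing-sequence} gives $r_e = r_{h_0(e)} \le r_{h_M(e)}$, so these edges contribute at most a factor $1$; for each pair $(e_i, \mu(e_i))$ with $\mu(e_i) \in Q$, part~\ref{item:decrease-biswivel} of Lemma~\ref{lem:final-mapping} gives $r_{e_i} + r_{\mu(e_i)} < \frac{2+2\alpha_1}{1+2\alpha_1+\alpha_2}\, r_{h_M(e_i)}$. Summing over $A$ and over the pairs in $B$, and comparing against $\sum_{e\in A} r_{h_M(e)} + \sum_{e\in P} r_{h_M(e)}$, yields $\sum_{e \in E(M^*)} r_e \le \frac{2+2\alpha_1}{1+2\alpha_1+\alpha_2}\cdot (\text{welfare of } M)$, since every $h_M$-image on the right is counted exactly once and each coefficient on the left is dominated by $\frac{2+2\alpha_1}{1+2\alpha_1+\alpha_2} \ge 1$. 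This is the price-of-stability bound, and since \BRBP\ terminates in $O(m^2)$ iterations, it also certifies that a stable matching of this quality is computable in polynomial time.

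For the lower bound, I would exhibit a family of instances whose every stable matching has welfare at most a $\frac{1+2\alpha_1+\alpha_2}{2+2\alpha_1}$ fraction of the optimum. The natural candidate is a short path-like gadget: take a path (or a small tree) in which the social optimum $M^*$ picks two ``outer'' edges of reward close to $1$ each, while the unique stable matching is forced to pick a single ``middle'' edge whose reward must be large enough that the biswivel inequalities~\eqref{eqn:biswivel-1}--\eqref{eqn:biswivel-2} are satisfied (so that the middle edge beats the two outer edges in perceived utility), yet as small as possible so the welfare ratio is bad. Concretely, on a $3$-edge path $w - u - v - z$ with $r_{uw} = r_{vz} = 1$ and $r_{uv}$ chosen just above the biswivel threshold, the threshold from~\eqref{eqn:biswivel-1}--\eqref{eqn:biswivel-2} (with $\alpha_{uz} = \alpha_{vw} = \alpha_2$, since $u,z$ and $v,w$ are at distance $2$) is $(1+\alpha_1) r_{uv} > (1+\alpha_1) + (\alpha_1 + \alpha_2)$, i.e. $r_{uv} > \frac{1+2\alpha_1+\alpha_2}{1+\alpha_1}$; the optimum is $r_{uw} + r_{vz} = 2$, and matching $(u,v)$ alone gives welfare $2 r_{uv} \to 2\cdot\frac{1+2\alpha_1+\alpha_2}{1+\alpha_1}$, for a ratio approaching $\frac{2+2\alpha_1}{1+2\alpha_1+\alpha_2}$. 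I would need to check that in such an instance no matching other than the ``bad'' one is stable (the all-outer matching $\{uw, vz\}$ admits $(u,v)$ as a blocking pair precisely when $r_{uv}$ exceeds the threshold) and, if a single path does not literally force uniqueness, pad the construction (e.g., attach extra low-reward leaves, or chain several copies) so that every stable matching inherits the bad ratio in the limit $r_{uv} \downarrow \frac{1+2\alpha_1+\alpha_2}{1+\alpha_1}$.

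The main obstacle I anticipate is the bookkeeping in the upper-bound summation: one must be careful that each edge $h_M(e)$ appearing on the ``credit'' side ($\sum_A r_{h_M(e)} + \sum_P r_{h_M(e)}$) is charged by the correct edges of $M^*$ and no image is double-counted — this is exactly what parts~1 and~2 of Lemma~\ref{lem:final-mapping} (at most two $M^*$-edges share an image, and once merged they stay merged) are there to guarantee, so the argument is really about invoking those facts cleanly rather than any new idea. The secondary subtlety is in the lower bound: verifying that the tight instance has \emph{no} stable matching of better quality (not merely that one bad stable matching exists), which may require a slightly larger gadget than the bare $3$-edge path and a short case analysis over the possible matchings of that gadget. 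Neither of these is conceptually deep, but the lower-bound instance must be engineered so that the welfare ratio tends to the claimed bound and $\alpha_3, \alpha_4, \ldots$ genuinely play no role, consistent with the discussion preceding the theorem.
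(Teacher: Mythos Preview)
Your proposal is correct and follows essentially the same route as the paper: the upper bound is exactly the paper's argument via the $A$/$P$/$Q$ partition and Lemmas~\ref{lem:increasing-sequence} and~\ref{lem:final-mapping}, and the lower bound is the same $3$-edge path gadget (Fig.~\ref{fig:biswivel}) with the outer edges set to $1$ and the middle edge set just above the biswivel threshold. Two minor remarks: your hedging about padding is unnecessary, since on the bare path $w\!-\!u\!-\!v\!-\!z$ the matching $\{(u,v)\}$ is already the \emph{unique} stable matching once $r_{uv}$ exceeds the threshold (the paper asserts this directly); and your tightness instance with general $\alpha_2$ is in fact a bit stronger than the paper's, which only carries out the construction for $\alpha_2=0$. (Also, in your welfare computation the stable matching has weight $r_{uv}$, not $2r_{uv}$, under the edge-weight convention you used for the optimum---the final ratio you state is nonetheless correct.)
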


\begin{proof}
  The value of $M^*$ is given by
  \begin{eqnarray}
    w(M^*) &=& \sum\limits_{e\in A} r_e
	     + \sum\limits_{e\in P} r_e
	     + \sum\limits_{e\in Q} r_e \nonumber \\
           &=& \sum\limits_{e\in A} r_e
	     + \sum\limits_{e\in P} (r_e + r_{\mu(e)}) \nonumber
  \end{eqnarray}
  Using Lemma~\ref{lem:final-mapping}, for $e\in P$ we have
  $r_e+r_{\mu(e)}<\frac{2+2\alpha_1}{1+2\alpha_1+\alpha_2} \times
  r_{h_M(e)}$.  Using Lemma~\ref{lem:increasing-sequence}, for $e\in
  A$ we have $r_e \leq r_{h_M(e)}$. Thus we get
  \begin{eqnarray*}
 w(M^*) &\leq&  \sum\limits_{e\in A} r_{h_M(e)}
	 + 	\sum\limits_{e\in P}
		\frac{2+2\alpha_1}{1+2\alpha_1+\alpha_2}
	\times  r_{h_M(e)} \nonumber \\
	&\leq&  \frac{2+2\alpha_1}{1+2\alpha_1+\alpha_2}
		\left(
		\sum\limits_{e\in A} r_{h_M(e)}
	 + 	\sum\limits_{e\in P} r_{h_M(e)}
		\right)
  \end{eqnarray*}
  Note that this inequality may not be strict since $A$ may be
  empty. This could happen if each edge in $M^*$ gets removed because
  of a relaxed biswivel as the algorithm proceeds (though it may be
  possible that it is inserted later). We also have $w(M) \geq
  \sum_{e\in A} r_{h_M(e)} + \sum_{e\in P} r_{h_M(e)}$ for the final
  matching $M$ that the algorithm gives. Using this,
  \begin{eqnarray*}
    w(M^*) \leq \frac{2+2\alpha_1}{1+2\alpha_1+\alpha_2} w(M),
  \end{eqnarray*}
  which proves the bound on the price of stability, since $M$ is a
  stable matching.
  % As this holds for any stable matching $M$, we have
  % \begin{eqnarray}
  %   \text{PoS }\leq \frac{w(M^*)}{w(M)} \leq
  %   \frac{2+2\alpha_1}{1+2\alpha_1+\alpha_2}
  % \end{eqnarray}

  To prove the tightness of the bound, let us make $\alpha_2=0$ and
  assign $r_{uv} = \frac{1+2\alpha_1+\epsilon}{1+\alpha_1}$, $r_{uw} =
  r_{vz} = 1$ in Fig~\ref{fig:biswivel}. Then we have $\{(uv)\}$ as
  the only stable matching but the social optimum is
  $\{(uw),(vz)\}$. Thus we get
  PoS$=\frac{2+2\alpha_1}{1+2\alpha_1+\epsilon}$ which can be taken
  arbitrarily close to $\frac{2+2\alpha_1}{1+2\alpha_1}$.  This gives
  us a tight bound given that we are using $\alpha_2=0$.
\qed \end{proof}

From Theorems~\ref{thm:convergence-bbp}
and~\ref{thm:pos-friendship-equal-sharing}, we immediately get the
following corollary about the behavior of best blocking pair
dynamics. This corollary applies in particular to the model of
altruism when $\alpha_i = \alpha$ for all $i=1,\ldots,diam(G)$.
\begin{corollary}\label{cor:bbp-algorithm}
  If $\alpha_1=\alpha_2$ and we start from the centrally optimum
  matching, \BBP\ converges in $O(m^2)$ time to a stable matching that
  is at most a factor of $\frac{2+2\alpha_1}{1+2\alpha_1+\alpha_2}$
  worse than the optimum.
\end{corollary}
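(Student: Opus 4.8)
The plan is to show that when $\alpha_1=\alpha_2$ the algorithm \BRBP\ is literally identical to \BBP, and then simply quote Theorems~\ref{thm:convergence-bbp} and~\ref{thm:pos-friendship-equal-sharing}. The only thing that needs an argument is that, under $\alpha_1=\alpha_2$, the notion of a \emph{relaxed blocking pair} coincides with the notion of a \emph{blocking pair}.

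First I would dispatch the swivel case: by definition a relaxed blocking pair uses the \emph{unmodified} improving-swivel condition~\eqref{eqn:swivel}, so swivels are the same objects in both settings. For biswivels, recall from Figure~\ref{fig:biswivel} that $u$ is matched to $w$, $v$ is matched to $z$, and the blocking pair $(u,v)$ is an edge of $G$; moreover $v$ and $z$ are adjacent in $G$ since they are matched. Hence $u-v-z$ is a path of length $2$, so $d_G(u,z)\le 2$ and therefore $\alpha_{uz}\ge\alpha_2$; combined with the always-valid bound $\alpha_{uz}\le\alpha_1$, the hypothesis $\alpha_1=\alpha_2$ forces $\alpha_{uz}=\alpha_1=\alpha_2$. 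By the symmetric argument ($v-u-w$ has length $2$) we also get $\alpha_{vw}=\alpha_1=\alpha_2$. Substituting these equalities into the biswivel conditions~\eqref{eqn:biswivel-1}--\eqref{eqn:biswivel-2} turns them into exactly the relaxed-biswivel conditions~\eqref{eqn:relaxed-biswivel-1}--\eqref{eqn:relaxed-biswivel-2}. Thus a pair is a relaxed blocking pair if and only if it is a blocking pair, and since both algorithms at every step pick the (relaxed) blocking pair with maximum edge reward $r_{uv}$, running \BBP\ from the centrally optimal matching $M^*$ produces exactly the same sequence of deviations as \BRBP\ from $M^*$.

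Given this identification, the corollary follows directly from the two earlier results: Theorem~\ref{thm:convergence-bbp} says \BRBP\ (hence \BBP) terminates after $O(m^2)$ iterations at a matching with no relaxed blocking pair, which is a stable matching since every blocking pair is a relaxed blocking pair; and the bound $w(M^*)\le \frac{2+2\alpha_1}{1+2\alpha_1+\alpha_2}\,w(M)$ shown in the proof of Theorem~\ref{thm:pos-friendship-equal-sharing} for the output $M$ of \BRBP\ started at $M^*$ gives that the resulting stable matching is within a factor $\frac{2+2\alpha_1}{1+2\alpha_1+\alpha_2}$ of optimum. The only point that requires care is the distance observation $d_G(u,z)\le 2$ and $d_G(v,w)\le 2$ that collapses $\alpha_{uz}$ and $\alpha_{vw}$ onto $\alpha_1$ when $\alpha_1=\alpha_2$; beyond that there is no real obstacle, as everything else is a verbatim application of the preceding two theorems.
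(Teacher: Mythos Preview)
Your proposal is correct and follows exactly the same approach as the paper: the paper's proof is the one-line observation that when $\alpha_1=\alpha_2$ the blocking-pair and relaxed-blocking-pair conditions coincide, so \BRBP\ is literally \BBP, after which Theorems~\ref{thm:convergence-bbp} and~\ref{thm:pos-friendship-equal-sharing} apply. Your write-up simply spells out the distance argument ($d_G(u,z)\le 2$ forces $\alpha_{uz}=\alpha_1=\alpha_2$, and symmetrically for $\alpha_{vw}$) that makes that coincidence explicit.
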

\begin{proof}
  Immediate since when $\alpha_1=\alpha_2$, the conditions for a
  blocking pair are identical to the conditions for a relaxed blocking
  pair. Hence, \BRBP\ is \BBP.
\qed \end{proof}

%######################################################################

%\input{altruism}
%######################################################################
\section{General Reward Sharing without Friendship}
\label{sec:reward-sharing}
%######################################################################

In the previous section we considered the case where if $(uv)\in M$ then $u$ and $v$
get the \textit{same reward} from edge $(uv)$, namely $r_{uv}$.
Now we will look into
the case where $u$ and $v$ may possibly share the edge reward
$r_{uv}$ if $(uv)\in M$. Let us define $r^x_{xy}$ as the reward
node $x$ gets from edge $(xy)$ if $(xy)\in M$. We assume $r_{xy} = r^x_{xy} + r^y_{xy}$ as $x$ and $y$ share the edge
reward $r_{xy}$. Note that while in Section~\ref{sec:friendship-equal-sharing}
we had both nodes $u$ and $v$ getting reward of $r_{uv}$ from edge
$(uv)$, this is actually equivalent to $u$ and $v$ sharing the edge reward equally. To see this  we can redefine the reward $u$ and $v$ get from edge $(uv)$ as $\frac{r_{uv}}{2}$ if $(uv)\in M$. This scaling of edge reward
does not affect any of the results.

For general (unequal) reward sharing, we will give results about
existence of a stable matching, as well as bounds on prices of anarchy
and stability. In addition to that, we will also focus on the
following specific reward sharing rules:
\begin{itemize}
  \item \label{item:matthew-effect} \textit{Matthew Effect sharing:} In
        sociology, ``Matthew Effect'' is a term coined by Robert
        Merton %\cite{RM68}
        to describe the phenomenon which says that, when doing similar work, the more famous person tends to get more credit than other less-known collaborators. We model such phenomena for our network by associating brand values $\lambda_u$ with each node $u$, and defining the reward that node $u$ gets by
        getting matched with node $v$ as $r^u_{uv} =
        \frac{\lambda_u}{\lambda_u+\lambda_v} \cdot r_{uv}$. Thus nodes
        $u$ and $v$ split the edge reward in the ratio of
        $\lambda_u:\lambda_v$, and a node with high $\lambda_u$ value gets a disproportionate amount of reward.
  \item	\label{item:parasite}
	\textit{Parasite sharing:}
	This effect is opposite to the Matthew effect in the sense
	that by collaborating with a renowned person, a less-known
	person becomes famous, whereas the reputation of the already
	renowned person does not change significantly from such a
	collaboration. We model this situation by defining
	the reward that node $u$ gets by getting matched with node
	$v$ as $r^u_{uv} = \frac{\lambda_v}{\lambda_u+\lambda_v}
	r_{uv}$. Thus nodes $u$ and $v$ split the edge reward
	in the ratio of $\lambda_v:\lambda_u$, in the exactly
	opposite way to the Matthew Effect sharing.
  \item	\label{item:trust}
	\textit{Trust sharing:}
	Often people collaborate based on not only the quality
	of a project but also how much they trust each
	other. We model such a situation by associating a value $\beta_u$ with each node $u$,
    which represents the \textit{trust value} of player $u$, or how pleasant they are to work with.
    Each edge $(u,v)$ also has an inherent quality $h_{uv}$. Then, the reward obtained by node $u$
    from partnering with node $v$ is
	$r^u_{uv} = h_{uv} + \beta_v$.
\end{itemize}

For the sake of analysis, Matthew Effect sharing and Parasite sharing
are the same if we change $\lambda_u$ of Parasite sharing to
$1/\lambda_u$ of Matthew Effect sharing. We will refer to both the
models as Matthew Effect sharing from now on. In the next few
sections, we will give results about the existence of stable
matchings, and give upper bounds on prices of anarchy and stability
for Matthew Effect sharing and Trust sharing, as well as for general
reward sharing. Note that this analysis is for the case when
friendship is absent; we consider the more general case of unequal
sharing with friendship utilities in Section~\ref{sec:friendship-RS}.

%======================================================================
\subsection{Existence of a Stable Matching}
\label{subsec:stable-matching-RS}

Without friendship utilities, our stable matching game reduces to the
stable roommate problem, since reward shares can be arbitrary and thus
induce arbitrary preference lists for each node. It is well known that
a stable matching may not exist in a stable roommate
problem~\cite{Gale62}. However, we will prove in this section that for
Matthew Effect sharing and Trust sharing, a stable matching can always
be found.

%We will first define a term \textit{preference cycle}. We will show
%that if a graph has no preference cycles then a stable matching exists
%and can
%be found in polynomial time. Later we will show that no preference
%cycles exist in Matthew Effect model and Trust model and hence a
%stable matching always exists for these models which can
%be found in polynomial time.

Let us define a \textit{preference cycle} as a cycle $(u_1, u_2,
\cdots, u_k)$ in the graph $G$ such that $r^{u_i}_{u_i u_{i+1}} \geq
r^{u_i}_{u_i u_{i-1}}$ with at least one inequality being strict.
Chung~\cite{Chung00} defines \textit{odd rings} and proves that if a
graph does not contain odd rings, then a stable matching exists.  It
is straightforward to see that absence of preference cycles implies
absence of odd rings. Hence, if a graph has no preference cycles, then
a stable matching must exist. Below we prove the stronger statement
that such a matching can also be found efficiently.

\begin{theorem}\label{thm:ring-cycle}
  A stable matching always exists in stable matching games with
  unequal sharing and no preference cycles. Furthermore, a stable
  matching can be found in $O(|V||E|)$ time.
\end{theorem}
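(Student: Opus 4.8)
The plan is to exploit the structural insight already established: absence of preference cycles guarantees existence of a stable matching (via Chung's odd-rings result), so the real content is the algorithmic claim. I would give a greedy construction that iteratively builds the matching by repeatedly finding a pair of nodes $(u,v)$ that are \emph{mutual best available partners} — i.e., among all currently unmatched neighbors, $v$ maximizes $r^u_{uv'}$ for $u$, and simultaneously $u$ maximizes $r^v_{uv'}$ for $v$ — matching them, removing both from the graph, and recursing on the remaining unmatched nodes. The key lemma to prove first is that whenever there is at least one unmatched node with at least one unmatched neighbor, such a mutual-best pair exists; this is exactly where the no-preference-cycle hypothesis is used. If it did not exist, then starting from any node $u_1$, letting $u_2$ be $u_1$'s best available neighbor, $u_3$ be $u_2$'s best available neighbor (which is not $u_1$, since the pair is not mutually best), and so on, the finiteness of the graph forces this walk to revisit a node, producing a cycle $(u_i, u_{i+1}, \ldots, u_k, u_i)$ in which each node strictly prefers its successor to its predecessor along the walk — a preference cycle, contradiction. (One has to be slightly careful that the closing edge of the cycle also points the right way, but since $u_k$'s best available neighbor is the repeated node, this holds with a strict inequality, giving the required "at least one strict" condition.)

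Next I would argue correctness: the matching $M$ produced has no blocking pair. Suppose $(u,v)$ blocks $M$. By Lemma~\ref{lem:increase-edge-reward} (whose analog for reward-share preferences is immediate), each of $u,v$ strictly prefers $(u,v)$ to its current partner (or $u,v$ is at least as good, with the usual unmatched conventions). Consider the first moment in the execution at which one of $u,v$ — say $u$ — was matched, to some partner $w$. At that moment $v$ was still unmatched and was a neighbor of $u$, so $v$ was available to $u$; since the algorithm matched $u$ to $w$ as a mutual best pair, we have $r^u_{uw} \geq r^u_{uv}$, contradicting that $u$ strictly prefers $v$. The boundary cases (one of $u,v$ unmatched in $M$, i.e.\ having no available neighbor when the algorithm stopped) are handled the same way. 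This establishes stability.

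Finally, the running time. Each round permanently matches and removes two nodes, so there are at most $|V|/2$ rounds. The work per round is dominated by finding a mutual-best pair; naively scanning all edges and, for each node, its best available neighbor costs $O(|E|)$, giving $O(|V||E|)$ overall. I expect the main obstacle — and the part deserving the most care — to be the existence lemma for a mutual-best pair: making the walk argument fully rigorous, in particular ensuring that the cycle extracted is genuinely a preference cycle in the sense defined (the directional inequalities $r^{u_i}_{u_i u_{i+1}} \geq r^{u_i}_{u_i u_{i-1}}$ with one strict) rather than merely a closed walk, and checking that it does not degenerate (length-$2$ "cycles" are exactly mutual-best pairs, so their absence is what we assumed). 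Everything else is routine.
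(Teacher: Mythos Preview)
Your proposal is correct and follows essentially the same approach as the paper: both iteratively extract a mutual-best pair, match it, delete, and recurse, with the no-preference-cycle hypothesis used precisely to guarantee such a pair exists. The paper phrases the existence argument via the functional digraph (each node points to one best neighbor, so a directed cycle exists; a cycle of length $>2$ with any strict inequality would be a preference cycle, so ties force a mutual-best pair inside it), whereas you phrase it as a walk-until-repeat contradiction argument---but these are the same idea. Your correctness argument for stability is actually spelled out more carefully than the paper's, which simply asserts that greedily matched nodes ``will never have incentive to deviate.''
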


\begin{proof}
  In brief, we show below that whenever there exist no preference
  cycles in a graph, we can always find two nodes which prefer getting
  matched to each other over other nodes. We allow them to get matched
  to each other and eliminate such matched nodes from the
  graph. Neither of these two nodes will ever deviate from this
  matching. Applying the same greedy scheme on the reduced graph will
  give us a stable matching. Then we will prove that this algorithm
  produces a stable matching in $O(|V||E|)$ time. Let us now proceed
  to the details.

  Let $T_u$ denote the sets of ``best" neighbors of $u$ as follows:
  \begin{eqnarray}
    T_u = \{v\in N_1(u): r_{uv}^u \geq r_{uw}^u \,\, \forall (uw)\in G\}
  \end{eqnarray}
  Now we construct a directed graph $G_D$ as follows: for all nodes
  $u$, choose a node $v\in T_u$ and draw an edge from $u$ directed to
  $v$. Every node in this graph has one outgoing edge hence this graph
  contains a (directed) cycle. If we find a cycle of length $2$ then
  we have found two nodes which prefer each other the most. If a
  (directed) cycle $(u_1,u_2,\ldots,u_k)$ has length $k>2$, then we
  have $r^{u_i}_{u_i u_{i+1}} \geq r^{u_i}_{u_i u_{i-1}}$.  Now we
  cannot have $r^{u_2}_{u_2u_3} > r^{u_2}_{u_1u_2}$, otherwise in the
  original graph $G$, $(u_1,u_2,\ldots,u_k)$ would have constituted a
  preference cycle. Hence we have $r^{u_2}_{u_1u_2} = r^{u_2}_{u_2
    u_3}$. Thus $u_1$ and $u_3$ both are $u_2$'s most preferred
  nodes. But we also have $u_1$ prefer $u_2$ the most as $G_D$ has an
  edge from $u_1$ to $u_2$. Hence $u_1$ and $u_2$ is the pair of nodes
  that prefer each other the most.

  Therefore we will always be able to find two nodes in $G$ which
  prefer each other the most in their preference lists. Match them to
  each other and they will never have incentive to deviate from this
  matching. Remove these two nodes and repeat the procedure until no
  more nodes can be matched. Because no nodes matched in this process
  will ever deviate, we have a stable matching.

  It takes $O(|E|)$ time to find each matched pair because for each
  edge we check if two nodes prefer each other the most. Since total
  number of nodes to be matched are $O(|V|)$, we find a stable
  matching in $O(|V||E|)$ time.
\qed \end{proof}

\noindent
Now we can prove the following theorem:

\begin{theorem}\label{thm:cycle-MEM-TM}
  No preference cycles exist with Matthew Effect sharing and Trust
  sharing. Hence, a stable matching exists with Matthew Effect sharing
  and Trust sharing and can be found efficiently.
\end{theorem}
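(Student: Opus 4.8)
The plan is to invoke Theorem~\ref{thm:ring-cycle}: it suffices to show that neither Matthew Effect sharing nor Trust sharing admits a preference cycle. The key observation is that in both models, each node's preference order over its incident edges is governed by a single \emph{global} edge valuation $w : E(G) \to \mathbb{R}$ — a function that depends only on the edge (and the fixed parameters $\lambda$, $\beta$, $h$, $r$), not on which endpoint is evaluating it — such that a node $u$ weakly prefers an incident edge $e$ to an incident edge $e'$ if and only if $w(e) \geq w(e')$.

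For Matthew Effect sharing I would take $w(xy) = r_{xy}/(\lambda_x+\lambda_y)$. Then for a node $u$ with incident edges $(u,v)$ and $(u,v')$, dividing $r^u_{uv} = \frac{\lambda_u}{\lambda_u+\lambda_v}\,r_{uv}$ and $r^u_{uv'}$ by the common positive factor $\lambda_u$ shows $r^u_{uv} \geq r^u_{uv'} \iff w(uv) \geq w(uv')$. For Trust sharing I would take $w(xy) = h_{xy} + \beta_x + \beta_y$; since $r^u_{uv} = h_{uv} + \beta_v = w(uv) - \beta_u$ and $\beta_u$ is the same additive constant for every edge incident to $u$, again $r^u_{uv} \geq r^u_{uv'} \iff w(uv) \geq w(uv')$. (Parasite sharing is subsumed by Matthew Effect sharing after the substitution $\lambda_u \mapsto 1/\lambda_u$ noted earlier, so it needs no separate argument.)

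Given such a $w$, suppose toward a contradiction that $(u_1,u_2,\ldots,u_k)$ is a preference cycle, and write $e_i = (u_i,u_{i+1})$ with indices mod $k$. By the previous paragraph, the defining inequality $r^{u_i}_{u_i u_{i+1}} \geq r^{u_i}_{u_i u_{i-1}}$ at node $u_i$ is precisely $w(e_i) \geq w(e_{i-1})$. Chaining these around the cycle gives $w(e_1) \geq w(e_k) \geq w(e_{k-1}) \geq \cdots \geq w(e_2) \geq w(e_1)$, forcing $w(e_1) = \cdots = w(e_k)$; but then every inequality $r^{u_i}_{u_i u_{i+1}} \geq r^{u_i}_{u_i u_{i-1}}$ holds with equality, contradicting the requirement that at least one of them be strict. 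Hence no preference cycle exists under either rule, and Theorem~\ref{thm:ring-cycle} then yields existence of a stable matching computable in $O(|V||E|)$ time.

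I do not expect a real obstacle here: the only thing that needs care is identifying the correct symmetrized edge valuation $w$ in each case — the division by $\lambda_u$ for Matthew Effect sharing, and the cancellation of the additive term $\beta_u$ for Trust sharing — and verifying that it is genuinely endpoint-independent, after which the cycle argument is a one-line chaining. One should also state the mild standing assumption that the brand/trust values are positive so that the division step is legitimate.
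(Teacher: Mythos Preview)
Your proof is correct and follows essentially the same approach as the paper: both arguments cancel the node-specific factor ($\lambda_{u_i}$ for Matthew Effect, $\beta_{u_i}$ for Trust) to reduce the preference-cycle inequalities to a telescoping chain that forces all terms equal, contradicting the required strict inequality. Your presentation packages this cancellation as identifying a global edge valuation $w$ that governs every node's preferences, which is a clean way to phrase it, but the underlying algebra and the appeal to Theorem~\ref{thm:ring-cycle} are the same.
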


\begin{proof}
  Suppose a preference cycle exists in Matthew Effect sharing. Then
  there exists a cycle $(u_1,u_2,\ldots,u_k)$ such that
  \begin{eqnarray}
    \frac{\lambda_{u_i}}{\lambda_{u_i}+\lambda_{u_{i+1}}}r_{u_i u_{i+1}}
    &\geq&
    \frac{\lambda_{u_i}}{\lambda_{u_i}+\lambda_{u_{i-1}}}r_{u_i u_{i-1}}
  \end{eqnarray}
  with at least one inequality being strict.  Multiplying all these
  inequalities and canceling common factors, we reach a contradiction
  that $1>1$. Thus a preference cycle cannot exist in Matthew Effect
  sharing.

  Suppose a preference exists in Trust sharing. Then there exists a
  cycle $(u_1,u_2,\ldots,u_k)$ such that
  \begin{eqnarray}
    h_{u_i u_{i+1}} + \beta_{u_{i+1}}
    &\geq&
    h_{u_i u_{i-1}} + \beta_{u_{i-1}}
  \end{eqnarray}
  with at least one inequality being strict. Adding all these
  inequalities and canceling common factors, we reach a contradiction
  that $0>0$. Thus a preference cycle cannot exists in Trust sharing.

  % Since a preference cycle does not exist in Matthew Effect model
  % and Trust model, odd rings also do not exist in these models.
  % Hence there always exists a stable matching in Matthew Effect
  % model and Trust model using the sufficient condition in
  % \cite{C00}.
  Since preference cycles cannot exist, we only need to apply
  Theorem~\ref{thm:ring-cycle} to obtain the desired result.
\qed \end{proof}

%======================================================================
\subsection{Prices of Anarchy and Stability with General Reward Sharing}
\label{subsec:poa-pos-reward-sharing-without-friendship}

In this section, we will investigate prices of anarchy and stability
with general reward sharing. First we will prove that for general
reward sharing, the price of anarchy is upper bounded by
$1+\max_{(uv)\in E(G)}\frac{r^u_{uv}}{r^v_{uv}}$. This implies a bound
of $1+\max_{(uv)\in E(G)}\frac{\lambda_u}{\lambda_v}$ for Matthew
Effect sharing. We will further prove that for the special case of
Trust sharing, the upper bound on the price of anarchy is $3$.

Let us define $R$ as
\begin{eqnarray}
  R = \max_{(uv)\in G} \cfrac{r^u_{uv}}{r^v_{uv}}
\end{eqnarray}
Note that we will always have $R\geq 1$. We have the following
theorem:
\begin{theorem} \label{thm:poa-RS}
  If a stable matching exists, both prices of anarchy and stability
  in stable matching games with unequal reward sharing (without
  friendship utilities) are at most $R+1$ and the bound is tight.
\end{theorem}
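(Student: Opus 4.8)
The plan is to bound the weight of the social optimum $M^*$ against the weight of an arbitrary stable matching $M$, by charging each edge of $M^*$ to edges of $M$ that share an endpoint with it, and controlling the total charge using the stability conditions. First I would fix a stable matching $M$ and the optimum $M^*$, and consider an edge $e=(u,v)\in M^*$. If $e\in M$ as well, it pays for itself. Otherwise, since $M$ is stable, $(u,v)$ is not a blocking pair, so at least one of $u,v$ does not strictly gain by switching to $e$. Because rewards here are oblivious/unequal, I would use that $u$ is matched in $M$ to some $w$ with $r^u_{uw}\ge r^u_{uv}$, or $v$ is matched in $M$ to some $z$ with $r^v_{vz}\ge r^v_{uv}$ (if a node is unmatched in $M$, Lemma~\ref{lem:increase-edge-reward} applied to swivels rules this out, since an unmatched node would always want to match). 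The key quantitative step is to turn ``node $x$'s share from its $M$-edge is at least its share from $e$'' into a bound on the \emph{total} edge reward $r_e=r^u_{uv}+r^v_{uv}$ in terms of $r^x$-shares on $M$-edges, using $r^x_{xy}/r^{y}_{xy}\le R$, i.e. $r_{uv}\le (1+R)\,r^u_{uv}$ and $r_{uv}\le (1+R)\,r^v_{uv}$.

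Next I would set up the charging scheme carefully: each $e=(u,v)\in M^*\setminus M$ charges its full reward $r_{uv}$ to whichever endpoint ($u$ or $v$) is ``blocked,'' i.e. whose $M$-partner gives it at least the share it would get from $e$. An $M$-edge $(x,y)$ then receives charges only from $M^*$-edges incident to $x$ that are blocked at $x$, and from $M^*$-edges incident to $y$ blocked at $y$ — at most one of each, since $M^*$ is a matching. For the edge incident to $x$ in $M^*$, say $(x,v')$, the charge is $r_{xv'}\le (1+R)\,r^x_{xv'}\le (1+R)\,r^x_{xy}$; similarly the charge through $y$ is at most $(1+R)\,r^y_{xy}$. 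Summing, the total charge onto $(x,y)$ is at most $(1+R)(r^x_{xy}+r^y_{xy})=(1+R)\,r_{xy}$. Finally, edges in $M^*\cap M$ are charged exactly $r_e$. Summing over all $M$-edges gives $w(M^*)\le (1+R)\,w(M)$, which yields both the price of anarchy and (taking $M$ to be any stable matching) the price of stability bound.

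The main obstacle I expect is the bookkeeping around unmatched nodes and around edges of $M^*$ that happen to lie in $M$: I need to confirm that a node unmatched in $M$ can never be the ``blocked'' endpoint (handled by the swivel condition $r_{uv}>r_{uw}$ in Lemma~\ref{lem:increase-edge-reward}, with $r_{uw}=0$ when unmatched, so the only way $(u,v)$ is not blocking is that the \emph{other} endpoint is blocked), and that the double-counting is genuinely at most one $M^*$-edge per endpoint of each $M$-edge. Once the charging is organized so that each $M$-edge absorbs at most $(1+R)$ times its own reward, the inequality $w(M^*)\le(R+1)\,w(M)$ follows immediately.

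For tightness, I would exhibit a small instance: a path $u-v-w$ (or a matched pair plus a blocking edge) where the shares are set so that on one edge $v$'s share equals (just below) its share from the competing edge while the ratio $r^u/r^v$ on an edge is exactly $R$; then the unique stable matching keeps the ``small'' edge while $M^*$ uses the others, giving ratio approaching $R+1$. This is the standard construction and mirrors the $\alpha$-free specialization of the tight example used for Theorem~\ref{thm:poa}.
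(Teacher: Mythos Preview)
Your upper-bound argument is correct and is essentially the paper's own proof: the paper simply invokes Theorem~\ref{thm:poa-friendship-RS} with $\alpha_1=0$, and the argument there is exactly your witness/charging scheme---for each $(u,v)\in M^*\setminus M$ pick a blocked endpoint $u$, use $r_{uv}\le(1+R)r^u_{uv}\le(1+R)w_u$, and sum (the paper phrases the sum per node rather than per $M$-edge, but the inequalities are identical). One small correction on tightness: a three-node path $u$--$v$--$w$ only gives ratio $R$, not $R+1$, since the small edge $(u,v)$ must satisfy $r_{uv}\ge r^v_{uv}(1+1/R)$ while $r_{vw}\le r^v_{uv}(1+R)$; you need the four-node (three-edge) path of Figure~\ref{fig:biswivel}, which is precisely the Theorem~\ref{thm:poa} example you reference at the end and the construction the paper uses.
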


The tightness of this bound implies that as sharing becomes more
unfair, i.e., as $R\rightarrow \infty$, we can find instances where
both prices are unbounded. Thus unequal sharing can make things much
worse for the stable matching game. In
Section~\ref{sec:friendship-RS}, however, we will see that this bound
will significantly improve if we introduce friendship utilities. Thus,
caring about others when reward sharing is unfair makes a significant
difference to the price of anarchy, much more so that in equal
sharing.

Now let us proceed to the proof of Theorem~\ref{thm:poa-RS}.

\begin{proof}
  This theorem is simply a special case of our much more general
  Theorem~\ref{thm:poa-friendship-RS}, which proves a price of anarchy
  bound of $1+\frac{R+\alpha_1}{1+\alpha_1 R}$. Without friendship
  utilities, the price of anarchy bound in
  Theorem~\ref{thm:poa-friendship-RS} reduces to $1+\frac{R}{1}=1+R$,
  as desired.
  To show that this bound is tight, we will use an instance of Matthew
  Effect
  sharing. %, we will prove that the bound of $R+1$ on PoA and PoS is tight.
  We assign the following values in Fig.~\ref{fig:biswivel}:
  $r_{uv}=2$, $r_{uw}=R+1$, $r_{vz}=R+1$. Let $\lambda_u=1$,
  $\lambda_v=1$, $\lambda_w=R$, $\lambda_z=R$. Thus we have
  $r^u_{uv}=r^u_{uw}=r^v_{uv}=r^v_{vz}=1$.  Now the matching
  $\{(uv)\}$ is stable and hence we get $\text{PoA}=R+1$.  For
  tightness of PoS bound, change $r_{uv}$ to $2+2\ep$.  Now we have
  $r^u_{uv}=r^v_{uv}=1+\epsilon$ but $r^u_{uw}=r^v_{vz}=1$, hence the
  matching $\{(uv)\}$ is the only stable matching. Thus we get
  $\text{PoS}=\frac{R+1}{1+\ep}$ which can be taken arbitrarily close
  to $R+1$.
\qed \end{proof}

\begin{theorem} \label{thm:poa-trust}
  The price of anarchy in stable matching games with Trust sharing is
  at most $3$.
\end{theorem}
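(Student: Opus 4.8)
The plan is to exploit the special structure of Trust sharing, where $r^u_{uv} = h_{uv} + \beta_v$, in order to beat the generic $R+1$ bound (which is unbounded for Trust sharing, since $h$ and $\beta$ can make the ratio arbitrary). The key observation I would push on is that in Trust sharing the reward node $u$ gets from an edge $(u,v)$ decomposes into an edge-intrinsic part $h_{uv}$ (symmetric in $u,v$) and a node-intrinsic part $\beta_v$ that depends only on the \emph{partner}. So the total welfare $\sum_v R_v$ of any matching $M$ equals $\sum_{(uv)\in M}\bigl(2h_{uv} + \beta_u + \beta_v\bigr)$. Let $M$ be an arbitrary stable matching and $M^*$ a social optimum. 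First I would set up the standard price-of-anarchy charging argument: look at the symmetric difference $M \triangle M^*$, which decomposes into alternating paths and cycles, and on each such component charge the $M^*$-edges to incident $M$-edges so that each $M$-edge is charged at most a bounded number of times.

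The next step is to bound, for each edge $e^* = (uv)\in M^*\setminus M$, the welfare contribution $2h_{uv}+\beta_u+\beta_v$ against the welfare of the $M$-edges at $u$ and at $v$. Since $M$ is stable, $(u,v)$ is not a blocking pair, so at least one of $u,v$ does not strictly improve by switching to $e^*$. Suppose $u$ is matched to $w$ in $M$ (if $u$ is unmatched then $r^u_{uv}\le 0$ is impossible unless $h_{uv}+\beta_v\le 0$, which forces a simple argument); stability gives $r^u_{uw} = h_{uw}+\beta_w \ge r^u_{uv} = h_{uv}+\beta_v$, i.e. $h_{uv}+\beta_v \le h_{uw}+\beta_w$. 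The crucial point is that the $\beta_u$ term in $e^*$'s welfare is \emph{also} charged: $\beta_u$ appears in $R_w$ through the edge $(uw)\in M$ once $u$ is rematched there, but more directly I would compare welfare contributions edge-by-edge. The idea is that the welfare of $e^* = (uv)$, namely $(h_{uv}+\beta_v) + (h_{uv}+\beta_u)$, is at most $(h_{uw}+\beta_w) + (h_{vz}+\beta_z)$ plus the "self" terms $\beta_u + \beta_v$ which can be charged to the $M$-edges at $u$ and $v$ respectively (since $\beta_u \le h_{uw}+\beta_u = r^w_{uw}$... — here one needs $h\ge 0$, which holds). Carrying this through, each $M$-edge $(uw)$ gets charged at most: its own welfare $2h_{uw}+\beta_u+\beta_w$ contributes to bounding at most one $M^*$-edge at $u$ and one at $w$, giving roughly a factor of $3$ when the bookkeeping (welfare of $e^*$ split as partner-term $\le$ neighboring $M$-edge welfare, self-term $\le$ neighboring $M$-edge welfare) is totalled over both endpoints of every $M^*$-edge.

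The main obstacle I anticipate is the bookkeeping on alternating paths whose endpoints are matched in $M^*$ but unmatched in $M$ (or vice versa), together with getting the constant to be exactly $3$ rather than something slightly larger: one has to be careful that the "self-term" $\beta_u$ from an $M^*$-edge at $u$ is charged to the \emph{same} $M$-edge at $u$ that absorbs the "partner-term", without double-counting across the two $M^*$-edges that may be incident to that $M$-edge's two endpoints. I would handle this by orienting each $M^*$-edge and charging the partner-term to one endpoint's $M$-edge and the self-term to the other's, so that each of the two endpoints of an $M$-edge absorbs at most one partner-term plus one self-term, i.e. at most its full welfare $2h_e + \beta + \beta$, distributed so the total over all $M$-edges is at least $w(M^*)/3$. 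A clean alternative, which I would try first, is to directly argue $w(M^*)\le w(M) + 2\!\sum_{(uv)\in M}(\beta_u+\beta_v) \le 3\,w(M)$, using that $\sum_{(uv)\in M}(\beta_u+\beta_v)\le w(M)$ because $h\ge 0$, and that the "difference" $w(M^*)-\sum_{(uv)\in M^*}(\beta_u+\beta_v)$ — the pure $h$-part of the optimum — is bounded by $w(M)$ via the stability inequalities on $h+\beta$ comparisons along $M\triangle M^*$; this route avoids the trickiest orientation argument.
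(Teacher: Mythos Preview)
Your overall setup is right and aligns with the paper's proof: decompose the welfare of an edge as $r_{uv}=2h_{uv}+\beta_u+\beta_v$, pick for each $(uv)\in M^*\setminus M$ a \emph{witness} endpoint that does not want to deviate, and charge $r_{uv}$ to the witness's $M$-edge. The gap is in how you bound $r_{uv}$. You write that the welfare of $e^*=(uv)$ is at most $(h_{uw}+\beta_w)+(h_{vz}+\beta_z)+\beta_u+\beta_v$, but this requires \emph{both} inequalities $h_{uv}+\beta_v\le h_{uw}+\beta_w$ and $h_{uv}+\beta_u\le h_{vz}+\beta_z$. Stability only guarantees one of them (the witness's). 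In particular $v$ may be unmatched in $M$, or matched but strictly preferring $u$; there is no usable inequality on the $v$ side. The same issue undermines your ``clean alternative'': bounding the pure $h$-part $\sum_{M^*}2h_{uv}$ by $w(M)$ needs control of $2h_{uv}$ per edge, and a single witness inequality only controls one copy of $h_{uv}$.

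The fix, which is exactly what the paper does, is to squeeze the full $r_{uv}$ out of the \emph{single} witness inequality by using it twice. From $h_{uv}+\beta_v\le h_{uw}+\beta_w$ you get (i) $h_{uv}+\beta_v+\beta_u\le h_{uw}+\beta_w+\beta_u$ by adding $\beta_u$, and (ii) $h_{uv}\le h_{uw}+\beta_w$ by dropping $\beta_v\ge 0$. Summing (i) and (ii) yields
\[
2h_{uv}+\beta_u+\beta_v \;\le\; 2h_{uw}+2\beta_w+\beta_u,
\]
so the entire welfare of $e^*$ is charged to the witness's $M$-edge $(uw)$ alone. Now sum over $M^*\setminus M$ and count coefficients on the right: each $M$-edge $(uw)$ is the witness edge for at most two optimum edges (once with $u$ as witness, once with $w$), giving coefficient at most $4$ on $h_{uw}$ and at most $3$ on each of $\beta_u,\beta_w$. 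Since $4h_{uw}+3\beta_u+3\beta_w\le 3(2h_{uw}+\beta_u+\beta_w)=3\,r_{uw}$, you obtain $w(M^*)\le 3\,w(M)$. No alternating-path orientation or two-endpoint charging is needed.
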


\begin{proof}
  A stable matching always exists for Trust sharing by
  Theorem~\ref{thm:cycle-MEM-TM}. Now we will prove that the price of
  anarchy can be at most $3$.

  Let $M^*$ denote a socially optimum matching and let $M$ denote a
  stable matching. Let $w^*_u$ denote the reward a node $u$ gets in
  $M^*$ and $w_u$ denote the reward a node $u$ gets in $M$. Consider
  an edge $(uv)\in M^*\setminus M$.  As $(u,v)$ is not a blocking pair
  in $M$, without loss of generality, we can assume that the utility
  of $u$ does not increase by getting matched with $v$ in $M$.  Now
  $u$ must be matched to some other node, say $z$.  Call $u$ a
  \textit{witness} node for $(uv)\in M^*\setminus M$. Since $u$ does
  not want to switch to $(uv)$ from $M$, then
  \begin{eqnarray} \label{eqn:support-101}
    h_{uv} + \beta_v \leq h_{uz} + \beta_z
  \end{eqnarray}
  Adding $\beta_u$ to both sides
  \begin{eqnarray} \label{eqn:support-111}
    h_{uv} + \beta_v + \beta_u \leq h_{uz} + \beta_z + \beta_u
  \end{eqnarray}
  From Inequality~\eqref{eqn:support-101}, we get $h_{uv} \leq h_{uz}
  + \beta_z$. Addding this to Inequality~\eqref{eqn:support-111}, we
  obtain
  \begin{eqnarray}
    2h_{uv} + \beta_v + \beta_u \leq 2h_{uz} + 2\beta_z + \beta_u
  \end{eqnarray}
  Suppose we form such inequalities for all $(uv)\in M^*\setminus M$
  and add them. Let us investigate the coeffcients of terms appearing
  on right hand side after such addition. If a term $h_{uz}$ appears
  on right hand side, then its coeffcient can be at most $4$: counting
  one inequality for $u$ acting as witness, and possibly one more
  inequality for $z$ acting as witness. However, the coeffcient for a
  term $\beta_z$ appearing on right hand side can be at most $3$,
  because $2\beta_z$ comes in when $u$ acts as witness and $\beta_z$
  comes in when $z$ acts as witness. Hence adding these inequalities
  will give us
  \begin{eqnarray}
  	\sum\limits_{(uv)\in M^*\setminus M}
	2h_{uv} + \beta_v + \beta_u
   &\leq& \sum\limits_{(uv)\in M\setminus M^*}
	4h_{uv} + 3\beta_v + 3\beta_u \nonumber \\
   \Rightarrow
  	\sum\limits_{(uv)\in M^*\setminus M}
	2h_{uv} + \beta_v + \beta_u
   &\leq& 3\sum\limits_{(uv)\in M\setminus M^*}
	2h_{uv} + \beta_v + \beta_u \label{eqn:inequality-120}
  \end{eqnarray}
  But $r_{xy} = r^x_{xy}+r^y_{xy} = 2h_{xy} + \beta_x + \beta_y$ for
  Trust sharing. Substituting this in
  inequality~\eqref{eqn:inequality-120}, we get
  \begin{eqnarray}
  	\sum\limits_{(uv)\in M^*\setminus M} r_{uv}
  &\leq&  \,\,\, 3 \sum\limits_{(uv)\in M\setminus M^*} r_{uv}
	\nonumber \\
  \Rightarrow
  \frac{w(M^*)}{w(M)} &=&
	\frac{\sum\limits_{(uw)\in  M^*} r_{uw}}
	     {\sum\limits_{(uv)\in  M} r_{uv}}
	\leq 3 \nonumber
  \end{eqnarray}
  As this is valid for any stable matching $M$, we have proved that
  for stable matching games with trust sharing, $\text{PoA} \leq 3$.
\qed \end{proof}
%######################################################################

%######################################################################
\section{Stable Matching with Friendship and General Reward Sharing}
\label{sec:friendship-RS}
%######################################################################
%\subsection{Introduction}
%\label{subsec:intro-friendship-RS}

In this section, we consider general stable matching games where
players may have both friendship utilities and unequal reward
sharing. We show general bounds on price of anarchy, and establish
that friendship can make a much larger difference in the context of
unequal sharing than in the case of fair sharing. First, just as at
the start of Section~\ref{sec:friendship-equal-sharing}, we write
explicit conditions for nodes to form a blocking pair in this context,
and define some helpful notation.

%\textbf{(Just writing the necessary equations for now.)}
The necessary and sufficient conditions for nodes $(u,v)$ to form a
biswivel from nodes $w$ and $z$ (See Fig.~\ref{fig:biswivel}) in reward
sharing with friendship are:
\begin{eqnarray}
  r^u_{uv} + \alpha_1 r^v_{uv} &>&
     r^u_{uw} + \alpha_1 (r^w_{uw} + r^v_{vz}) + \alpha_{uz} r^z_{vz}
		\nonumber\\
  r^v_{uv} + \alpha_1 r^u_{uv} &>&
     r^v_{vz} + \alpha_1 (r^z_{vz} + r^u_{uw}) + \alpha_{vw} r^w_{uw}.
		\nonumber
\end{eqnarray}
Let us define $q^x_{xy} = r^x_{xy} + \alpha_1 r^y_{xy}$. Then the
conditions for biswivel such as shown in Fig.~\ref{fig:biswivel} are:
\begin{eqnarray}
  q^u_{uv} &>& q^u_{uw} + \alpha_1 r^v_{vz} + \alpha_{uz} r^z_{vz}
			  \label{eqn:biswivel-RS-1}\\
  q^v_{uv} &>& q^v_{vz} + \alpha_1 r^u_{uw} + \alpha_{vw} r^w_{uw}.
			  \label{eqn:biswivel-RS-2}
\end{eqnarray}
Similarly, the necessary and sufficient conditions for swivel (See
Fig.~\ref{fig:swivel}) are
\begin{eqnarray}
  r^u_{uv} + \alpha_1 r^v_{uv} &>&
		r^u_{uw} + \alpha_1 r^w_{uw} \nonumber\\
  r^v_{uv} + \alpha_1 r^u_{uv} &>&
		\alpha_1 r^u_{uw} + \alpha_{vw} r^w_{uw}. \nonumber
\end{eqnarray}
Using the definition of $q^x_{xy}(\cdot,\cdot)$, the conditions for
swivel become:
\begin{eqnarray}
 q^u_{uv} &>& q^u_{uw} \label{eqn:swivel-RS} \\
 q^v_{uv} &>& \alpha_1 r^u_{uw} + \alpha_{vw} r^w_{uw}
	\label{eqn:swivel-RS-2}
\end{eqnarray}
%We also have
%\begin{eqnarray}
%  r^x_{xy} &=& \cfrac{q^x_{xy}-\alpha q^y_{xy}}{1-\alpha_1^2} \\
%  r^y_{xy} &=& \cfrac{q^y_{xy}-\alpha q^x_{xy}}{1-\alpha_1^2}
%\end{eqnarray}
We also have
\begin{eqnarray}
  \frac{q^x_{xy}}{q^y_{xy}} =
	\cfrac{r^x_{xy}+\alpha_1 r^y_{xy}}{r^y_{xy}+\alpha_1 r^x_{xy}}
		\nonumber
\end{eqnarray}
Using the fact that $\frac{p+\alpha_1}{1+\alpha_1 p}$ is an increasing
function of $p$ and using the definition of $R$, we thus obtain
\begin{eqnarray}
  \frac{q^x_{xy}}{q^y_{xy}} \leq \frac{R+\alpha_1}{1+\alpha_1 R}
\end{eqnarray}
Let us define $q_{xy} = q^x_{xy} + q^y_{xy}$. Thus we obtain
$q_{xy} = (1+\alpha_1) r_{xy}$.
%======================================================================
\subsection{Existence of a Stable Matching with Friendship and General
  Reward Sharing}
\label{subsec:existence-friendship-RS}

In Section~\ref{subsec:existence-stable-matching} we showed that for
the case of equal sharing with friendship utilities, a stable matching
always exists. We showed this by proving that for equal sharing, a
stable matching without friendship utilities
(i.e. $\vec{\alpha}=\mathbf{0}$) is also a stable matching when we
have friendship utilities.

However, for unequal reward sharing with friendship, the set of stable
matchings for $\vec{\alpha}=\mathbf{0}$ is no longer a subset of the
set of stable matchings when we have friendship utilities.  Moreover,
existence of a stable matching for $\vec{\alpha}= \mathbf{0}$ no more
guarantees the existence of a stable matching with friendship
utilities. We will give examples below to justify both claims. Finally
we will conclude this section by giving a sufficient condition for the
existence of a stable matching for stable matching games with unequal
reward sharing and friendship utilities.

%\begin{figure}
%  \centering
%  \includegraphics[scale=0.5]{fig/nonoverlapping.eps}
%  \caption{For reward sharing, the sets stable matching can be
%	   nonoverlapping with and without friendship}
%  \label{fig:nonoverlapping}
%\end{figure}

The following is an example which has non-overlapping sets of stable
matchings with and without friendship: Assign $r^u_{uw}=r^w_{uw}=1$,
$r^u_{uv}=10/11$, $r^v_{uv}=100/11$ with $\alpha_1=1/2$ and
$\alpha_2=\alpha_3=\cdots=0$ in Fig.~\ref{fig:swivel}. Without
friendship utilities, $\{(uw)\}$ is the only stable matching as node
$u$ will always want to get matched to node $w$. However, with
friendship utilities we have $q^u_{uv} = \frac{60}{11}$, $q^u_{uw} =
\frac{3}{2}, q^v_{uv} = \frac{105}{11}$, $q^v_{uw}=\frac{3}{2}$. Thus
using inequalities~\eqref{eqn:swivel-RS} and~\eqref{eqn:swivel-RS-2}
we see that with friendship utilities, the only stable matching is
$\{(uv)\}$ as the node $u$ will always want to get matched with node
$v$. Thus for unequal reward sharing with friendship utilties, the set
of stable matchings can be completely nonoverlapping with the set of
stable matchings for unequal reward sharing but without friendship
utilities.

\begin{figure}
  \centering
  \includegraphics[scale=0.5]{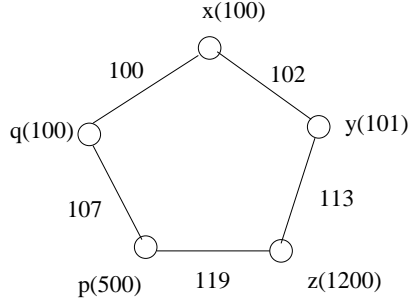}
  \caption{Existence of a stable matching without friendship
	does not guarantee existence of a stable matching
	with friendship}
  \label{fig:no-stable-matching-friendship}
\end{figure}

Now we give an example where we have a stable matching with
$\vec{\alpha}=\mathbf{0}$ but no stable matching with friendship
utilities. Consider the Matthew Effect sharing example as shown in
Fig.~\ref{fig:no-stable-matching-friendship}. The values on edges are
edge rewards of those edges. The values in the brackets beside a node
label is the brand valuee ($\lambda$ value) of that node. By
Theorem~\ref{thm:cycle-MEM-TM}, for $\vec{\alpha}=\mathbf{0}$ a stable
matching always exists for Matthew Effect sharing. However, let us
investigate the example for values shown in
Fig.~\ref{fig:no-stable-matching-friendship} with $\alpha_1=4/5,
\alpha_2=\alpha_3=\cdots=0$. Here we have
\begin{eqnarray*}
  q^q_{qx} = 90        &>& q^q_{pq} = 89.1667       \\
  q^x_{xy} = 91.7493   &>& q^x_{qx} = 90	    \\
  q^y_{yz} = 92.1545   &>& q^y_{xy} = 91.8507       \\
  q^z_{zp} = 112       &>& q^z_{yz} = 111.2455      \\
  q^p_{pq} = 103.4333  &>& q^p_{zp} = 102.2
\end{eqnarray*}
Suppose there exists a stable matching for this example. In such a
matching exactly one node would stay unmatched. Say the matching
$\{(qx),(zp)\}$ is a candidate for stable matching.  Now the node $y$
is unmatched. In such a situation, $(x,y)$ will form blocking pair
because we have $q^x_{xy} > q^x_{qx}$ and $q^y_{xy} > \alpha_1
r^x_{qx}$ (See inequalities~\eqref{eqn:swivel-RS}
and~\eqref{eqn:swivel-RS-2} substituting $\alpha_2=0$ as we use it in
this example). Hence $\{(qx),(zp)\}$ is not a stable
matching. Similarly every other matching can be shown to be not
stable. Hence here we do not have a stable matching with friendship
utilities, even though with $\vec{\alpha}=0$ a stable matching exists.

Now we give a sufficient condition for the existence of a stable
matching in unequal reward sharing with friendship utilities. Let us
denote by $SRP_q$ the instance of stable roommate problem where we
have exactly the same edges in the graph as our network but in
$SRP_q$ the nodes will prepare their preference lists based on
$q^x_{xy}$, i.e. a node $u$ will prefer node $v$ as roommate
over $w$ iff $q^u_{uv} > q^u_{uw}$, breaking ties arbitrarily.
Note that in an instance of stable roommate problem like $SRP_q$
friendship utilities plays no role.

\begin{theorem} \label{thm:existence-friendship-RS}
  A stable matching for $SRP_q$ is a stable matching for matching
  games with unequal reward sharing and friendship utilities.
  Hence, existence of a stable matching for $SRP_q$ implies the
  existence of a stable matching for general reward sharing with
  friendship utilities.
\end{theorem}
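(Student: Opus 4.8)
The plan is to show that any matching $M$ that is stable for the auxiliary stable roommate instance $SRP_q$ cannot admit a blocking pair in the original game with unequal reward sharing and friendship utilities. The key observation is that the preference lists in $SRP_q$ are built from the quantities $q^x_{xy} = r^x_{xy} + \alpha_1 r^y_{xy}$, so stability in $SRP_q$ means that for every edge $(u,v) \notin M$, at least one of $u,v$ — say $u$, matched to $w$ in $M$ — satisfies $q^u_{uw} \geq q^u_{uv}$ (with unmatched nodes handled by setting $r^u_{uw}=0$, so $q^u_{uw}=0$). I would argue by contradiction: suppose $(u,v)$ is a blocking pair in the friendship game. Then it is either a swivel or a biswivel, and in each case I want to derive that \emph{both} $q^u_{uv} > q^u_{uw}$ \emph{and} $q^v_{uv} > q^v_{vz}$ (where $w,z$ are the $M$-partners of $u,v$), contradicting stability of $M$ in $SRP_q$, which guarantees at least one of the reverse inequalities.

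The main work is checking this implication case by case using the blocking-pair conditions already written down in the excerpt. For a biswivel, condition~\eqref{eqn:biswivel-RS-1} reads $q^u_{uv} > q^u_{uw} + \alpha_1 r^v_{vz} + \alpha_{uz} r^z_{vz}$; since all rewards and all $\alpha_d$ are nonnegative, the extra terms on the right are $\geq 0$, so immediately $q^u_{uv} > q^u_{uw}$. Symmetrically,~\eqref{eqn:biswivel-RS-2} gives $q^v_{uv} > q^v_{vz}$. So both strict inequalities hold, contradicting $SRP_q$-stability. For a swivel, one of $u,v$ is unmatched before the deviation. Condition~\eqref{eqn:swivel-RS} directly gives $q^u_{uv} > q^u_{uw}$. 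If it is $v$ that is unmatched, then $q^v_{vz}=0$ and~\eqref{eqn:swivel-RS-2} (or simply nonnegativity of $q^v_{uv}$, noting the deviation is strictly improving) forces $q^v_{uv} > 0 = q^v_{vz}$; if instead $u$ is the unmatched one, relabel so that the matched endpoint plays the role of $u$ in~\eqref{eqn:swivel-RS}. Either way, both endpoints strictly prefer each other to their $M$-partners under $q$, again contradicting $SRP_q$-stability.

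I would then note that the two bullet points of the theorem are really the same statement phrased twice: once a matching $M$ stable in $SRP_q$ is shown to be stable in the friendship game, the ``hence'' is immediate, since $SRP_q$ having \emph{some} stable matching produces \emph{some} stable matching for the friendship game. It is worth remarking that $SRP_q$ is an ordinary stable roommate instance (friendship has been folded into the preference values $q^x_{xy}$, so it plays no further role there), which is exactly why earlier sufficient conditions for stable-roommate existence — such as the absence of preference cycles / odd rings used in Theorems~\ref{thm:ring-cycle} and~\ref{thm:cycle-MEM-TM}, now applied to the $q$-preferences rather than the $r$-preferences — can be invoked to certify existence in concrete sharing rules.

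The step I expect to require the most care is the swivel case with an unmatched endpoint: one must be careful about which of $u,v$ is unmatched, handle the degenerate $q$-value $0$ correctly, and confirm that a strictly improving swivel does force the matched endpoint's $q$-preference to strictly increase (this is essentially Lemma~\ref{lem:increase-edge-reward} transported to the $q$-scale, and~\eqref{eqn:swivel-RS} states it directly). Everything else is a short sign-chasing argument relying only on nonnegativity of rewards and of the $\alpha_d$'s, so I do not anticipate any genuine obstacle beyond bookkeeping.
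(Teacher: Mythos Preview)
Your proposal is correct and follows essentially the same route as the paper: argue by contradiction that a blocking pair $(u,v)$ in the friendship game would, via the biswivel/swivel conditions \eqref{eqn:biswivel-RS-1}--\eqref{eqn:swivel-RS-2} with the nonnegative extra terms dropped, force $q^u_{uv} > q^u_{uw}$ (and $q^v_{uv} > q^v_{vz}$), making $(u,v)$ a blocking pair in $SRP_q$. The only cosmetic difference is that in the swivel case the paper observes $q^u_{uv} > q^u_{uw}$ alone already makes $(u,v)$ blocking in $SRP_q$ when $v$ is unmatched, so your extra verification that $q^v_{uv} > 0$ is harmless but unnecessary.
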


\begin{proof}
Suppose a stable matching $M$ for $SRP_q$ is not a stable matching
for unequal reward sharing with friendship
utilities. Then there exists a blocking
pair $(u,v)$ with one of the following two possibilities:
\begin{itemize}
  \item	In $M$, both $u$ and $v$ are matched:
	Let $u$ and $v$ be matched with $w$ and $z$ respectively.
	In such case, for $(u,v)$ to be a blocking pair
	the inequalities~\eqref{eqn:biswivel-RS-1}
	and~\eqref{eqn:biswivel-RS-2} must hold true. These inequalities
	imply $q^u_{uv} > q^u_{uw}$ and $q^v_{uv} > q^v_{vz}$. But
	then $(u,v)$ would be a blocking pair in $SRP_q$.
	Hence $M$ could not have been stable in $SRP_q$.
  \item	In $M$, only one of the nodes $u$ and $v$ is matched:
	Say $u$ is matched with $w$ but $v$ is unmatched.
	(Both cannot be unmatched otherwise $M$ would not be
	stable in $SRP_q$). Then for $(u,v)$ to be a blocking pair
	inequalities~\eqref{eqn:swivel-RS} and~\eqref{eqn:swivel-RS-2}
	must hold true. But these inequalities imply
	$q^u_{uv}>q^u_{uw}$ and thus $(u,v)$ would be a blocking
	pair in $SRP_q$. Hence $M$ could not have been stable
	in $SRP_q$.
\end{itemize}
Either way we reach a contradiction. Hence $M$ must be stable
with unequal reward sharing and friendship utilities. Moreover,
the set of stable matchings in $SRP_q$ is a subset of the
set of stable matchings in unequal reward sharing with friendship
utilities.
\qed \end{proof}
%======================================================================
\subsection{Price of Anarchy with Friendship and General Reward Sharing}
\label{subsec:poa-friendship-RS}

This section is about proving the following theorem:
\begin{theorem} \label{thm:poa-friendship-RS}
  If a stable matching exists for general reward sharing with
  friendship utilities, then price of anarchy is at most $1+Q$, where
  $Q = \max_{(uv)\in E(G)}\frac{q^u_{uv}}{q^v_{uv}}
  =\frac{R+\alpha_1}{1+\alpha_1 R}$, and this bound is tight.
\end{theorem}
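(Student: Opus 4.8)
The plan is to bound the total reward of a social optimum matching $M^*$ in terms of the total reward of an arbitrary stable matching $M$, using the blocking-pair conditions for $M$ edge-by-edge. The natural charging scheme is as follows: for each edge $(u,v) \in M^* \setminus M$, since $(u,v)$ is not a blocking pair in $M$, at least one of the two biswivel/swivel inequalities \eqref{eqn:biswivel-RS-1}, \eqref{eqn:biswivel-RS-2} (or \eqref{eqn:swivel-RS}) must fail. Pick a node, say $u$, whose improvement condition fails, and call $u$ the \emph{witness} for $(u,v)$; let $w$ be the partner of $u$ in $M$ (with $r^u_{uw}=0$ if $u$ is unmatched). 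I want to translate the failed inequality into a bound of the form ``$q^u_{uv}$ is at most (something charged to the $M$-edge $(u,w)$)''. The key arithmetic fact already recorded in the excerpt is $q_{xy} = q^u_{uv}+q^v_{uv} = (1+\alpha_1)r_{xy}$ and $q^u_{uv}/q^v_{uv} \le Q = \frac{R+\alpha_1}{1+\alpha_1 R}$, so $q^v_{uv} \ge \frac{1}{1+Q}q_{uv} = \frac{1+\alpha_1}{1+Q}r_{uv}$, i.e. $r_{uv} \le \frac{1+Q}{1+\alpha_1}q^v_{uv}$. So if I can show $q^v_{uv}$ (the ``small'' share of $(u,v)$ in the $q$-metric) is at most roughly the $q$-reward that $M$'s edge at $u$ provides, summing over all $(u,v)\in M^*\setminus M$ and noting each edge $(u,w)\in M$ is charged by at most its two endpoints will yield the factor $1+Q$.

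Concretely, the steps I would carry out: (1) For a witness node $u$ of edge $(u,v)$ matched to $w$ in $M$, use the failed condition — if it is the biswivel condition \eqref{eqn:biswivel-RS-1} that fails, then $q^u_{uv} \le q^u_{uw} + \alpha_1 r^v_{vz} + \alpha_{uz}r^z_{vz}$; I would drop the $\alpha_1 r^v_{vz}+\alpha_{uz}r^z_{vz} \ge 0$ terms in the wrong direction, so instead I must be more careful and keep them, charging them to $z$'s $M$-edge $(v,z)$. Actually the cleaner route: since the $q$-values behave exactly like ordinary utilities in $SRP_q$ (as exploited in Theorem~\ref{thm:existence-friendship-RS}), the failure of $(u,v)$ being a blocking pair with friendship is \emph{weaker} than failure in $SRP_q$; but for a price-of-anarchy bound I actually need the quantitative slack. (2) The right bookkeeping is: from \eqref{eqn:biswivel-RS-1}, $q^u_{uv} \le q^u_{uw} + \alpha_1 r^v_{vz} + \alpha_{uz} r^z_{vz} \le q^u_{uw} + (\alpha_1 r^v_{vz}+\alpha_1 r^z_{vz}) = q^u_{uw} + \alpha_1 r_{vz}$, using $\alpha_{uz}\le\alpha_1$. (3) Now I would apply Lemma~\ref{lem:increase-edge-reward}'s analogue — one should verify $q^u_{uw} \le q^u_{uv}$ trivially here isn't useful; rather, I pair up each $M^*$-edge with the $M$-edges incident to its witness and its witness's ``blocked'' neighbor, and sum. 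Each $M$-edge $(a,b)$ then appears with total coefficient controlled by: $q^a_{ab}$ (or $q^b_{ab}$) from being a witness's match, plus $\alpha_1 r_{ab} = \frac{\alpha_1}{1+\alpha_1}q_{ab}$ from being the ``$(v,z)$''-type edge. (4) Convert all $q$-quantities back to $r$-quantities via $q_{xy}=(1+\alpha_1)r_{xy}$ and the ratio bound $q^x_{xy}\le \frac{Q}{1+Q}q_{xy}$, collect coefficients, and check the worst-case coefficient on the right-hand side is $(1+Q)$ times the coefficient on the left. (5) Finally invoke tightness: the Matthew-Effect instance on the $3$-edge path in Figure~\ref{fig:biswivel} with $r_{uw}=r_{vz}=1+Q$, $r_{uv}=2$ and brand values chosen so that $r^u_{uv}=r^v_{uv}=1$ (scaled $q$-shares equal) makes $\{(uv)\}$ the unique stable matching while $M^*=\{(uw),(vz)\}$, giving ratio $\to 1+Q$; this parallels the construction already used for Theorem~\ref{thm:poa-RS}.

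The main obstacle I anticipate is the careful handling of the ``cross terms'' $\alpha_1 r^v_{vz} + \alpha_{uz}r^z_{vz}$ (and the symmetric $\alpha_1 r^u_{uw}+\alpha_{vw}r^w_{uw}$) in the biswivel conditions: these must be charged to the correct $M$-edges and the accounting must ensure no $M$-edge is overcharged beyond a $(1+Q)$ factor relative to its own contribution. A subtlety is that the same node may be the witness for one $M^*$-edge and the ``$z$''-type endpoint for another, so I need a global summation argument (sum the chosen inequalities over all $(u,v)\in M^*\setminus M$, then group the right-hand side by $M$-edge) rather than a purely local edge-by-edge comparison. The telescoping/charging must also gracefully handle swivels (where $v$ or $u$ is unmatched in $M$), but those only make the inequalities easier since a missing partner just zeroes out a term. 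Once the summation is grouped correctly, the conversion between the $q$-metric and the $r$-metric via $q_{xy}=(1+\alpha_1)r_{xy}$ and $q^x_{xy}/q^y_{xy}\le Q$ is the routine algebra that produces the clean factor $1+Q=1+\frac{R+\alpha_1}{1+\alpha_1R}$.
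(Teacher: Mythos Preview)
Your overall strategy is the same as the paper's: for each edge $(u,v)\in M^*\setminus M$ pick a witness $u$ whose blocking condition fails, bound $q^u_{uv}$ by quantities attached to $M$-edges, and sum. The place where you diverge is the handling of the cross term $\alpha_1 r^v_{vz}+\alpha_{uz}r^z_{vz}$. You bound it by $\alpha_1 r_{vz}$ and then anticipate a delicate global accounting where each $M$-edge may be charged both as a witness's match and as a ``$(v,z)$-type'' edge. The paper avoids all of this with one observation: since $\alpha_1\le 1$ and $\alpha_{uz}\le\alpha_1$,
\[
\alpha_1 r^v_{vz}+\alpha_{uz}r^z_{vz}\;\le\; r^v_{vz}+\alpha_1 r^z_{vz}\;=\;q^v_{vz}\;=\;w_v.
\]
This immediately gives $q^u_{uv}\le w_u+w_v$, hence $w^*_u+w^*_v=q_{uv}\le(1+Q)q^u_{uv}\le(1+Q)(w_u+w_v)$, and the sum over $M^*\setminus M$ charges each node at most once on each side. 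No cross-edge bookkeeping is needed. Your own bound already satisfies $\alpha_1 r_{vz}\le q^v_{vz}$, so you are one line away from this simplification; the elaborate charging you sketch in steps~(3)--(4) is unnecessary.

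For tightness, your Matthew-Effect path instance is essentially the paper's construction (the paper specifies the $r^x_{xy}$ values directly rather than via brand values, but the instances coincide). One small correction: $\{(uv)\}$ is \emph{not} the unique stable matching there; with the parameters you give, $q^u_{uw}=q^u_{uv}$ exactly, so $\{(uw),(vz)\}$ is stable as well. This does not matter for the price of anarchy, which only needs $\{(uv)\}$ to be stable.
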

\begin{proof}
Let $M^*$ be a socially optimum matching, i.e., a matching with
maximum $\sum_{(uv)\in M^*} r_{uv}$. Let $M$ be any stable
matching. We will use $w^*_u$ (or $w_u$) to denote $q^u_{uv}$
if $u$ is matched to $v$ in $M^*$ (or in $M$).
Because $q_{xy}=(1+\alpha_1)r_{xy}$, we have
\begin{eqnarray}
  \text{ PoA  } = \max\limits_{\text{M is stable}}
	\cfrac{\sum_{(uv)\in M^*} q_{uv}}
	{\sum_{(uv)\in M}q_{uv}}
\end{eqnarray}
Using the definitions of $w^*_u$ and $w_u$ and letting
$w^*_u=0$ (or $w_u=0$) in case $u$ is unmatched in $M^*$ (or $M$),
we get
\begin{eqnarray}
  \text{ PoA  } = \max\limits_{\text{M is stable}}
	\cfrac{\sum_{u\in G} w^*_u}
	{\sum_{u\in G} w_u} \label{eqn:support-29}
\end{eqnarray}
If edge $(uv)\in M^*\setminus M$, then the utility of at least one
node among $u$ and $v$ does not increase if they were to deviate
in $M$ to get matched with each other. Say the utility of node $u$
does not increase. Now we have three cases:
\begin{itemize}
  \item	Both $u$ and $v$ are matched in $M$: Say
	$u$ and $v$ are matched to $w$ and $z$ respectively.
	In such a case if getting matched to $v$ does not
	increase the utility of $u$ then we have
	\begin{eqnarray}
	  q^u_{uv} \leq q^u_{uw} + \alpha_1 r^v_{vz}
		     + \alpha_{uz} r^z_{vz} \label{eqn:support-511}
	\end{eqnarray}
  \item $u$ is matched but $v$ is unmatched in $M$: Say
	$u$ is matched to $w$.
	In such a case if getting matched to $v$ does not
	increase the utility of $u$ then we have
	\begin{eqnarray}
	  q^u_{uv} \leq q^u_{uw} \label{eqn:support-512}
	\end{eqnarray}
  \item $u$ is unmatched but $v$ is matched in $M$: Say
	$v$ is matched to $z$.
	In such a case if getting matched to $v$ does not
	increase the utility of $u$ then we have
	\begin{eqnarray}
	  q^u_{uv} \leq \alpha_1 r^v_{vz}
			+\alpha_{uz} r^z_{vz} \label{eqn:support-513}
	\end{eqnarray}
\end{itemize}

Noticing that $\alpha_1 r^v_{vz} + \alpha_{uz} r^z_{vz} \leq r^v_{vz}
+ \alpha_1 r^z_{vz} = q^v_{vz}$, each of the
inequalities~\eqref{eqn:support-511}, \eqref{eqn:support-512},
and~\eqref{eqn:support-513} imply that:
\begin{eqnarray}
  q^u_{uv} \leq q^u_{uw} + q^v_{vz} \nonumber
\end{eqnarray}
A little algebraic manipulation gives us:
\begin{eqnarray}
  q^u_{uw} + q^v_{vz} &\geq& q^u_{uv} = \left(
		  \cfrac{1}{1+\frac{q^v_{uv}}{q^u_{uv}}} \right)
		  \cdot q_{uv} \nonumber \\
  \Rightarrow
  w_u + w_v &\geq& \cfrac{1}{1+Q}\cdot(w^*_u + w^*_v)
\end{eqnarray}
Adding such inequalities for all $(uv)\in M^*\setminus M$,
we obtain
\begin{eqnarray}
  \sum\limits_{(uv)\in M^*\setminus M} (w_u + w_v)
	  &\geq& \cfrac{1}{1+Q}\cdot
	         \sum\limits_{(uv)\in M^*\setminus M}
					(w^*_u + w^*_v)
			\label{eqn:support-40}
\end{eqnarray}
Notice that if a node $u$ appears in the above inequality
then $u$ is matched to different nodes in $M^*$ and $M$.
Denote the set of all such nodes by $B$. Hence
inequality~\eqref{eqn:support-40} becomes
\begin{eqnarray}
  \sum\limits_{u\in B} w_u &\geq& \cfrac{1}{1+Q} \cdot
			\sum\limits_{u\in B} w^*_u,
			\label{eqn:support-50}
\end{eqnarray}
and so the price of anarchy is at most $1+Q$, as desired.
%From the definition of $Q$, we have $Q\geq 1$. Hence we get
%\begin{eqnarray}
%  \cfrac{\sum_{u\in G}w^*_u}{\sum_{u\in G} w_u}
%			&\leq& 1+Q \label{eqn:support-60}
%\end{eqnarray}
%As inequality~(\ref{eqn:support-60}) is valid for any
%stable matching $M$, using inequality~(\ref{eqn:support-29})
%we thus obtain $\text{ PoA  } \leq 1+Q$ for stable matching
%games with unequal reward sharing and friendship utilities.

%\begin{figure}
%  \centering
%  \includegraphics[scale=0.5]{fig/biswivel-RS.eps}
%  \caption{Tightness of PoA bound for Friendship with Reward Sharing}
%  \label{fig:poa-tightness-friendship-RS}
%\end{figure}

\noindent
\textbf{Tightness of the bound:} Consider the $3$-length path as shown
in Fig.~\ref{fig:biswivel}. Make
$\alpha_2=\alpha_3=\cdots=0$. Substitute the following values:
\begin{align}
  r^u_{uv} = \frac{1}{1+\alpha_1}   & &r^v_{uv} = \frac{1}{1+\alpha_1}
						\nonumber\\
  r^u_{uw} = \frac{1}{1+\alpha_1 R} & &r^w_{uw} = \frac{R}{1+\alpha_1 R}
						\nonumber\\
  r^v_{vz} = \frac{1}{1+\alpha_1 R} & &r^z_{vz} = \frac{R}{1+\alpha_1 R}
						\nonumber
\end{align}
Note that as desired, $\max_{(xy)\in E(G)} \frac{r^x_{xy}}{r^y_{xy}} =
R$. Using $q^x_{xy} = r^x_{xy} + \alpha r^y_{xy}$, we obtain
\begin{align}
  q^u_{uv} = 1 & & q^v_{uv} = 1 \nonumber\\
  q^u_{uw} = 1 & & q^w_{uw} = Q \nonumber\\
  q^v_{vz} = 1 & & q^z_{vz} = Q \nonumber
\end{align}
Note that as desired, $\max_{(xy)\in E(G)} \frac{q^x_{xy}}{q^y_{xy}} =
\frac{R+\alpha_1}{1+\alpha_1 R} = Q$.  We have $\{(uv)\}$ as a stable
matching because given this matching, $(u,w)$ is not a blocking pair
as we have $q^u_{uw} \leq q^u_{uv}$. Similarly $(v,z)$ too is not a
blocking pair in matching $\{(uv)\}$.  Another stable matching is
$\{(uw),(vz)\}$ because given this matching, $(u,v)$ will not be a
blocking pair as we have $q^u_{uv} < q^u_{uw} +\alpha_1 r^v_{vz}$,
hence the condition in inequality~\eqref{eqn:biswivel-RS-1} is
violated.  Since there are no other stable matchings for this graph,
the price of anarchy will be determined by the value of the worst
stable matching which is $\{(uv)\}$. It is given by
\begin{eqnarray*}
  \text{ PoA  } = \frac{r_{uw}+r_{vz}}{r_{uv}}
		= \frac{q_{uw}+q_{vz}}{q_{uv}}
                = 1+Q
\end{eqnarray*}
Hence the bound is tight.
\qed \end{proof}

\paragraph{Discussion} We have $\text{PoA} \leq 1+Q$ where
$Q=\frac{R+\alpha_1}{1+\alpha_1 R}$. Let us consider the implications
of this bound. If $\alpha_1=0$, we have $\text{PoA} \leq 1+R$ which
agrees with Theorem~\ref{thm:poa-RS}. If $R=1$, we have
$\text{PoA}=2$. This result implies Theorem~\ref{thm:poa}, since when
we have $R=1$, then both $u$ and $v$, if they are matched to
each other, get the same reward from $(uv)$.

Notice also that $\frac{R+\alpha_1}{1+\alpha_1 R}$ is a decreasing
function of $\alpha_1$. As $\alpha_1$ goes from $0$ to $1$, the bound
goes from $1+R$ to $2$. Without friendship utilities, we have a tight
bound $\text{PoA}\leq 1+R$. Thus for $\vec{\alpha}=\mathbf{0}$, it can
be extremely bad if $R$ is large. As $\alpha_1$ gets close to $1$,
however, no matter how large $R$ is, PoA comes down to $2$ from
$R+1$. For example, if $\alpha_1=1/2$, then it is only 3. Thus, social
context can drastically improve the outcome for the society,
especially in the case of unfair and unequal reward sharing.

%======================================================================
\subsection{Price of Stability with Friendship and General Reward
  Sharing}
\label{sec:pos-friendship-RS}

In this section, we give a simple lower bound $Q^\prime$ on the price
of stability for stable matching games with friendship and reward
sharing. Furthermore, we show that this bound is within an additive
factor of 1 of optimum, i.e., $Q< Q^\prime \leq \text{PoS} \leq 1+Q$.

To prove the lower bound, consider the 3-length path
as shown in Fig.~\ref{fig:biswivel}.
Make $\alpha_2=\alpha_3=\cdots=0$ and and use
the following values:
\begin{eqnarray*}
  r^u_{uv} = \frac{1}{1+\alpha_1}\left(
     \frac{1+\alpha_1 (R+1)}{(1+\alpha_1 R)} + \epsilon \right)& &
  r^v_{uv} = \frac{1}{1+\alpha_1}\left(
     \frac{1+\alpha_1 (R+1)}{(1+\alpha_1 R)} + \epsilon \right)
						\nonumber\\
  r^u_{uw} = \frac{1}{1+\alpha_1 R} & &
  r^w_{uw} = \frac{R}{1+\alpha_1 R}		\nonumber\\
  r^v_{vz} = \frac{1}{1+\alpha_1 R} & &
  r^z_{vz} = \frac{R}{1+\alpha_1 R}
\end{eqnarray*}
As desired we have $\max_{(xy)\in E(G)} \frac{r^x_{xy}}{r^y_{xy}}
= R$. Using $q^x_{xy} = r^x_{xy} + \alpha_1 r^y_{xy}$, we obtain
\begin{align}
  q^u_{uv} &= \frac{1+\alpha_1(R+1)}{1+\alpha_1 R} & &
  q^v_{uv} = \frac{1+\alpha_1(R+1)}{1+\alpha_1 R} \nonumber\\
  q^u_{uw} &= 1  & &
  q^w_{uw} = \frac{R+\alpha_1}{1+\alpha_1 R} \nonumber\\
  q^v_{vz} &= 1  & &
  q^z_{vz} = \frac{R+\alpha_1}{1+\alpha_1 R} \nonumber
\end{align}
As desired, we have $\max_{(xy)\in E(G)} \frac{q^x_{xy}}{q^y_{xy}} =
\frac{R+\alpha_1}{1+\alpha_1 R} = Q$.  We have $\{(uv)\}$ as a stable
matching because $(u,w)$ is not a blocking pair as $q^u_{uw} \leq
q^u_{uv}$.  Similarly $(v,z)$ will not be a blocking pair.  But the
matching $\{(uw),(vz)\}$ is no longer stable because $(u,v)$ is a
blocking pair as inequalities~\eqref{eqn:biswivel-RS-1}
and~\eqref{eqn:biswivel-RS-2} are satisfied. However $\{(uw),(vz)\}$
is still the socially optimal matching.  Hence the price of stability
for this graph will be given by
\begin{eqnarray*}
  \text{PoS} = \frac{r_{uw}+r_{vz}}{r_{uv}}
	     = \frac{q_{uw}+q_{vz}}{q_{uv}}
	     = \frac{(1+\alpha_1)(1+R)}{1+\alpha_1(R+1)}
\end{eqnarray*}
Let us define $Q^\prime = \frac{(1+\alpha_1)(1+R)}{1+\alpha_1(R+1)}$.
Because in the above instance we have $\text{PoS}=Q^\prime$, the lower
bound on the price of stability can be expressed as $\text{PoS}\geq
Q^\prime$, where $Q \le Q^\prime \le Q+1$. Since $Q+1$ is an upper
bound on the price of stability, this means that the lower bound of
$Q^\prime$ is within an additive term of 1 of optimum.

\begin{theorem} \label{thm:pos-friendship-RS}
The worst-case price of stability of stable matching games with friendship and
general reward sharing is in $[Q^\prime,Q+1]$, with $Q < Q^\prime \leq Q+1$.
\end{theorem}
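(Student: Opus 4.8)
The plan is to assemble this theorem from the two ingredients already developed in the surrounding text, which together pin the worst-case price of stability into the interval $[Q', Q+1]$. The upper bound $\text{PoS} \le Q+1$ is immediate from Theorem~\ref{thm:poa-friendship-RS}: that theorem proves $\text{PoA} \le 1+Q$ whenever a stable matching exists, and since the price of stability is the ratio of the optimum to the \emph{best} stable matching, it is never larger than the price of anarchy, hence $\text{PoS} \le \text{PoA} \le 1+Q$. The lower bound $\text{PoS} \ge Q'$ is witnessed by the explicit $3$-path instance constructed just before the theorem statement: with $\alpha_2 = \alpha_3 = \cdots = 0$ and the indicated reward shares, one verifies (using the swivel conditions~\eqref{eqn:swivel-RS} and~\eqref{eqn:swivel-RS-2} and the biswivel conditions~\eqref{eqn:biswivel-RS-1} and~\eqref{eqn:biswivel-RS-2}) that $\{(uv)\}$ is stable while the social optimum $\{(uw),(vz)\}$ is \emph{not} stable, so every stable matching has value at most $r_{uv}$, giving $\text{PoS} = Q'$ on this instance and hence $\text{PoS} \ge Q'$ in the worst case.

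Next I would record the elementary algebraic comparison $Q < Q' \le Q+1$. Writing $Q = \frac{R+\alpha_1}{1+\alpha_1 R}$ and $Q' = \frac{(1+\alpha_1)(1+R)}{1+\alpha_1(R+1)}$, the inequality $Q' \le Q+1$ follows by clearing the (positive) denominators $1+\alpha_1(R+1)$ and $1+\alpha_1 R$ and checking that the resulting polynomial inequality in $R$ and $\alpha_1$ holds for all $R \ge 1$ and $\alpha_1 \in [0,1]$; in fact equality $Q' = Q+1$ holds exactly when $\alpha_1 = 0$, which is consistent with the $R=1,\alpha_1=0$ reduction to the classical factor-$2$ gap. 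For $Q < Q'$, a short computation shows $Q' - Q = \frac{\alpha_1(1-\alpha_1)(R-1)^2 \cdot(\text{positive})}{(1+\alpha_1 R)(1+\alpha_1(R+1))}$ up to a positive common factor, which is strictly positive whenever $0 < \alpha_1 < 1$; the boundary cases $\alpha_1 \in \{0,1\}$ or $R = 1$ collapse the interval $(Q,Q')$ but are degenerate and can be stated separately or absorbed into the phrasing of the theorem.

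The main obstacle here is not conceptual — both the upper and lower bounds are already in hand — but rather the bookkeeping of the algebraic inequality $Q < Q' \le Q+1$: one must be careful that the cross-multiplications are by quantities known to be positive, and one must handle the degenerate endpoints ($R=1$, $\alpha_1=0$, $\alpha_1=1$) so that the strict inequality $Q < Q'$ is claimed only where it actually holds. I would therefore present the proof in three short movements: (i) cite Theorem~\ref{thm:poa-friendship-RS} and the trivial $\text{PoS} \le \text{PoA}$ for the upper bound; (ii) invoke the preceding $3$-path construction for the lower bound $\text{PoS} \ge Q'$; and (iii) verify $Q < Q' \le Q+1$ by direct computation, noting the tightness of $Q' = Q+1$ at $\alpha_1 = 0$. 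This keeps the proof to a few lines while making clear that the genuinely new content is the near-matching lower bound instance, and that closing the remaining additive gap of $1$ is left as an open question.
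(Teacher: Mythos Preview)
Your overall plan matches the paper's exactly: cite the $\text{PoA}\le 1+Q$ bound for the upper bound, invoke the $3$-path construction already displayed for the lower bound $\text{PoS}\ge Q'$, and then verify $Q<Q'\le Q+1$ by direct computation. That is precisely what the paper does.

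However, your execution of step~(iii) contains an error. You assert that
\[
Q'-Q \;=\; \frac{\alpha_1(1-\alpha_1)(R-1)^2\cdot(\text{positive})}{(1+\alpha_1 R)(1+\alpha_1(R+1))},
\]
and hence that the strict inequality $Q<Q'$ degenerates at $\alpha_1\in\{0,1\}$ or $R=1$. This factorization is wrong. A direct check at $R=1$ gives $Q=1$ and $Q'=\frac{2+2\alpha_1}{1+2\alpha_1}$, so $Q'-Q=\frac{1}{1+2\alpha_1}>0$; at $\alpha_1=0$ one gets $Q'-Q=1$; at $\alpha_1=1$ one gets $Q'-Q=\frac{R}{2+R}>0$. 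In fact the strict inequality $Q<Q'$ holds throughout the entire range $R\ge 1$, $\alpha_1\in[0,1]$, so your caveat about ``degenerate endpoints'' is unnecessary and incorrect. The paper's route is simply to compute $Q'-Q$ as a single fraction, observe that the numerator is manifestly positive for $R\ge 1$ and $0\le\alpha_1\le 1$, and then bound the fraction above by $1$ by comparing numerator and denominator termwise. Replace your sketched factorization with that computation and the proof goes through.
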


\begin{proof}
The only part that is yet to be proven is $Q\leq Q^\prime$ and
$Q^\prime \leq 1+Q$. We have
\begin{eqnarray*}
 Q^\prime - Q
 = \frac{(1-\alpha_1+\alpha_1 R)(1+\alpha_1)}
	{(1+\alpha_1+\alpha_1 R)(1+\alpha_1 R)}
\end{eqnarray*}
As $(1-\alpha_1+\alpha_1 R) \leq (1+\alpha_1+\alpha_1 R)$ and
$1+\alpha_1 \leq 1+\alpha_1 R$, we have that $Q^\prime - Q \leq 1$.
As $R\geq 1$, the numerator is always positive.
Hence $0 < Q^\prime - Q \leq 1$. Using this with
$Q^\prime \leq \text{PoS}$,
we have that $Q < Q^\prime \leq \text{PoS} \leq 1+Q$.
\qed \end{proof}
%######################################################################

%######################################################################
\section{Convex Contribution Games (CCGs)}
\label{sec:CCG}
%######################################################################
In this section we consider convex contribution games (CCGs), as
defined in Section~\ref{sec:model}. In this version of CCG, players do
not have to spend all their budget: the total contribution of a player
to its incident edges must be {\em at most} $B_v$. This corresponds to
the fact that players may decide to keep some budget for themselves,
instead of spending it all on friendships/projects that the links
represent. We consider the case when players must spend their entire
budget in Section~\ref{sec:CCG2}.

For each CCG we define a \emph{corresponding stable matching game}
denoted $\sm$ as follows. The edge rewards in stable matching are
rewards when both players invest their full budget on an edge in the
CCG. For equal sharing this means $r_{uv} = f_{uv}(B_u,B_v)$, for
unequal sharing $r^u_{uv} = f^u_{uv}(B_u,B_v)$. For games with
friendship we assume the same values for $0 \le \alpha_1 \le \ldots
\le \alpha_{diam(G)} \le 1$ in both games. For simplicity, we use the
following notation: $g^u_{uv}(x,y) = f^u_{uv}(x,y) + \alpha_1
f^v_{uv}(x,y)$ and $g_{uv}(x,y) = f^u_{uv}(x,y) + f^v_{uv}(x,y)$ for
all $x, y \ge 0$.

In general, we will show that properties like existence and total
reward of pairwise equilibria in CCGs can be derived from the
properties of stable matchings in the corresponding games.

%======================================================================
\subsection{Existence of a Pairwise Equilibrium}
\label{subsec:existence-pe-CCG}

We start by showing a general reduction for existence of a pairwise
equilibrium for arbitrary $\vec{\alpha}$. Recall that all reward functions of CCG are assumed to be convex in both its parameters, and satisfy the property that $f(x,0)=f(y,0)=0$ for all $x,y$. We call the class of such functions $C_0$. 

\begin{theorem}
  \label{thm:existence-pe-CCG}
  If all reward functions $f^u_{uv}(\cdot,\cdot)\in C_0$, then
  for every stable matching of the corresponding $\sm$ there is an
  equivalent pairwise equilibrium in the CCG. The pairwise equilibrium
  has the same assignment structure and total reward.
\end{theorem}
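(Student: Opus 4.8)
The plan is to show that given a stable matching $M$ of $\sm$, the natural "all-or-nothing" strategy profile in the CCG --- where each pair $(u,v) \in M$ has both $u$ and $v$ invest their entire budgets $B_u, B_v$ on the edge $(u,v)$, and all unmatched nodes invest nothing --- is a pairwise equilibrium. This profile clearly has the same assignment structure as $M$, and by definition of the edge rewards $r^u_{uv} = f^u_{uv}(B_u,B_v)$ it yields exactly the same total reward, so the only thing to verify is the equilibrium condition: no single player has a profitable unilateral deviation, and no pair of players has a profitable joint deviation.

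First I would dispose of unilateral deviations. If a matched player $u$ (say matched to $w$) reallocates some of its budget, convexity of $f_{uv} \in C_0$ in $u$'s argument together with $f_{uv}(0,y)=0$ implies that spreading budget across multiple edges can only hurt: for any split $B_u = \sum_e s_u(e)$, convexity and $f(0,\cdot)=0$ give $\sum_e f^u_{ue}(s_u(e), s_w(e)) \le f^u_{ue^*}(B_u, \cdot)$-type domination when the neighbors' contributions are fixed. Concretely, since $u$'s neighbors other than $w$ contribute $0$ to their shared edges in this profile, $u$ gets zero reward from any edge it shifts budget to; and on the edge to $w$, reducing its contribution from $B_u$ only weakly decreases $f^u_{uw}(s_u, B_w)$ by monotonicity. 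Hence no unilateral deviation helps, for matched or unmatched players alike (an unmatched player's neighbors all contribute $0$, so it earns $0$ no matter what).

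The main work is the pairwise deviations. Suppose $u$ and $v$ jointly deviate and both strictly gain. The key observation is that whatever budget $u$ puts on an edge $(u,x)$ with $x \neq v$: unless $x$ also moves budget, $u$ earns $0$ there (by $f(x,0)=0$), and in a pairwise deviation only $u$ and $v$ move. So the only edges from which $u$ can earn positive reward post-deviation are $(u,v)$ and, if $v$ happens to be a neighbor of... no --- only $(u,v)$, and symmetrically $v$ earns only from $(u,v)$ plus possibly its old partner if that partner still invests, which it does. So I would argue: the best joint deviation for $\{u,v\}$ is for each to pull all budget onto $(u,v)$ (for $u$'s contribution to any other edge is wasted), giving $u$ reward $f^u_{uv}(B_u,B_v) = r^u_{uv}$ and $v$ reward $f^v_{uv}(B_u,B_v) = r^v_{uv}$, while $v$ also potentially keeps whatever it was getting from its old partner $z$ --- but to put full budget on $(u,v)$, $v$ must abandon $z$, losing $r^v_{vz}$. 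Comparing to their status quo rewards (for $u$: $r^u_{uw}$; for $v$: $r^v_{vz}$), a strictly-improving joint move would exactly realize the biswivel/swivel inequalities \eqref{eqn:biswivel-RS-1}--\eqref{eqn:biswivel-RS-2} (resp.\ \eqref{eqn:swivel-RS}--\eqref{eqn:swivel-RS-2}) for $\vec{\alpha} = 0$, contradicting stability of $M$. One must also handle partial joint deviations where $u$ or $v$ does not fully commit to $(u,v)$; convexity again shows full commitment dominates, so it suffices to rule out the extreme case.

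The hard part will be organizing the case analysis for the pairwise deviation cleanly --- in particular arguing rigorously that when $u,v$ deviate jointly, $u$'s reward is maximized by concentrating all budget on $(u,v)$ \emph{even accounting for $v$'s simultaneous move}, since $v$'s budget on edges incident to $v$ other than $(u,v)$ and $(v,z)$ is likewise wasted, and $v$ cannot help $u$ on any edge except $(u,v)$. The convexity-plus-boundary argument ($f \in C_0$ convex with $f(x,0)=f(0,y)=0$ implies a superadditivity-type bound forcing concentration of effort) is the technical crux and should be stated as the governing principle, after which the reduction to the already-proven stability of $M$ is immediate. I would also remark that the converse direction (pairwise equilibrium $\Rightarrow$ stable matching) is where one actually uses that players need \emph{not} spend their whole budget, but that is separate from this theorem's claim.
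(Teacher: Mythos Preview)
Your overall strategy---take the all-or-nothing profile induced by $M$, dispose of unilateral deviations via $f(x,0)=0$, then reduce any profitable pairwise deviation to a blocking pair in $\sm$ using convexity---is exactly the paper's approach. But you have carried it out only for the case $\vec{\alpha}=\mathbf{0}$, and the theorem is stated (and proved in the paper) for arbitrary friendship parameters.

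Concretely, the gap is twofold. First, in the CCG with friendship a pairwise deviation is ``improving'' when both players' \emph{perceived utilities} $U_u, U_v$ strictly increase, not their raw rewards $R_u, R_v$. So when $u$ and $v$ jointly shift budget onto $(u,v)$, the change in $U_u$ involves not only $r^u_{uv}-r^u_{uw}$ but also $\alpha_1(r^v_{uv}-r^v_{vz})$, $-\alpha_1 r^w_{uw}$, and $-\alpha_{uz} r^z_{vz}$; these extra terms are precisely what produces the full friendship-biswivel inequalities~\eqref{eqn:biswivel-RS-1}--\eqref{eqn:biswivel-RS-2}, not their $\vec{\alpha}=\mathbf{0}$ specialization. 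Second, stability of $M$ is stability in $\sm$ \emph{with friendship}; deriving only the $\vec{\alpha}=\mathbf{0}$ blocking-pair condition does not contradict anything, since (as Section~\ref{subsec:existence-friendship-RS} shows) for unequal sharing the stable matchings with and without friendship need not be related.

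The fix is to track perceived utilities throughout. For a partial joint move of $(\epsilon_1,\epsilon_2)$ onto $(u,v)$, write out $\Delta U_u$ as a sum of $g^u_{uv}$-gain on $(u,v)$, $g^u_{uw}$-loss on $(u,w)$, and $\alpha_1 f^v_{vz} + \alpha_{uz} f^z_{vz}$-loss on $(v,z)$; then apply convexity of each $f$ in \emph{both} arguments (this is where the full $C_0$ hypothesis is used---your $\alpha=0$ argument only needs convexity in the player's own argument, cf.\ Corollary~\ref{cor:existence-pe-CCG-no-friendship}) to push $(\epsilon_1,\epsilon_2)$ up to $(B_u,B_v)$ while preserving the strict inequality. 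The resulting inequality is exactly~\eqref{eqn:biswivel-RS-1}, and symmetrically for $v$, contradicting stability of $M$ in $\sm$.
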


\begin{proof}
  Let $M$ be a stable matching in $\sm$ and consider the following
  strategy profile for the CCG: if node $u$ is matched to node $v$ in
  $M$, set $s_u(uv) = B_u$. If $u$ is not matched in $M$, then set
  $s_u(uv) = 0$ for all incident edges $(uv) \in E$. We will show
  that $s$ is a pairwise equilibrium. Obviously, $s$ has the same
  structure as $M$ and, in particular, yields the same total reward.

  First, note that $f^u_{uv}(x,y)$ is increasing and convex in
  \emph{both} arguments, which implies the same for
  $g^u_{uv}(x,y)$. Second, note that in $s$ for each edge we have both
  players contributing the full budget or nothing. Thus, players can
  deviate unilaterally or bilaterally only by reallocating budget onto
  edges $(uv) \not\in M$.

  Suppose two players $u$ and $v$ deviate and do not move any
  additional effort to their common edge $(uv)$ (because, e.g., $(uv)
  \not\in E$, or $(uv) \in M$ and both already spend all budget
  there). They cannot increase reward on incident edges in $M$ (if
  any), because they are spending their full budget. For every other
  incident edge $e \not\in M$, $e \neq (uv)$ they cannot increase the
  reward beyond 0, because the other player keeps putting 0
  effort. Hence, the only possibility to strictly improve their reward
  is when both players move some non-zero effort to $(uv)$. This, in
  particular, shows that there are no improving unilateral
  deviations.

  Hence, let us focus on bilateral deviations of players $u$ and $v$
  by moving some effort to a common edge. If both players are
  unmatched in $M$ and have such a improving deviation, this
  contradicts that $M$ is a stable matching. Hence, the following two
  cases remain.

  \begin{itemize}
  \item Suppose there exists a improving bilateral deviation onto
    $(uv) \not\in M$ and exactly one player, say $v$, is unmatched in
    $M$. Let $u$ be matched to $w$ in $M$. We assume that in the CCG
    $u$ and $v$ can improve by moving $\epsilon_1$ and $\epsilon_2$ of
    budget to $(uv)$, respectively. This implies
    \begin{eqnarray*}
      g^u_{uw}(B_u, B_w) &<& g^u_{uv}(\epsilon_1,\epsilon_2) + g^u_{uw}(B_u-\epsilon_1,B_w)
    \end{eqnarray*}
    As both $g^u_{uv}$ and $g^u_{uw}$ are convex in both arguments,
    this means that
    \begin{eqnarray*}
      g^u_{uw}(B_u,B_w) &<& g^u_{uv}(B_u,\epsilon_2) \quad < \quad g^u_{uv}(B_u,B_v)\enspace.
    \end{eqnarray*}
    For $SM(G,\vec{\alpha})$ this shows $g^u_{uv} > g^u_{uw}$, but
    then $v$ cannot be unmatched, because this would contradict that
    $M$ is a stable matching.

  \item Suppose there exists a proftiable bilateral deviation onto an
    edge $(uv) \not\in M$, where $u$ is matched to $w$ and $v$ to $z$
    in $M$. If $u$ and $v$ transfer $\epsilon_1$ and $\epsilon_2$ to
    $(uv)$, respectively, then for node $u$ we have
    \begin{eqnarray}
      g^u_{uv}(\ep_1,\ep_2)
      &>& g^u_{uw}(B_u,B_w)-g^u_{uw}(B_u-\ep_1,B_w) \nonumber\\
      & & + \alpha_1 (f^v_{vz}(B_v,B_z) - f^v_{vz}(B_v-\ep_2,B_z)) \nonumber\\
      & & + \alpha_{uz} (f^z_{vz}(B_v,B_z) - f^z_{vz}(B_v-\ep_2,B_z))\enspace,
      \label{eqn:support-221}
    \end{eqnarray}
    which formally states that there is a net increase in the utility
    of $u$ because of the transfer. Similarly for node $v$ we have
    \begin{eqnarray}
      g^v_{uv}(\ep_1,\ep_2)
      &>& g^v_{vz}(B_v,B_z)-g^v_{vz}(B_v-\ep_2,B_z) \nonumber\\
      & & + \alpha_1 (f^u_{uw}(B_u,B_w) - f^u_{uw}(B_u-\ep_1,B_w)) \nonumber\\
      & & + \alpha_{vw} (f^w_{uw}(B_u,B_w) - f^w_{uw}(B_u-\ep_1,B_w))\enspace.
      \label{eqn:support-223}
    \end{eqnarray}
    As all functions $f$ and $g$ are convex and increasing in both
    arguments, we get
    \begin{eqnarray}
      g^u_{uv}(B_u,B_v) &>& g^u_{uw}(B_u,B_w)
      + \alpha_1 f^v_{vz}(B_v,B_z)
      + \alpha_{uz} f^z_{vz}(B_v,B_z) \nonumber\\
      g^v_{uv}(B_u,B_v) &>& g^v_{vz}(B_v,B_z)
      + \alpha_1 f^u_{uw}(B_u,B_w)
      + \alpha_{vw} f^w_{uw}(B_u,B_w)\enspace, \nonumber
    \end{eqnarray}
    but then in $\sm$ the following must hold true
    \begin{eqnarray}
      q^u_{uv}&>& q^u_{uw}
      + \alpha_1 r^v_{vz}
      + \alpha_{uz} r^z_{vz}\nonumber\\
      q^v_{uv}&>& q^v_{vz}
      + \alpha_1 r^u_{uw}
      + \alpha_{vw} r^w_{uw}\enspace.\nonumber
    \end{eqnarray}
    This means that in $\sm$, nodes $u$ and $v$ would prefer getting
    matched to each other (see
    inequalities~\eqref{eqn:biswivel-RS-1}
    and~\eqref{eqn:biswivel-RS-2})), i.e., $M$ is not a
    stable matching in $\sm$ which contradicts our assumption.
  \end{itemize}
\qed \end{proof}

The conditions for existence of a pairwise equilibrium can be weakened
for CCGs without friendship. In this case, convexity of reward share
in the other player's contribution is not necessary.

\begin{corollary}\label{cor:existence-pe-CCG-no-friendship}
  If $f^u_{uv}(s_u(uv),s_v(uv))$ are increasing in $s_u(uv)$ and
  $s_v(uv)$ and convex in $s_u(uv)$, then for every stable matching of
  the corresponding game $SM(G,\vec{\alpha}=\mathbf{0})$, there is an
  equivalent pairwise equilibrium in the CCG without friendship. The
  pairwise equilibrium has the same assignment structure and total
  reward.
\end{corollary}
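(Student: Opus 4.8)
The plan is to follow the proof of Theorem~\ref{thm:existence-pe-CCG} essentially verbatim and simply track where convexity of $f^u_{uv}$ in the \emph{partner's} contribution was invoked, verifying that every such use disappears once $\vec{\alpha}=\mathbf{0}$. Given a stable matching $M$ of $SM(G,\mathbf{0})$, I would form the same ``all-or-nothing'' profile $s$: if $u$ is matched to $v$ in $M$ set $s_u(uv)=B_u$, and if $u$ is unmatched set $s_u(e)=0$ on every incident edge. Exactly as in Theorem~\ref{thm:existence-pe-CCG}, on each edge both endpoints contribute their full budget or nothing, so no player gains unilaterally (the partner contributes $0$ on any edge outside $M$ and $f_e(x,0)=0$), and the only candidates for a profitable joint move are bilateral deviations in which $u$ and $v$ shift positive amounts $\ep_1,\ep_2$ onto a common non-matching edge $(uv)\notin M$; these are what must be ruled out.

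The analytic core is a single elementary fact: if $h\colon[0,B]\to\mathbb{R}_{\ge 0}$ is convex and $h(\ep)>h(0)$ for some $\ep\in(0,B]$, then $h(B)>h(0)$. I would apply it with $h(t)=f^u_{uv}(t,\ep_2)+f^u_{uw}(B_u-t,B_w)$ (where $u$ is matched to $w$), which is convex in $t$ — a sum of $f^u_{uv}(\cdot,\ep_2)$, convex by hypothesis, and $f^u_{uw}(B_u-\cdot,B_w)$, convex as a convex function composed with an affine map — and satisfies $h(0)=f^u_{uw}(B_u,B_w)$ via $f_e(0,y)=0$. In the no-friendship game, profitability of the joint move for $u$ reads $f^u_{uv}(\ep_1,\ep_2)+f^u_{uw}(B_u-\ep_1,B_w)>f^u_{uw}(B_u,B_w)$: because $\alpha_1=0$ there are no weighted terms involving $v$'s other edges, so this inequality mentions only $u$'s own rewards and says exactly $h(\ep_1)>h(0)$. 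Hence $h(B_u)>h(0)$, i.e. $f^u_{uv}(B_u,\ep_2)>f^u_{uw}(B_u,B_w)$, and then monotonicity of $f^u_{uv}$ in its \emph{second} argument gives $f^u_{uv}(B_u,B_v)\ge f^u_{uv}(B_u,\ep_2)>f^u_{uw}(B_u,B_w)$, that is $r^u_{uv}>r^u_{uw}$. The symmetric argument for $v$ yields $r^v_{uv}>r^v_{vz}$, and if $u$ (resp.\ $v$) is unmatched in $M$ the loss term is absent and one instead gets $r^u_{uv}>0$ (resp.\ $r^v_{uv}>0$) from monotonicity alone. In every case these are precisely the conditions for $(u,v)$ to be a blocking pair in $SM(G,\mathbf{0})$ (the biswivel/swivel conditions and Lemma~\ref{lem:increase-edge-reward}, specialised to $\vec{\alpha}=\mathbf{0}$, cf.\ inequality~\eqref{eqn:swivel}), contradicting stability of $M$. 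Since $s$ has the same assignment structure as $M$, it has the same total reward, which finishes the argument.

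The one step I would be most careful about, and the place the whole weakening lives, is keeping the partial-to-full reduction within convexity in $u$'s \emph{own} contribution: the auxiliary convex function $h$ must bundle the gain on $(uv)$ together with the loss on $(uw)$ in the single variable $t=s_u(uv)$, and one must resist pushing $\ep_2$ up to $B_v$ before invoking convexity — doing so would reintroduce a partner-side loss term of the form $f^v_{vz}(B_v,B_z)-f^v_{vz}(B_v-\ep_2,B_z)$, which (with friendship present and coefficient $\alpha_1>0$) is exactly what forces convexity in the second argument in Theorem~\ref{thm:existence-pe-CCG}. Performing the $\ep_2\to B_v$ passage \emph{last}, by pure monotonicity, is what lets the hypothesis be weakened to convexity in $s_u(uv)$ only.
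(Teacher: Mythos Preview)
Your proposal is correct and follows essentially the same approach as the paper: rerun the proof of Theorem~\ref{thm:existence-pe-CCG} and observe that the only place partner-side convexity enters is in the $\alpha$-weighted loss terms of inequalities~\eqref{eqn:support-221}--\eqref{eqn:support-223}, which vanish when $\vec{\alpha}=\mathbf{0}$. Your write-up is more explicit than the paper's one-line pointer (in particular, the auxiliary function $h$ cleanly isolates why own-argument convexity suffices and monotonicity in the partner's argument handles the $\ep_2\to B_v$ step), but the idea is identical.
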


\begin{proof}
  We just need to observe that $f^u_{uv}(s_u(uv),s_v(uv))$ does not
  need to be convex in $s_v(uv)$. In the proof of the previous
  theorem, convexity in $s_v(uv)$ is only required in
  inequality~\eqref{eqn:support-221} and~\eqref{eqn:support-223} with
  the $\alpha_2$ coefficient.
\qed \end{proof}

We proceed to specify more detailed results for particular reward
sharing rules.

\paragraph{Equal Sharing}
We first consider equal sharing with (or without) friendship. In this
case, $\sm$ always has a stable matching from
Theorem~\ref{thm:existence-stable-matching}. Also, because
$f_{uv}(s_u(uv),s_v(uv))=f^u_{uv}(s_u(uv),s_v(uv))=
f^v_{uv}(s_u(uv),s_v(uv))$ and $f_{uv}(\cdot,\cdot)\in C_0$, we also
have $f^u_{uv}(\cdot,\cdot)\in C_0$ and $f^u_{uv}(\cdot,\cdot)\in
C_0$. Hence all the conditions for existence of a pairwise equilibrium
are satisfied. The following corollary extends a main result
from~\cite{AnshelevichAlgo11} to CCGs with arbitrary friendship.
\begin{corollary}\label{cor:existence-pe-CCG-equal-sharing}
  A pairwise equilibrium always exists in a CCG with equal sharing.
\end{corollary}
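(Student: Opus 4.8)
The plan is to obtain this corollary as an immediate combination of two results already established in the excerpt: Theorem~\ref{thm:existence-stable-matching}, which guarantees a stable matching in $\sm$ for equal sharing with arbitrary $\vec{\alpha}$, and Theorem~\ref{thm:existence-pe-CCG}, which lifts any stable matching of the corresponding $\sm$ to a pairwise equilibrium of the CCG. So the only real work is to check that the hypotheses of Theorem~\ref{thm:existence-pe-CCG} are met under equal sharing.

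First I would recall that the corresponding stable matching game $\sm$ for a CCG with equal sharing has edge rewards $r_{uv}=f_{uv}(B_u,B_v)$, and that equal sharing means $f^u_{uv}=f^v_{uv}=f_{uv}$ as functions. Since every reward function $f_{uv}$ of the CCG lies in the class $C_0$ (nondecreasing, convex in each argument, vanishing when either argument is $0$), it follows that each $f^u_{uv}=f_{uv}\in C_0$ as well; thus the blanket assumption ``$f^u_{uv}(\cdot,\cdot)\in C_0$'' of Theorem~\ref{thm:existence-pe-CCG} holds.

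Next, by Theorem~\ref{thm:existence-stable-matching}, $\sm$ admits a stable matching $M$ (indeed, any stable matching of the correlated roommate instance with weights $r_{uv}$ remains stable with friendship utilities). Applying Theorem~\ref{thm:existence-pe-CCG} to this $M$ yields a pairwise equilibrium of the CCG with the same assignment structure and total reward, which establishes existence. I would then note as a remark that the reduction even preserves efficiency: combined with the price-of-stability result for $\sm$, the induced pairwise equilibrium inherits the corresponding welfare guarantee.

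There is essentially no obstacle here beyond bookkeeping; the one point to state carefully is that equal sharing is the special case $r^u_{uv}=r^v_{uv}$, so that $C_0$-membership of the single function $f_{uv}$ transfers to both share functions and no new convexity condition is needed.
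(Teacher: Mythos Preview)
Your proposal is correct and matches the paper's argument essentially verbatim: the paper also observes that equal sharing gives $f^u_{uv}=f^v_{uv}=f_{uv}\in C_0$, invokes Theorem~\ref{thm:existence-stable-matching} to obtain a stable matching in $\sm$, and then applies Theorem~\ref{thm:existence-pe-CCG}. The extra remark you make about inheriting the price-of-stability guarantee is a nice addition and is in fact stated separately by the paper as a later corollary.
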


\paragraph{Matthew Effect Sharing}
Next, let us consider Matthew Effect CCGs defined as follows: Each
node $u\in G$ has an associated brand value $\lambda_u$. If nodes $u$
and $v$ invest $s_u(uv)$ and $s_v(uv)$ respectively on edge $(uv)$,
then $u$ obtains a reward of
\[ f^u_{uv}(s_u(uv),s_v(uv)) =
\frac{\lambda_u}{\lambda_u+\lambda_v}f_{uv}(s_u(uv),s_v(uv))\]
from edge $(uv)$. Consequently in Matthew Effect CCG, we have
\[ g^u_{uv}(s_u(uv),s_v(uv)) = \frac{\lambda_u+\alpha_1 \lambda_v}
{\lambda_u+\lambda_v}f_{uv}(s_u(uv),s_v(uv))\enspace.\]

It can be easily seen that in Matthew Effect CCG,
$f^u_{uv}(\cdot,\cdot)$ are increasing and convex in the investment of
$u$ and $v$. Hence we have the following corollaries from
Theorem~\ref{thm:existence-pe-CCG} and Lemma~\ref{thm:cycle-MEM-TM}.
\begin{corollary}
  For every stable matching of the corresponding game $\sm$ with
  Matthew Effect Sharing, there is an equivalent pairwise equilibrium
  in the Matthew Effect CCG. The pairwise equilibrium has the same
  assignment structure and total reward.
\end{corollary}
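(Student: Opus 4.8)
The plan is to derive this corollary directly from the general reduction of Theorem~\ref{thm:existence-pe-CCG}, so the only real work is to check that Matthew Effect reward shares satisfy the hypothesis of that theorem, namely that every $f^u_{uv}(\cdot,\cdot)$ lies in the class $C_0$ of functions that are nondecreasing and convex in each argument and vanish whenever either argument is $0$.

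First I would observe that, by definition of Matthew Effect sharing, $f^u_{uv}(x,y) = \frac{\lambda_u}{\lambda_u+\lambda_v}\,f_{uv}(x,y)$ is obtained from the underlying edge reward function $f_{uv}$ by multiplying by the nonnegative constant $c^u_{uv} = \lambda_u/(\lambda_u+\lambda_v) \in [0,1]$. Since $f_{uv}\in C_0$ by the standing assumption on CCG reward functions, and multiplication by a nonnegative constant preserves (i) monotonicity in each argument, (ii) convexity in each argument, and (iii) the boundary conditions $f_{uv}(x,0)=f_{uv}(0,y)=0$, it follows that $f^u_{uv}\in C_0$ as well. For good measure, the same reasoning shows that $g^u_{uv}(x,y) = \frac{\lambda_u+\alpha_1\lambda_v}{\lambda_u+\lambda_v}\,f_{uv}(x,y)$ is also a nonnegative multiple of $f_{uv}$, hence convex and increasing in both arguments, which is exactly what the proof of Theorem~\ref{thm:existence-pe-CCG} actually manipulates.

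With the hypothesis verified, I would simply invoke Theorem~\ref{thm:existence-pe-CCG}. Since the corresponding game $\sm$ with Matthew Effect sharing is defined using the same vector $\vec{\alpha}$, every stable matching $M$ of $\sm$ yields, via the strategy profile that puts each matched node's entire budget on its $M$-edge and zero on all other incident edges, a pairwise equilibrium of the Matthew Effect CCG with the identical assignment structure and therefore the identical total reward $\sum_v R_v$. That is precisely the asserted statement.

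I do not expect a substantial obstacle here; the corollary is essentially a specialization of an already-established reduction. The only points warranting a moment's care are confirming the boundary condition $f^u_{uv}(x,0)=f^u_{uv}(0,y)=0$ (immediate from scaling) and noting that the notion of ``equivalent''---same assignment structure and same total reward---carries over verbatim from Theorem~\ref{thm:existence-pe-CCG}. If, in addition, one wants \emph{existence} of a pairwise equilibrium in the Matthew Effect CCG, one combines this reduction with Theorem~\ref{thm:cycle-MEM-TM}, which guarantees that $\sm$ under Matthew Effect sharing has a stable matching in the first place.
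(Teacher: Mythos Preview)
Your proposal is correct and follows essentially the same approach as the paper: verify that the Matthew Effect share functions $f^u_{uv}=\frac{\lambda_u}{\lambda_u+\lambda_v}f_{uv}$ inherit membership in $C_0$ from $f_{uv}$ (the paper simply asserts this is ``easily seen''), and then invoke Theorem~\ref{thm:existence-pe-CCG}. Your optional remark about combining with Theorem~\ref{thm:cycle-MEM-TM} to get existence is also what the paper does, though note that Theorem~\ref{thm:cycle-MEM-TM} only guarantees a stable matching in $\sm$ without friendship, which is why the subsequent existence corollary in the paper is stated only for Matthew Effect CCGs without friendship.
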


As a special case, we have guaranteed existence for Matthew Effect
CCGs  without friendship.

\begin{corollary}
  A pairwise equilibrium always exists in Matthew Effect CCGs without
  friendship.
\end{corollary}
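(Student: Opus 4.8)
The plan is to chain together the existence result for Matthew Effect stable matchings with the matching-to-equilibrium reduction. First I would apply Theorem~\ref{thm:cycle-MEM-TM}: since the friendship-free case is exactly the instance $\vec{\alpha}=\mathbf{0}$, and no preference cycles arise under Matthew Effect sharing, the greedy procedure of Theorem~\ref{thm:ring-cycle} produces a stable matching $M$ in the corresponding game $SM(G,\mathbf{0})$.

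Next I would verify that the hypotheses of the reduction apply. In a Matthew Effect CCG, $f^u_{uv}(s_u(uv),s_v(uv)) = \frac{\lambda_u}{\lambda_u+\lambda_v}\, f_{uv}(s_u(uv),s_v(uv))$, and since $\frac{\lambda_u}{\lambda_u+\lambda_v}$ is a positive constant while $f_{uv}\in C_0$, the share $f^u_{uv}$ is nondecreasing in both arguments, convex in $s_u(uv)$ (indeed in both arguments), and vanishes whenever either argument is $0$. Thus the hypotheses of Corollary~\ref{cor:existence-pe-CCG-no-friendship} are satisfied.

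Applying Corollary~\ref{cor:existence-pe-CCG-no-friendship} to $M$ then yields an equivalent pairwise equilibrium in the Matthew Effect CCG without friendship --- with the same assignment structure and total reward --- and in particular establishes that a pairwise equilibrium exists. Alternatively, one may invoke the immediately preceding corollary, which specializes Theorem~\ref{thm:existence-pe-CCG} to Matthew Effect sharing, and restrict attention to $\vec{\alpha}=\mathbf{0}$.

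There is no genuine obstacle here: the statement is simply the conjunction of Theorem~\ref{thm:cycle-MEM-TM} (a stable matching always exists under Matthew Effect sharing) with the friendship-free reduction of Corollary~\ref{cor:existence-pe-CCG-no-friendship}. The only step needing a line of care is confirming that the Matthew Effect share function meets the (weak) monotonicity and convexity requirements of that reduction, which it does because multiplying a function in $C_0$ by a positive constant preserves all of these properties.
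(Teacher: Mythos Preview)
Your proposal is correct and follows essentially the same approach as the paper: the corollary is obtained by combining Theorem~\ref{thm:cycle-MEM-TM} (existence of a stable matching under Matthew Effect sharing) with the matching-to-equilibrium reduction of Theorem~\ref{thm:existence-pe-CCG} (or its friendship-free Corollary~\ref{cor:existence-pe-CCG-no-friendship}), after observing that the Matthew Effect share functions inherit the required monotonicity and convexity from $f_{uv}\in C_0$. The paper states this as an immediate consequence and does not spell out more than you have.
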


\paragraph{Proportional Sharing}
Finally, let us consider a natural model of sharing that is specific
to CCGs (this model was not considered in Section \ref{sec:reward-sharing}). In Proportional Sharing CCG, the reward a node gets is proportional to
the effort it contributes to an edge. In other words, if nodes $u$ and
$v$ invest $s_u(uv)$ and $s_v(uv)$ respectively on edge $(uv)$, then
$u$ gets a reward of
\[ f^u_{uv}(s_u(uv),s_v(uv)) =
\frac{s_u(uv)}{s_u(uv)+s_v(uv)}f_{uv}(s_u(uv),s_v(uv))\]
from edge $(uv)$. Consequently, in Proportional Sharing CCG, we have
\[ g^u_{uv}(s_u(uv),s_v(uv)) = \frac{s_u(uv)+\alpha_1 s_v(uv)}
{s_u(uv)+s_v(uv)}f_{uv}(s_u(uv),s_v(uv))\enspace.\]

For a proportional sharing CCGs, it can be verified that
$f^u_{uv}(s_u(uv),s_v(uv))$ are increasing in $s_u(uv)$ and $s_v(uv)$
and convex in $s_u(uv)$. It is easy to observe that the corresponding
stable matching game is, in fact, an instance of the Matthew Effect
model with $\lambda_u=B_u$ for all nodes $u\in G$. Hence, without
friendship a stable matching always exists in the stable matching game
and provides the following corollary based on
Corollary~\ref{cor:existence-pe-CCG-no-friendship}:
\begin{corollary}
  For every stable matching of the corresponding game
  $SM(G,\vec{\alpha}=\mathbf{0})$ with Matthew Effect Sharing, there
  is an equivalent pairwise equilibrium in the Proportional Sharing
  CCG without friendship. The pairwise equilibrium has the same
  assignment structure and total reward. There always exists at least
  one such stable matching with corresponding pairwise equilibrium.
\end{corollary}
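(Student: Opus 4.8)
The plan is to derive this corollary from Corollary~\ref{cor:existence-pe-CCG-no-friendship} together with the existence result of Theorem~\ref{thm:cycle-MEM-TM}. The first step is to identify the corresponding stable matching game $SM(G,\vec{\alpha}=\mathbf{0})$ explicitly. By definition its edge rewards are the rewards the two endpoints collect when each invests its full budget, so for node $u$ on edge $(uv)$ we have $r^u_{uv}=f^u_{uv}(B_u,B_v)=\frac{B_u}{B_u+B_v}\,f_{uv}(B_u,B_v)$. Setting $\lambda_u=B_u$ and $r_{uv}=f_{uv}(B_u,B_v)$, this is precisely Matthew Effect sharing with brand values $\lambda_u=B_u$; hence by Theorem~\ref{thm:cycle-MEM-TM} the corresponding game has no preference cycles and always admits a stable matching. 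This already gives the last sentence of the statement, once the first part is in place.

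The second step is to verify that the Proportional Sharing reward-share function $f^u_{uv}(x,y)=\frac{x}{x+y}f_{uv}(x,y)$ (with $x,y$ the contributions of $u,v$ to edge $(uv)$) meets the hypotheses of Corollary~\ref{cor:existence-pe-CCG-no-friendship}: nondecreasing in $x$ and in $y$, and convex in $x$. Monotonicity in $x$ is immediate, as $f^u_{uv}$ is a product of two nonnegative nondecreasing functions of $x$. For monotonicity in $y$ one uses that $f_{uv}$ is convex in its second argument with $f_{uv}(x,0)=0$, so the chord inequality yields $f_{uv}(x,y)/y\le \partial_y f_{uv}(x,y)$, and since $1/(x+y)\le 1/y$ this is exactly the inequality needed to conclude $\partial_y f^u_{uv}(x,y)\ge 0$. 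Convexity of $f^u_{uv}$ in $x$ is the only point requiring a genuine computation: fixing $y$ and using the chord inequality $f_{uv}(x,y)\le x\,\partial_x f_{uv}(x,y)$ (which is where $f_{uv}(0,y)=0$ enters) together with convexity of $f_{uv}$ in $x$, a direct second-derivative check confirms $\partial_x^2 f^u_{uv}(x,y)\ge 0$. I expect this last check --- specifically making the boundary condition $f_{uv}(0,\cdot)=0$ do the work --- to be the main obstacle; everything else is routine.

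With both steps established, the proof concludes by invoking Corollary~\ref{cor:existence-pe-CCG-no-friendship}: its hypotheses on $f^u_{uv}$ hold, and by the first step its corresponding game $SM(G,\vec{\alpha}=\mathbf{0})$ is exactly the Matthew Effect stable matching game with $\lambda_u=B_u$. Therefore every stable matching of that game lifts to an equivalent pairwise equilibrium of the Proportional Sharing CCG without friendship, with the same assignment structure and total reward. Combining this with the existence of at least one stable matching in the Matthew Effect game (Theorem~\ref{thm:cycle-MEM-TM}) yields the final claim that such a stable matching together with its corresponding pairwise equilibrium always exists.
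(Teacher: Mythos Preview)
Your proposal is correct and follows essentially the same route as the paper: identify the corresponding stable matching game as Matthew Effect sharing with $\lambda_u=B_u$, verify the monotonicity and convexity hypotheses on $f^u_{uv}$ needed for Corollary~\ref{cor:existence-pe-CCG-no-friendship}, and then invoke Theorem~\ref{thm:cycle-MEM-TM} for existence. The paper simply asserts that the required properties of $f^u_{uv}$ ``can be verified,'' whereas you carry out the verification explicitly via the chord inequalities that exploit $f_{uv}(0,\cdot)=f_{uv}(\cdot,0)=0$; this is a welcome addition of detail but not a different argument.
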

%======================================================================
\subsection{Prices of Anarchy and Stability}
\label{subsec:poa-pos-CCG}

In the previous section, we have seen that stable matchings can easily
be translated into pairwise equilibria for CCGs. However, there could
potentially be other pairwise equilibria that are, in particular, much
worse in terms of total reward. In this section, we show that this is
not the case and translate the bounds for prices of anarchy and
stability from stable matching to CCGs. Thus, these bounds apply for all reward functions $f^u_{uv}(\cdot,\cdot) \in C_0$. 

Let us define a \emph{tight edge} as an edge on which both of the
nodes invest their full budget. The social optimum with maximum total
reward does not depend on the reward sharing scheme at hand or on the
values of $\vec{\alpha}$. Thus, Claim 2.10 in~\cite{AnshelevichAlgo11}
shows that there always exists a \emph{tight social optimum}, i.e., a
social optimum $s^*$ such that players invest only in tight edges.
In particular, as the CCG allows more flexibility than the
corresponding stable matching game $\sm$, a tight social optimum in
the CCG is in 1-to-1 correspondence to a social optimum in
$\sm$. Whenever stable matchings in $\sm$ correspond to pairwise
equilibria in the CCG and a tight social optimum in the CCG
corresponds to a social optimum in the corresponding $\sm$, we can
directly translate our upper bounds on the price of stability to
CCGs. This implies the following corollary.

\begin{corollary}
  The price of stability in CCGs with equal sharing and friendship is
  at most $\frac{2+2\alpha_1}{1+2\alpha_1+\alpha_2}$. \BRBP\ starting
  from a tight social optimum converges in polynomial time to a
  pairwise equilibrium that achieves this bound.
\end{corollary}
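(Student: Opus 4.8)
The plan is to prove this entirely by reduction to the corresponding stable matching game $\sm$, reusing the machinery already developed. First I would invoke Claim 2.10 of~\cite{AnshelevichAlgo11} to fix a \emph{tight} social optimum $s^*$ of the CCG, i.e., one in which every player invests only on tight edges. Since for equal sharing $r_{uv} = f_{uv}(B_u,B_v)$, the tight edges used by $s^*$ form a matching $M^*$ of $\sm$ whose weight $w(M^*)$ equals the social-optimum value of the CCG; conversely, every matching of $\sm$ lifts to the CCG strategy profile in which matched players invest their full budgets on their shared edge and unmatched players invest nothing, and this lift preserves total reward. Thus a tight social optimum of the CCG is in $1$-to-$1$ correspondence with a social-optimum matching $M^*$ of $\sm$, and it suffices to exhibit a pairwise equilibrium of the CCG with total reward at least $\frac{1+2\alpha_1+\alpha_2}{2+2\alpha_1}\, w(M^*)$.

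Next I would run \BRBP\ on $\sm$ starting from $M^*$. Equal sharing is the special case of the general model with $r^u_{uv} = r^v_{uv} = r_{uv}/2$, and since every $f_{uv} \in C_0$ we also have $f^u_{uv} \in C_0$, so the hypotheses of Theorem~\ref{thm:existence-pe-CCG} hold. By Theorem~\ref{thm:convergence-bbp}, \BRBP\ terminates after $O(m^2)$ iterations at a stable matching $M$ of $\sm$, and by Theorem~\ref{thm:pos-friendship-equal-sharing} this $M$ satisfies $w(M^*) \le \frac{2+2\alpha_1}{1+2\alpha_1+\alpha_2}\, w(M)$. Finally, Theorem~\ref{thm:existence-pe-CCG} turns $M$ into a pairwise equilibrium $s$ of the CCG with the same assignment structure, hence the same total reward $w(M)$. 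Chaining these facts, the price of stability of the CCG is at most $\frac{2+2\alpha_1}{1+2\alpha_1+\alpha_2}$.

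For the convergence claim I would observe that each matching visited by \BRBP\ lifts, via the correspondence above, to a CCG strategy profile; the sequence has polynomial length by Theorem~\ref{thm:convergence-bbp}, and its terminal profile is precisely the pairwise equilibrium $s$ produced by Theorem~\ref{thm:existence-pe-CCG}. Each swivel or (relaxed) biswivel step of $\sm$ corresponds to reallocating budget onto the newly matched edge and away from the at most two dropped edges, so the whole process is realizable directly as a sequence of strategy changes in the CCG, starting from the tight social optimum $s^*$.

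The main obstacle here is not a deep difficulty but careful bookkeeping of social-optimum values across the two models: one must verify that the extra flexibility of the CCG (splitting a budget over several edges, or investing only partially) cannot outperform the best tight configuration — exactly the content of the tight-social-optimum claim of~\cite{AnshelevichAlgo11} — and that lifting a matching back to the CCG introduces no new improving bilateral deviations, which is the content of Theorem~\ref{thm:existence-pe-CCG} and relies on convexity of each $f_{uv}$ in both arguments. Once these two points are in hand, the three inequalities (optimum $=w(M^*)$, price-of-stability bound in $\sm$, reward preservation under the lift) compose to give the stated bound.
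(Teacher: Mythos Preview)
Your proposal is correct and follows essentially the same approach as the paper, which derives the corollary directly from the paragraph preceding it: the tight social optimum of the CCG corresponds to a social optimum of $\sm$ (via Claim~2.10 of~\cite{AnshelevichAlgo11}), \BRBP\ on $\sm$ yields a stable matching within the stated factor (Theorems~\ref{thm:convergence-bbp} and~\ref{thm:pos-friendship-equal-sharing}), and Theorem~\ref{thm:existence-pe-CCG} lifts that matching to a pairwise equilibrium of equal total reward. Your write-up is simply a more explicit unpacking of that sketch, with the additional (harmless) observation that the intermediate matchings can themselves be viewed as CCG profiles.
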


For the price of anarchy, we could possibly have worse equilibria in
the CCG that do not correspond to matchings in $\sm$. However, the
same bound as in Theorem~\ref{thm:poa-friendship-RS} can be proved. We
use parameter $Q$ (as detailed in Section~\ref{sec:friendship-RS}) for
the corresponding stable matching game $\sm$.

\begin{theorem}\label{thm:poa-CCG}
  The price of anarchy in CCGs is bounded by $\text{PoA}\leq Q+1$. % and this is tight.
\end{theorem}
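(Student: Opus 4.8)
The plan is to mimic the price-of-anarchy argument for stable matching (Theorem~\ref{thm:poa-friendship-RS}) directly in the CCG setting, working with a tight social optimum on the optimum side and an arbitrary pairwise equilibrium on the equilibrium side. Let $s$ be any pairwise equilibrium in the CCG and let $s^*$ be a tight social optimum, which exists by Claim~2.10 of~\cite{AnshelevichAlgo11}; by the 1-to-1 correspondence mentioned above, $s^*$ corresponds to a matching $M^*$ in $\sm$ in which every matched pair invests full budget. The social welfare of $s^*$ equals $\sum_{(uv)\in M^*} g_{uv}(B_u,B_v)/(1+\alpha_1)$ up to the usual scaling, so as in~\eqref{eqn:support-29} it suffices to bound, node by node, the $g$-values collected under $s$ against those collected under $s^*$.

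The key step is the analogue of the three-case analysis following~\eqref{eqn:support-29}. Fix an edge $(uv)$ on which $s^*$ puts full budget from both endpoints but which carries less than full joint investment under $s$ (if $(uv)$ is also tight under $s$, it contributes equally to both sides and can be ignored). Since $s$ is a pairwise equilibrium, $u$ and $v$ cannot both strictly gain by transferring effort onto $(uv)$; pick the endpoint, say $u$, whose perceived utility does not strictly increase under the transfer of $\epsilon_1$ (from $u$) and $\epsilon_2$ (from $v$, its full remaining slack, which suffices by convexity) onto $(uv)$. Exactly as in the proof of Theorem~\ref{thm:existence-pe-CCG}, convexity of $f^u_{uv}$ and $f^u_{uw}$ in both arguments upgrades the infinitesimal-deviation inequality to the full-budget inequality
\[
 g^u_{uv}(B_u,B_v) \;\le\; g^u_{uw}(B_u,B_w) \;+\; \alpha_1 f^v_{vz}(s_v(vz),s_z(vz)) \;+\; \alpha_{uz} f^z_{vz}(s_v(vz),s_z(vz)),
\]
where $w$ and $z$ are the ``partners'' of $u$ and $v$ under $s$ (if $u$ or $v$ has no tight partner edge under $s$, drop the corresponding terms as in~\eqref{eqn:support-512} and~\eqref{eqn:support-513}). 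Using $\alpha_1 f^v_{vz} + \alpha_{uz} f^z_{vz} \le f^v_{vz} + \alpha_1 f^z_{vz} \le g^v_{vz} \le w_v$ exactly as in the stable-matching proof, and $g^u_{uw}(B_u,B_w) \le w_u$ (the reward $u$ actually gets at $s$, since that edge may or may not be tight — here we just need that the contribution of $u$ and $v$ to $s$'s welfare is at least their $g$-value on these edges), we obtain $w^*_u + w^*_v \le (1+Q)(w_u + w_v)$ by the same algebraic manipulation that produced~\eqref{eqn:support-40}, using $q^u_{uv}/q^v_{uv} \le Q$ specialized to $g^u_{uv}(B_u,B_v)/g^v_{uv}(B_u,B_v)$, which is $\le Q$ because this ratio equals $(p+\alpha_1)/(1+\alpha_1 p)$ with $p = f^u_{uv}(B_u,B_v)/f^v_{uv}(B_u,B_v)$ and this is an increasing function bounded by the edge's reward-share ratio, which is $\le R$ for the corresponding $\sm$. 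Summing over all such edges and over the set $B$ of nodes matched differently, as in~\eqref{eqn:support-50}, gives $\text{PoA} \le Q+1$.

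The main obstacle is bookkeeping about which edges are ``tight'' under the equilibrium $s$ and making sure each node is charged on its right-hand side at most once against its actual reward $w_u$ rather than $g^u_{uw}(B_u,B_w)$ for an edge it does not fully invest in: a node might spread its budget over several non-tight edges under $s$, so ``the partner $w$ of $u$'' is not well-defined the way it is in a matching. The resolution is that we only ever need the \emph{lower} bound $w_u \ge g^u_{uw}(B_u,B_w)$ — and in fact we do not need the full-budget edge at all: the deviation inequality directly bounds $g^u_{uv}(B_u,B_v)$ against $u$'s current perceived utility contribution $g^u_{uw}(B_u-\epsilon_1,B_w) + \alpha_1 f^v_{vz}(B_v-\epsilon_2,B_z) + \cdots$, which is at most $w_u + (\text{neighbor terms bounded by } w_v)$. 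So we should phrase the whole argument in terms of ``the perceived-utility loss of $u$ from the deviation is at most $w_u$ plus a quantity $\le w_v$,'' never invoking a partner structure on the $s$ side; the convexity upgrade is the only place full budgets enter, and there it is applied to the single edge $(uv)$ and to $u$'s own incident edges, exactly as in Theorem~\ref{thm:existence-pe-CCG}. With that phrasing the stable-matching proof transfers essentially verbatim.
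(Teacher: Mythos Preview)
Your overall plan (tight optimum on one side, witness a non-improving endpoint for each optimum edge, bound by $w_u+w_v$, then use $g^u/g^v\le Q$) is exactly the paper's, but you take an unnecessary detour that introduces a real gap before you patch it.

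The gap is in the ``convexity upgrade'' step. In Theorem~\ref{thm:existence-pe-CCG} that upgrade works only because the profile $s$ there \emph{is} a matching: $u$ has all its budget on a single edge $(uw)$, so shifting $\epsilon_1$ away from $(uw)$ and invoking convexity of $g^u_{uw}$ in the first argument lets you pass from the marginal to the full-budget inequality. In the price-of-anarchy theorem $s$ is an arbitrary pairwise equilibrium; there is no partner $w$, and the displayed inequality with a single $g^u_{uw}(B_u,B_w)$ on the right is simply not well-defined. You recognize this in your ``main obstacle'' paragraph, but your proposed patch still speaks of $g^u_{uw}(B_u-\epsilon_1,B_w)$ and of ``applying convexity to $u$'s own incident edges,'' which is not what Theorem~\ref{thm:existence-pe-CCG} does and would require a separate multi-edge argument you have not given.

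The paper bypasses all of this by choosing a \emph{particular} bilateral deviation: both $u$ and $v$ move their \emph{entire} budgets to $(uv)$. Because $f^u_e(0,\cdot)=0$ for every $e$, the post-deviation reward on all of $u$'s other incident edges is exactly zero, so the ``not improving for $u$'' condition reads directly as
\[
g^u_{uv}(B_u,B_v)\;\le\;\sum_{y\in N_1(u)} g^u_{uy}(s_u(uy),s_y(uy))\;+\;\sum_{z\in N_1(v)\setminus\{u\}}\bigl(\alpha_1 f^v_{vz}+\alpha_{uz}f^z_{vz}\bigr)(s_v(vz),s_z(vz)),
\]
which is $\le w_u+w_v$ after the same $\alpha_1 f^v+\alpha_{uz}f^z\le g^v$ bound you use. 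No partner structure is invoked, and no convexity is needed beyond what guarantees a tight optimum. Replacing your infinitesimal-deviation-plus-upgrade step with this single full-budget deviation makes the argument both correct and short.
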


\begin{proof}
  It suffices to compare against a tight social optimum which we
  denote by $s^*$. Let $s$ denote a pairwise equilibrium. Let
  $s_u(uv)$ denote the investment of node $u$ on edge $(uv)$ in the
  pairwise equilibrium $s$. Let us define $w_u$ as:
  \begin{eqnarray}
    w_u = \sum\limits_{z\in N_1(u)} g^u_{uz}(s_u(uz),s_z(uz))
  \end{eqnarray}
  Using the definition of $w_u$, it can be verified that the total
  reward $w(s)$ of $s$ is given by
  \begin{eqnarray}
    w(s) = \frac{1}{1+\alpha_1}\sum_{u\in G}w_u
  \end{eqnarray}
  and the total reward $w(s^*)$ of $s^*$ is
  \begin{eqnarray}
    w(s^*)=\frac{1}{1+\alpha_1}\sum_{(uv)\in s^*}g_{uv}(B_u,B_v)
  \end{eqnarray}
  where by $(uv)\in s^*$ we mean the tight edges in $s^*$.  Hence the
  price of anarchy can also be expressed as
  \begin{eqnarray}
    \text{PoA  } = \max_{s}\frac{w(s^*)}{w(s)}
    = \max_{s}\frac{\sum_{(uv)\in s^*}g_{uv}(B_u,B_v)}
    {\sum_{u\in G}w_u}
    \label{eqn:support-230}
  \end{eqnarray}
  We will use this alternative expression for the price of anarchy for
  proving the claim.

  Let us construct a set of \textit{witness} nodes in one-to-one
  correspondence with tight edges in $s^*$ as follows: For each tight
  edge $(uv)$ of $s$ we make either $u$ or $v$ a witness for $(uv)$. On other
  edges, we have either $s_u(uv) < B_u$ or $s_v(uv) < B_v$. Now as
  $s$ is a pairwise equilibrium, if $u$ and $v$ both transfer their
  full budget to $(uv)$, utility of at least one -- we w.l.o.g.\
  assume node $u$ -- will not increase, and we make $u$ witness for
  $(uv)$.

  As the deviation towards $(uv)$ is not improving, we can examine
  the utility $u$ and bound
  \begin{eqnarray*}
    g^u_{uv}(B_u,B_v)
    &\leq&
    \sum\limits_{y\in N_1(u)} g^u_{uy}(s_u(uy), s_y(uy))
    + \alpha_1 \sum\limits_{z\in N_1(v)-u}
    f^v_{vz}(s_v(v), s_z(vz))
    \nonumber \\
    & &+ \sum\limits_{z\in N_1(v)-u} \alpha_{uz} f^z_{vz}(s_v(vz), s_z(vz))
  \end{eqnarray*}

  As a consequence of $\alpha_1 f^v_{vz}(s_v(vz), s_z(vz))+ \alpha_{uz}
  f^z_{vz}(s_v(vz), s_z(vz)) \leq \alpha_1 f^v_{vz}(s_v(vz),
  s_z(vz))+ \alpha_1 f^z_{vz}(s_v(vz), s_z(vz)) \leq
  f^v_{vz}(s_v(vz), s_z(vz))+ \alpha_1 f^z_{vz}(s_v(vz), s_z(vz))
  = g^v_{vz}(s_v(vz),s_z(vz))$, we obtain

  \begin{eqnarray}
    g^u_{uv}(B_u,B_v)
    &\leq&
    \sum\limits_{y\in N_1(u)} g^u_{uy}(s_u(uy), s_y(uy))
    + \sum\limits_{z\in N_1(v)-u} g^v_{vz}(s_v(vz), s_z(vz))
    \label{eqn:support-232}\\
    \Rightarrow
    \cfrac{1}{1+Q} \cdot g_{uv}(B_u,B_v)
    &\leq&
    \sum\limits_{y\in N_1(u)} g^u_{uy}(s_u(uy), s_y(uy))
    + \sum\limits_{z\in N_1(v)} g^v_{vz}(s_v(vz), s_z(vz))
    \\
    \Rightarrow
    \cfrac{1}{1+Q} \cdot g_{uv}(B_u,B_v)
    &\leq& w_u + w_v \label{eqn:support-233}
  \end{eqnarray}

  Note that the last inequality~\eqref{eqn:support-233} also holds for
  tight edges $(uv)$.  Thus we have one inequality due to witnessing
  each edge. Adding these inequalities,
  \begin{eqnarray*}
    \cfrac{1}{1+Q} \sum\limits_{(uv)\in s^*} g_{uv}(B_u,B_v)
    &\leq& \sum\limits_{(uv)\in s^*} (w_u + w_v)
    \leq \sum\limits_{u\in G} w_u
  \end{eqnarray*}
  Hence we get
  \begin{eqnarray*}
    \frac{\sum_{(uv)\in s^*}g_{uv}(B_u,B_v)} {\sum_{u\in G}w_u}
    \leq 1+Q
  \end{eqnarray*}
  As this is valid for any pairwise equilibrium $s$, using
  Eqn.~\eqref{eqn:support-230}, we can complete the proof by finding
  \begin{eqnarray*}
    \text{PoA  }\leq 1+Q\enspace.
  \end{eqnarray*}
  %Tightness follows by translating the instance yielding tightness in
%  Theorem~\ref{} to a pairwise equilibrium in a suitably defined CCG.
\qed \end{proof}

%\begin{thm} \label{thm:pos-CCG}
%$Q < Q^\prime \leq PoS \leq 1+Q$
%\end{thm}
%
%\begin{proof}
%  The upper bound is trivial from Theorem~\ref{thm:poa-CCG}. For the
%  lower bound, we will use the graph as shown in
%  Fig~\ref{fig:poa-tightness-friendship-RS} with the reward functions
%  as follows:
%
%
%
%  The value of the social optimum is the same for $\sm$ and $CCG$.
%  Also from Corollary~\ref{cor:useful-for-pos-CCG} \textbf{... to be
%    continued}
%\qed \end{proof}
%======================================================================

\section{Contribution Games With Tight Budget Constraints}\label{sec:CCG2}

% Here we are considering the model where every player $u$ has a
% utility function as follows. Let $R_u(s)$ be the total reward on
% edges adjacent to $u$, i.e.,
% $R_u(s)=\sum_{e=(u,v)}f_e(s_u(e),s_v(e))$. Then the utility
% (perceived reward) of $u$ is
% $U_u(s)=R_u(s)+\alpha\sum_{e=(u,v)}R_v(s)$. $\alpha$ is assumed to
% be in $[0,1]$.

In this section we consider the version of contribution games where
all player budget {\em must} be spent on adjacent edges, i.e., the sum
of each player $v$'s contributions to incident edges
$\sum_{(v,u)}s_v(vu)$ exactly equals $B_v$. At first glance, this
version of the game does not seem very different than the case when
$\sum_{(v,u)}s_v(vu)\leq B_v$. And in fact, when node utilities simply
consist of $R_u(s)$ (i.e., there is no ``friendship" component), then
it can be easily shown that all the results from
\cite{AnshelevichAlgo11} and from Section~\ref{sec:CCG} still hold.

The presence of friendship utilities, however, makes a large
difference. Consider, for instance, the simple example in
Fig.~\ref{fig:biswivel} with reward functions
$f_{uw}(x,y)=f_{vz}(x,y)=(1-\varepsilon)\min(x,y)$,
$f_{uv}(x,y)=\min(x,y)$, all node budgets equal to 1, and
$\alpha_1=1/2,~\alpha_2=\alpha_3=0$. If nodes were allowed to
contribute less than their budget, then the solution where the two
middle nodes put all their budget on the edge between them, and the
two endpoints do not contribute anything, is a pairwise
equilibrium. If, however, the two endpoint nodes {\em must} contribute
their budget to their incident edges, then this is no longer a
pairwise equilibrium, as the two middle nodes are able to
simultaneously move their budgets to the outer edges, obtaining
$(1+2\alpha_1)(1-\varepsilon)>(1+\alpha_1)$ utility each. In general,
the argument for existence of pairwise equilibrium from
Section~\ref{sec:CCG} no longer works, as stable matchings may no
longer correspond to pairwise equilibrium as they did in
Theorem~\ref{thm:existence-pe-CCG}. Moreover, the existence argument
from~\cite{AnshelevichAlgo11} is based on forming a maximal greedy
matching, which, as the above example shows, is not necessarily a
pairwise equilibrium.

Fortunately, we are able to extend many of our results to the version
where players must spend their entire budgets as well. Specifically,
we show that all our results still hold for the case of equal sharing,
with $\alpha_i=0$ for all $i\geq 2$. We call this type of perceived
utility {\em local friendship}, since nodes only care about their
neighbors, but not their neighbors-of-neighbors. In the rest of this
section, we let $\alpha=\alpha_1$, since it is the only non-zero
$\alpha_i$.

As the example above shows, even for the case of local friendship,
contribution games with tight budget constraints can behave
peculiarly. Essentially, the complication here arises from the fact
that a stable matching is stable with respect to swivel and biswivel
deviations only. On the other hand, a pairwise equilibrium has to be
stable with respect to all bilateral deviations, including two
non-adjacent nodes switching contributions, or two adjacent nodes
moving their contributions away from their shared edge. If all
unmatched nodes do not contribute anything to incident edges, then all
these deviations cannot be improving (see
Theorem~\ref{thm:existence-pe-CCG}), but in the model where all budget
must be spent, these deviations can and do occur.

The key to our results in this section is the following theorem. To
state the theorem, we first need the concept of forbidden edges,
defined below. As in Section~\ref{sec:CCG}, $\sm$ is the stable
matching game corresponding to a CCG, and $r_{uv} = f_{uv}(B_u,B_v)$
as before.

\begin{definition}
  We call an edge $e=(u,v)$ in a contribution game {\bf forbidden} if
  both of the following hold:
\begin{enumerate}
\item There exist edges $(u,x)$ and $(v,y)$ with $x\neq v$, $y\neq u$,
  and both $x,y$ having degree 1.

\item $u$ and $v$ would be willing to deviate by putting all their
  budget on edges $(u,x)$ and $(v,y)$ even if both $u$ and $v$ are
  putting all their budget on edge $e$. In other words, $e$ is such
  that $r_{uv}+\alpha r_{uv}<r_{ux}+\alpha r_{ux}+\alpha r_{vy}$ and
  $r_{uv}+\alpha r_{uv}<r_{vy}+\alpha r_{vy}+\alpha r_{ux}$.
\end{enumerate}
\end{definition}

\begin{theorem}\label{thm:matchingToEquilibrium=Budget}
  Consider a CCG with equal reward sharing and local friendship in
  which all players must contribute their entire budget to incident
  edges. Let a matching $M$ be a stable matching in the corresponding
  $\sm$. If $M$ does not contain any forbidden edges, then there
  exists an equivalent pairwise equilibrium, with the same assignment
  structure and total reward, obtained by having all unmatched nodes
  split their contributions equally among all incident edges.
\end{theorem}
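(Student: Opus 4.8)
The plan is to verify directly that the prescribed profile $s$ — matched pairs invest their whole budget in each other, every unmatched node splits its budget equally over its incident edges — admits no improving unilateral or pairwise deviation, reducing almost all deviations to blocking‑pair conditions that $M$ already rules out, and isolating the one genuinely new deviation type as exactly the one killed by the absence of forbidden edges.

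First I would record the structure of $s$. If $(u,v)\in M$ then both endpoints put their full budget on $(u,v)$, so each gets $r_{uv}$; moreover every unmatched node $z$ gets reward $0$ in $s$. Indeed, each edge $(z,u)$ incident to $z$ carries reward $0$ in $s$: if $u$ is matched (necessarily to some node $\neq z$) it contributes $0$ to $(z,u)$ and $f_{zu}(\cdot,0)=0$ since $f_{zu}\in C_0$; if $u$ is unmatched then $(z,u)$ is an improving swivel unless $r_{zu}=0$, so stability of $M$ forces $r_{zu}=0$ and hence $f_{zu}\equiv 0$ on $[0,B_z]\times[0,B_u]$. (A degree‑$1$ unmatched neighbor puts its full budget on its unique edge, but the matched endpoint still contributes $0$ there.) Thus $s$ reproduces the per‑node rewards of $M$ exactly, hence the same assignment structure and total reward.

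Next comes the algebraic engine. With equal sharing and local friendship, $U_v=R_v+\alpha\sum_{u\in N_1(v)}R_u=(1+\alpha)R_v+c_v$, where $c_v$ does not depend on $v$'s own allocation (each edge $(v,u)$ contributes the same number $f_{vu}(s_v(vu),s_u(vu))$ to $R_v$ and to $R_u$, while every remaining summand in $\sum_u R_u$ is controlled by players other than $v$). Since $R_v=\sum_u f_{vu}(s_v(vu),s_u(vu))$ is a sum of convex functions of $v$'s allocation, $U_v$ is convex in $v$'s allocation, so a best response of $v$ to any fixed behaviour of the other players is a pure strategy (all of $B_v$ on one incident edge). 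Comparing these pure alternatives with $s$: putting all of $B_v$ on $(v,u)$ yields reward $f_{vu}(B_v,s_u(vu))$, which is $0$ when $u$ is matched elsewhere, equals $r_{vw}$ when $v$ was matched to $w$ and $u=w$, and for an unmatched neighbor $u$ of degree $d$ is at most $\tfrac1d f_{vu}(B_v,B_u)=\tfrac1d r_{vu}\le r_{vw}$ by convexity in the second argument together with the swivel bound $r_{vu}\le r_{vw}$ that stability of $M$ supplies. Hence $s_v$ is already a best response and $s$ has no improving unilateral deviation.

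For pairwise deviations by $a,b$ I would split on their matched status. If both are unmatched, the deviation changes no reward at all ($\Delta U_a=\Delta U_b=0$). If exactly one is matched, or both are matched but not to each other, then the convexity reduction applied to each player lets me take each to move all its budget onto a single edge; a move can raise a reward that either player values only by piling budget onto the shared edge $(a,b)$ (if present), in which case profitability for $a$ and $b$ is exactly the swivel inequality \eqref{eqn:swivel} or the biswivel inequalities \eqref{eqn:biswivel-1}--\eqref{eqn:biswivel-2}, which fail because $M$ is stable — or by pumping budget into a common neighbor, but then the ``$\le r/d\le r/2$'' bound for a degree‑$d$ unmatched neighbor plus the stability bounds $r_{an}\le r_{aw}$, $r_{bn}\le r_{bz}$ force a circular constraint of the form $(\tfrac{1+\alpha}{\alpha})^{2}<1$, impossible for $\alpha\le 1$. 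The remaining case, $(a,b)\in M$, is the new phenomenon. Here a joint deviation moves budget off $(a,b)$ onto other incident edges; routing $a$'s budget to any incident edge $(a,n)$ with $n$ unmatched of degree $\ge 2$ yields at most $\tfrac12 r_{an}\le\tfrac12 r_{ab}$, and a direct estimate shows that whenever all of $a$'s ``useful'' budget goes to such neighbors one has $\Delta U_a\le 0$ (the worst case being $\tfrac{\alpha-1}{2}r_{ab}$), symmetrically for $b$; likewise keeping budget on $(a,b)$ while the partner abandons it is never helpful. Hence a profitable joint deviation must route both budgets entirely onto degree‑$1$ unmatched neighbors $x\neq b$, $y\neq a$, and then, using the structure of $s$, $\Delta U_a=(1+\alpha)r_{ax}+\alpha r_{by}-(1+\alpha)r_{ab}$ and $\Delta U_b=(1+\alpha)r_{by}+\alpha r_{ax}-(1+\alpha)r_{ab}$; both are positive exactly when $(a,b)$ satisfies the defining inequalities of a forbidden edge. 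Since $M$ has none, $s$ is a pairwise equilibrium.

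The main obstacle is this last case: controlling joint deviations of a matched pair $(a,b)\in M$. Unlike the other cases one cannot reduce to a single destination edge, because budget may simultaneously flow to several higher‑degree unmatched neighbors and to common neighbors of $a$ and $b$; the profitability inequalities for $a$ and for $b$ have to be bounded \emph{together}, combining the ``$\le r/d$'' convexity bounds, the stability bounds $r_{an}\le r_{ab}$ and $r_{bn}\le r_{ab}$, and careful bookkeeping of which reward increments are seen by $a$ versus $b$ under local friendship, to squeeze out that only the degree‑$1$, full‑budget configuration — i.e. a forbidden edge — can be improving for both.
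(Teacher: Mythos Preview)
Your plan is essentially the paper's proof: case-split by the matched status of the deviating pair, reduce to single-edge deviations via convexity, rule out unilateral moves and swivels/biswivels by stability of $M$, and pin the remaining case---$(a,b)\in M$ with both moving away---on the forbidden-edge condition. The two genuine stylistic differences are these. First, for the ``both matched but not to each other'' case the paper simply compares the \emph{sum} $U_u+U_v$ before and after the deviation (the factor $(1+\alpha+[\alpha]_{(u,v)\in E})$ appears symmetrically on both sides and the swivel bounds $r_{ux}\le r_{uw}$, $r_{vy}\le r_{vz}$ finish it), which is cleaner and more robust than your common-neighbor ``circular constraint'' argument; your dichotomy ``shared edge or common neighbor'' is not literally exhaustive, and the summed-utility argument avoids that bookkeeping entirely. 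Second, your last paragraph worries that in the $(a,b)\in M$ case one cannot reduce to single destination edges---but the paper does exactly that reduction at the outset (``we can restrict our attention, wlog, to deviations where nodes move all their budget to a single edge''), and your own ``worst case $\tfrac{\alpha-1}{2}r_{ab}$'' bound is already the single-destination computation that matches the paper's Case~1, so the concern is unnecessary once the convexity reduction is in hand.
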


\begin{proof}
  % Let $X$ be the set of forbidden
  % edges. %First, notice that all nodes incident on edges of $X$ are matched in $M$, since these nodes were adjacent to a node of degree 1.
  Set the strategy of a node $u$ with an edge $(u,v)\in M$ to put all
  its budget onto edge $(u,v)$. If a node is unmatched in $M$, then we
  set its strategy to spread its budget equally among all its incident
  edges. Call this solution $s$; our goal is to show that $s$ is a
  pairwise equilibrium. Since all functions $f_e$ are nondecreasing
  and convex in both parameters, we can restrict our attention, wlog,
  to deviations where nodes move all their budget to a single edge.

  It is clear that no unilateral improving deviations exist in
  $s$. This is because if a node $v$ could gain utility by
  unilaterally moving effort to an edge $e=(u,v)$, then $u$ must be
  unmatched (recall that $f_e(B_v,0)=0$ by definition of CCG), and so
  $v$ would gain by performing a swivel to $u$ in $M$, contradicting
  the stability of $M$. Similarly, swivels and biswivels (i.e.,
  deviations where two nodes $u$ and $v$ move all their effort to the
  edge $(u,v)$) cannot be improving deviations, since otherwise $M$
  would not be a stable matching. The above argument also implies that
  no unmatched node would participate in an improving deviation from
  $s$, since it can only obtain positive reward by putting effort on
  an edge to another unmatched node, which contradicts the stability
  of $M$, since all stable matchings are maximal.

  Now we must consider all other types of deviations. Specifically, we
  must now show that for every pair of nodes $u$ and $v$ that are
  matched in $M$ (not necessarily to each other), there is no
  improving bilateral deviation of $u$ and $v$. Let $e_1=(u,w)$ and
  $e_2=(v,z)$ be the edges of $M$ incident to $u$ and $v$
  respectively. Note that $e_1$ may equal $e_2$.

  Suppose to the contrary that $u$ and $v$ have an improving bilateral
  deviation in $s$, and let $e_3=(u,x)$ be the edge that $u$ moves its
  budget to, and $e_4=(v,y)$ be the edge that $v$ moves its budget
  to. We know that $e_3\neq e_4$, since otherwise this deviation
  corresponds to an improving biswivel in $M$, which is not possible
  since $M$ is a stable matching. This means that wlog, in a bilateral
  deviation $u$ and $v$ will move all their budget to an edge incident
  to an unmatched node: this is because by moving its budget to an
  edge incident to a matched node that is neither $e_1$ or $(u,v)$,
  $u$ will end up with 0 reward. Thus, if an improving bilateral
  deviation of $u$ and $v$ exists, then in this deviation $u$ and $v$
  move their budgets from $e_1$ and $e_2$ (which may be the same edge
  $e_1=(u,v)=e_2$) to edges $e_3=(u,x)$ and $e_4=(v,y)$, which are
  {\em not} the same edges. It is still possible, however, that $x$
  may equal $y$.

  Denote by $\gamma_x$ the reward on edge $e_3$ obtained if node $u$
  puts all its budget onto this edge, i.e.,
  $\gamma_x=f_{ux}(B_u,s_x(ux))$. Note that if $x$ has only a single
  incident edge, then $\gamma_x=r_{ux}$, since $s_x(ux)=B_x$ in this
  case. If instead $x$ has degree at least 2, then $\gamma_x\leq
  r_{ux}/2$, since $x$ is splitting its budget evenly among all
  incident edges, and since $f_{ux}$ is convex in the contribution of
  node $x$. Similarly, define $\gamma_y$ as
  $f_{vy}(B_v,s_y(vy))$. Finally, we will use notation $[a]_P$ to
  denote $a$ if property $P$ holds, and 0 otherwise.

  \paragraph{Case 1: $e_1=e_2=(u,v)$} In this case, it cannot be that
  both $x$ and $y$ have degree 1, since this would imply that $(u,v)$
  is a forbidden edge, and thus could not be in $M$. Therefore, we can
  assume that, wlog, the degree of $x$ is at least 2. The only rewards
  that change during the deviation are the rewards on $(u,v)$, $e_3$,
  and $e_4$. The reward $u$ receives from edges $e_3$ and $e_4$ after
  the deviation is at most
  $\gamma_x+\alpha\gamma_x+\alpha\gamma_y+\alpha\gamma_y$; the last
  term is only present if node $y$ is adjacent to $u$. Since $x$ has
  degree at least 2, then in order for this to be a profitable
  deviation for $u$, it must be that

  \begin{equation}\label{eq.case2.1}
    \frac{r_{ux}+\alpha r_{ux}}{2}+\alpha r_{vy}> r_{uv}+\alpha r_{uv}.
  \end{equation}

  % Notice that all nodes incident on edges of $X$ are matched in $M$,
  % since these nodes are adjacent to a node of degree 1. Since both
  % $x$ and $y$ are unmatched in $M$, we thus know that $e_3$ and
  % $e_4$ are not in $X$. Therefore,
  Recall that nodes $x$ and $y$ are unmatched in $M$. Due to stability
  of $M$, it must be that $r_{uv}\geq r_{ux}$ and $r_{uv}\geq r_{vy}$,
  since otherwise swiveling from $(u,v)$ to $(u,x)$ or from $(u,v)$ to
  $(v,y)$ would be an improving swivel deviation in the stable
  matching. Thus, Inequality~\eqref{eq.case2.1} implies that
  $(1+\alpha)r_{uv}<(\frac{1}{2}+\frac{3\alpha}{2})r_{uv}$, which is a
  contradiction since $\alpha\leq 1$.

  \paragraph{Case 2: $e_1\neq e_2$} We now have the final case to
  consider, in which $e_1\neq e_2$ (recall also that $e_3\neq
  e_4$). The total contribution of rewards of $e_1=(u,w)$ and
  $e_2=(v,z)$ to the utilities of $u$ and $v$ before the deviation was
  at least
  \begin{equation}\label{eq.case2.before1}
    (1+\alpha+[\alpha]_{(u,v)\in E})(r_{uw}+ r_{vz}).
  \end{equation}
  Note that the contribution can be even larger if, for example, $u$
  is adjacent to $z$, but it is at least as large
  as~\eqref{eq.case2.before1}. The total contribution of rewards of
  $e_3$ and $e_4$ to $u$ and $v$ after the deviation is at most
  \begin{equation}\label{eq.case2.after1}
    (1+\alpha+[\alpha]_{(u,v)\in E})(\gamma_x+\gamma_y)+[\alpha\gamma_x]_{deg(x)>1}+[\alpha\gamma_y]_{deg(y)>1},
  \end{equation}
  where $deg(x)$ is the degree of $x$. Thus for this deviation to be
  strictly improving, it must be that
  $\eqref{eq.case2.after1})>\eqref{eq.case2.before1})$.

  If $deg(x)=1$, then $(1+\alpha+[\alpha]_{(u,v)\in
    E})(\gamma_x)+[\alpha\gamma_x]_{deg(x)>1}=(1+\alpha+[\alpha]_{(u,v)\in
    E})r_{ux}$, since in this case $\gamma_x=r_{ux}$. If $deg(x)>1$,
  then $(1+\alpha+[\alpha]_{(u,v)\in
    E})(\gamma_x)+[\alpha\gamma_x]_{deg(x)>1} \leq
  (1+2\alpha+[\alpha]_{(u,v)\in E})(r_{ux}/2),$ since in this case
  $\gamma_x\leq r_{ux}/2,$ as argued above. Thus in either case,
  $(1+\alpha+[\alpha]_{(u,v)\in
    E})(\gamma_x)+[\alpha\gamma_x]_{deg(x)>1}\leq
  (1+\alpha+[\alpha]_{(u,v)\in E})r_{ux},$ and so quantity
  \eqref{eq.case2.after1} is at most
  \begin{equation}\label{eq.case2.after2}
    (1+\alpha+[\alpha]_{(u,v)\in E})(r_{ux}+ r_{vy})
  \end{equation}

  As argued above, since both $x$ and $y$ are unmatched in $M$, then
  we know that %$e_3$ and $e_4$ are not in $X$, and thus
  $r_{uw}\geq r_{ux}$ and $r_{vz}\geq r_{vy}$, since otherwise $M$
  would have an improving swivel. Therefore,
  quantity~\eqref{eq.case2.before1} is at
  least~\eqref{eq.case2.after2}), and so $\eqref{eq.case2.after1} \leq
  \eqref{eq.case2.before1}$. Therefore, this cannot be an improving
  deviation.
\qed \end{proof}

Using the above theorem, we can proceed similarly to
Section~\ref{sec:CCG}, and show existence of pairwise equilibrium,
convergence results, and the same bounds for price of stability as
before. %Moreover, this theorem holds for generalized friendship utilities as well.

\begin{theorem}
  Consider a CCG with equal reward sharing and local friendship
  utilities in which all players must contribute their entire budget
  to incident edges. Then, the price of stability is at most
  $\frac{2+2\alpha}{1+2\alpha}$, and a pairwise equilibrium that
  achieves this bound exists and can be found in polynomial time.
\end{theorem}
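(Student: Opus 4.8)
The plan is to reduce the problem to the stable-matching results of Section~\ref{sec:friendship-equal-sharing}, run on a suitably pruned graph, and then invoke Theorem~\ref{thm:matchingToEquilibrium=Budget}. Form the corresponding stable matching game $\sm$ with $\alpha_1=\alpha$ and $\alpha_i=0$ for $i\ge 2$, let $F$ be the set of forbidden edges (which depends only on $G$ and the rewards $r_{uv}=f_{uv}(B_u,B_v)$, not on any matching), and work with $G'=G\setminus F$.

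The first ingredient is a robustness lemma: every matching $M$ that is stable in the stable matching game on $G'$ is also stable in $\sm$ on all of $G$ (and trivially contains no forbidden edge, since $M\subseteq G'$). To prove it, suppose toward a contradiction that $(u,v)\in F$ is a blocking pair for such an $M$, and let $x,y$ be the degree-$1$ neighbors of $u$ and $v$ witnessing $(u,v)\in F$. Because $x$ and $y$ have degree $1$, the edges $(u,x)$ and $(v,y)$ are themselves not forbidden and hence lie in $G'$, so $M$ is stable with respect to both of them. If $u$ is matched to $w$ and $v$ to $z$ in $M$, then the biswivel conditions for $(u,v)$ (with $\alpha_2=0$) give $(1+\alpha)r_{uv} > (1+\alpha)r_{uw} + (\alpha+\alpha_{uz})r_{vz}$ together with the symmetric inequality, while non-blocking of $(u,x)$ and $(v,y)$ gives $r_{ux}\le r_{uw}$ and $r_{vy}\le r_{vz}$; plugging these into the two defining inequalities of a forbidden edge yields $r_{vz} < r_{vy} \le r_{vz}$ in the generic case $\alpha_{uz}=0$, a contradiction. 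The cases in which $u$ or $v$ is unmatched (or matched to one of the pendant vertices $x,y$) are similar, using additionally that two unmatched adjacent nodes always form a blocking pair.

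The second ingredient is that a maximum-weight matching $M^*$ of $G$ never uses an edge of $F$: if $(u,v)\in M^*$ were forbidden, both pendant witnesses would be unmatched in $M^*$, and replacing $(u,v)$ by the two pendant edges would strictly increase the weight, since adding the two forbidden-edge inequalities gives $r_{ux}+r_{vy} > \frac{2+2\alpha}{1+2\alpha}\,r_{uv} > r_{uv}$. Hence a maximum-weight matching of $G$ is also one of $G'$, and therefore, using that the CCG always admits a tight social optimum whose value equals the weight of a maximum-weight matching of the corresponding $\sm$ (Section~\ref{sec:CCG}), the social optimum value of the CCG equals $w(M^*)$, the maximum matching weight in $G'$.

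Given these two ingredients the theorem follows. Run \BRBP\ on the stable matching game on $G'$, started from a maximum-weight matching of $G'$; by Theorems~\ref{thm:convergence-bbp} and~\ref{thm:pos-friendship-equal-sharing} (specialized to $\alpha_2=0$) it terminates after $O(m^2)$ steps at a stable matching $M$ of $G'$ with $w(M)\ge \frac{1+2\alpha}{2+2\alpha}\,w(M^*)$. By the robustness lemma $M$ is stable in $\sm$ and forbidden-edge-free, so Theorem~\ref{thm:matchingToEquilibrium=Budget} turns it into a pairwise equilibrium of the CCG with the same total reward $w(M)$; computing $F$, finding a maximum-weight matching, running \BRBP, and performing the conversion are all polynomial. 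Thus the price of stability is at most $w(M^*)/w(M)\le\frac{2+2\alpha}{1+2\alpha}$. The step I expect to be the main obstacle is the robustness lemma: making sure that every sub-case — $u$ or $v$ matched to one of the pendant vertices, $u$ or $v$ unmatched, and the possibility that $z$ is adjacent to $u$ so that $\alpha_{uz}=\alpha$ rather than $0$ — genuinely yields a contradiction, since the inequalities are close to tight and the bookkeeping with the $\alpha_{uz},\alpha_{vw}\in\{0,\alpha\}$ coefficients must be handled with care.
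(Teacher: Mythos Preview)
Your proposal is correct and follows essentially the same route as the paper: remove forbidden edges to obtain $G'$, observe that a maximum-weight matching of $G$ is already forbidden-edge-free (hence optimal in $G'$), run \BRBP\ on $G'$ to obtain a stable matching $M$ with the desired weight, argue that $M$ is also stable in the full graph $G$, and finish via Theorem~\ref{thm:matchingToEquilibrium=Budget}. Your ``robustness lemma'' is exactly the paper's check that no forbidden edge $(u,v)$ can be blocking for $M$: the paper argues that both $u$ and $v$ must be matched in $M$ (otherwise the pendant edges $(u,x)$, $(v,y)$ block), that $r_{uw}\ge r_{ux}$ and $r_{vz}\ge r_{vy}$ by stability, and that the forbidden-edge inequalities then directly violate the biswivel conditions for $(u,v)$; this is precisely your contrapositive.

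Regarding your flagged obstacle: the sub-cases you worry about are in fact benign. If $u$ is unmatched in $M$, then so is $x$ (degree~$1$), contradicting stability on the non-forbidden edge $(u,x)\in G'$; if $u$ is matched to the pendant $x$ itself, then $r_{uw}=r_{ux}$ trivially. And when $\alpha_{uz}=\alpha$ rather than $0$, the biswivel inequality $(1+\alpha)r_{uv}>(1+\alpha)r_{uw}+(\alpha+\alpha_{uz})r_{vz}$ only becomes \emph{harder} to satisfy, so the contradiction survives without any additional work. So the bookkeeping you anticipated is not actually delicate.
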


\begin{proof}
  Recall that, by the discussion at the start of
  Section~\ref{subsec:poa-pos-CCG}, the optimal solution in this game
  simply corresponds to the maximum-weight matching $M^*$. First,
  notice that $M^*$ does not contain forbidden edges. This is easy to
  see, since if a forbidden edge $e=(u,v)\in M^*$, then the nodes $x$
  and $y$ of degree 1 adjacent to $u$ and $v$ must be
  unmatched. Removing $(u,v)$ from $M^*$, and adding $(u,x)$ and
  $(v,y)$, increases the weight of the matching. This is because, by
  definition of forbidden
  edge, $$r_{uv}<\frac{1+2\alpha}{2+2\alpha}(r_{ux}+r_{vy}),$$ and
  thus $r_{uv}<r_{ux}+r_{vy}$. Since $M^*$ is the maximum weight
  matching, this is a contradiction, and thus $M^*$ cannot contain
  forbidden edges. Moreover, every node adjacent to a forbidden edge
  must be matched in $M^*$, otherwise we could add the edge between
  this node and a node of degree 1 to increase the weight of $M^*$.

  Now consider the same game, but with all forbidden edges removed
  from the graph. The maximum-weight matching does not change, and
  thus the optimum solution does not change. Let a matching $M$ be the
  matching found by \BRBP\ in the corresponding matching game $\sm$
  (this is the game with all forbidden edges removed). Define a
  solution $s$ to be such that all nodes matched in $M$ put all their
  effort on edges of $M$, and all unmatched nodes split their effort
  equally among all incident edges. By
  Theorem~\ref{thm:pos-friendship-equal-sharing}, the weight of $M$ is
  at least $\frac{1+2\alpha}{2+2\alpha}$ of the weight of $M^*$, and
  thus $s$ meets the desired price of stability bound.  $s$ can
  clearly be found in poly-time, since $M$ can be found in poly-time.

  To show that $s$ is a pairwise equilibrium, we will use
  Theorem~\ref{thm:matchingToEquilibrium=Budget}. $M$ clearly does not
  contain forbidden edges, since all of these edges were removed
  before forming $M$. It is also stable with respect to deviations to
  non-forbidden edges, since \BRBP\ forms a stable matching. Thus, all
  we need to show is that $M$ is stable with respect to deviations to
  forbidden edges.

  To show this, consider a forbidden edge $(u,v)$, with nodes $x$ and
  $y$ defined as in the definition of forbidden edge. In matching $M$,
  $u$ must be matched to some node $w\neq v$: this is because it
  cannot be matched to $v$ (we removed edge $(u,v)$ when running our
  algorithm to form the matching), and it cannot be unmatched since
  nodes $u$ and $x$ would then form a blocking pair in $M$. Moreover,
  $r_{uw}\geq r_{ux}$, since otherwise $u$ would have an improving
  swivel to node $x$ in matching $M$. The same holds for node $v$: it
  must be matched to some node $z$ such that $r_{vz}\geq r_{vy}$. By
  definition of forbidden edge, this implies that $(u,v)$ does not
  form a blocking pair, even with edge $(u,v)$ present.

  We have now shown that $M$ is a stable matching that does not
  contain forbidden edges, and thus $s$ is a pairwise equilibrium, as
  desired.
\qed \end{proof}

Finally, since a pairwise equilibrium when nodes spend their entire
budget is also a pairwise equilibrium for the CCG without tight budget
constraints, then we obtain the following corollary of
Theorem~\ref{thm:poa-CCG}.

\begin{corollary}\label{cor:poa-CCG2}
  The price of anarchy in CCGs in which all players must contribute
  their entire budget to incident edges is bounded by $\text{PoA}\leq
  1+Q$.
\end{corollary}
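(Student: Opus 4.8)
The plan is to derive this from Theorem~\ref{thm:poa-CCG}, which already gives $\text{PoA}\le Q+1$ for CCGs without tight budget constraints. The key observation, as noted just before the corollary statement, is that every pairwise equilibrium of the tight-budget CCG is also a pairwise equilibrium of the corresponding CCG without tight budgets; combined with the fact that the two versions share the same social optimum, this immediately transfers the bound.

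First I would prove the equilibrium-inclusion claim. Let $s$ be a pairwise equilibrium of the tight-budget game and suppose, for contradiction, that in the non-tight game some player has a unilateral improving deviation, or some pair $\{u,v\}$ has a bilateral deviation in which both strictly increase their perceived utility. A deviation specifies new strategies for the deviating player(s); if such a strategy leaves part of the player's budget unspent, reassign the leftover arbitrarily among that player's incident edges. Because every reward function is nondecreasing in both of its arguments, this reassignment can only weakly increase $R_t$ for every node $t$, and since each $U_t$ is a nonnegative combination of the $R$-values, every $U_t$ only weakly increases as well. Hence the deviating players' perceived utilities after the reassignment are still strictly above their values in $s$. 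Since the non-deviating players already spend their full budgets in $s$, the reassigned profile is a legal deviation in the tight-budget game, contradicting that $s$ is a pairwise equilibrium there. Thus $s$ is a pairwise equilibrium of the non-tight CCG.

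Second, I would observe that the social optimum coincides in the two versions. As recalled at the start of Section~\ref{subsec:poa-pos-CCG}, the non-tight social optimum equals the value $\sum_{(uv)\in M^*} r_{uv}$ of a maximum-weight matching $M^*$ (Claim~2.10 of~\cite{AnshelevichAlgo11}), and this value is realized in the tight-budget game by letting each matched node put all its budget on its $M^*$-edge and each unmatched node spread its budget arbitrarily; all edges touching an unmatched node then carry zero reward (their other endpoint either spends its budget elsewhere, or is another unmatched node incident via an edge that must have $r_e=0$ by maximality of $M^*$). Since the tight-budget game is a restriction of the non-tight one, the tight-budget optimum cannot be larger either, so the two optima are equal.

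Putting these together: for any pairwise equilibrium $s$ of the tight-budget CCG, $s$ is also a pairwise equilibrium of the non-tight CCG, and the two optima agree, so $w(s^*)/w(s)\le 1+Q$ by Theorem~\ref{thm:poa-CCG}; maximizing over $s$ gives $\text{PoA}\le 1+Q$. I do not expect a genuine obstacle here — the only point needing care is the monotonicity remark that topping up unused budget never lowers anyone's perceived utility, which is exactly what makes the equilibrium-inclusion step go through.
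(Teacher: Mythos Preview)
Your proposal is correct and follows exactly the paper's approach: show that every pairwise equilibrium of the tight-budget game is also a pairwise equilibrium of the non-tight CCG, then invoke Theorem~\ref{thm:poa-CCG}. You supply considerably more detail than the paper's one-sentence justification---in particular the topping-up argument for the equilibrium-inclusion step is spelled out carefully and is correct. Your second step (that the two optima coincide) is also correct, but note that for the PoA bound you only need the tight-budget optimum to be at most the non-tight one, which is immediate since the tight game is a restriction.
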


\bibliographystyle{plain}
%\bibliographystyle{splncs}
%\bibliography{../../../../Bibfiles/literature,../../../../Bibfiles/martin}

\begin{thebibliography}{10}

\bibitem{Abraham08}
David Abraham, Ariel Levavi, David Manlove, and Gregg O'Malley.:
\newblock The stable roommates problem with globally ranked pairs.
\newblock {\em Internet Math.}, 5(4):493--515, 2008.

\bibitem{Ackermann11}
Heiner Ackermann, Paul Goldberg, Vahab Mirrokni, Heiko R{\"o}glin, and Berthold
  V{\"o}cking.:
\newblock Uncoordinated two-sided matching markets.
\newblock {\em SIAM J. Comput.}, 40(1):92--106, 2011.

\bibitem{Anshelevich10}
Elliot Anshelevich and Sanmay Das.:
\newblock Matching, cardinal utility, and social welfare.
\newblock {\em SIGecom Exchanges}, 9(1):4, 2010.

\bibitem{AnshelevichD09}
Elliot Anshelevich, Sanmay Das, and Yonatan Naamad.:
\newblock Anarchy, stability, and utopia: {C}reating better matchings.
\newblock In {\em Proc.\ 2nd Intl.\ Symp.\ Algorithmic Game Theory (SAGT)},
  pages 159--170, 2009.

%\bibitem{AnshelevichESA10}
%Elliot Anshelevich and Martin Hoefer.
%\newblock Contribution games in social networks.
%\newblock In {\em Proc.\ 18th European Symposium on Algorithms (ESA)},
%  volume~1, pages 158--169, 2010.

\bibitem{AnshelevichAlgo11}
Elliot Anshelevich and Martin Hoefer.:
\newblock Contribution games in networks.
\newblock {\em Algorithmica}, Volume 63, 1-2 (2012), pp. 51--90.
\newblock Conference version appeared in {\em Proc.\ 18th European Symposium on Algorithms (ESA)}, volume~1, pages 158--169, 2010.

\bibitem{Arcaute09}
Esteban Arcaute and Sergei Vassilvitskii.:
\newblock Social networks and stable matchings in the job market.
\newblock In {\em Proc.\ 5th Intl.\ Workshop Internet \& Network Economics
  (WINE)}, pages 220--231, 2009.

\bibitem{Ashlagi08}
Itai Ashlagi, Piotr Krysta, and Moshe Tennenholtz.:
\newblock Social context games.
\newblock In {\em Proc.\ 4th Intl.\ Workshop Internet \& Network Economics
  (WINE)}, pages 675--683, 2008.
  
\bibitem{Edith}
John Augustine, Ning Chen, Edith Elkind, Angelo Fanelli, Nick Gravin, and Dmitry Shiryaev. Dynamics of Profit-Sharing Games. 
In {\em Proc. of IJCAI 2011}.
  
%\bibitem{BachrachUnpublished}
%Yoram Bachrach, Vasilis Syrgkanis, and Milan Vojnovic. Unpublished manuscript.

\bibitem{Ballester06}
Coralio Ballester, Antoni Calv{\'o}-Armengol, and Yves Zenou.:
\newblock Who's who in networks. {W}anted: {T}he key player.
\newblock {\em Econometrica}, 74(5):1403--1417, 2006.

\bibitem{BramoulleJET07}
Yann Bramoull{\'e} and Rachel Kranton.:
\newblock Public goods in networks.
\newblock {\em J. Econ.\ Theory}, 135(1):478--494, 2007.

\bibitem{Buehler11}
Russell Buehler, Zachary Goldman, David Liben-Nowell, Yuechao Pei, Jamie
  Quadri, Alexa Sharp, Sam Taggart, Tom Wexler, and Kevin Woods.:
\newblock The price of civil society.
\newblock In {\em Proc.\ 7th Intl.\ Workshop Internet \& Network Economics
  (WINE)}, 2011.

\bibitem{Chen11}
Po-An Chen, Bart~De Keijzer, David Kempe, and Guido Schaefer.:
\newblock On the robust price of anarchy of altruistic games.
\newblock In {\em Proc.\ 7th Intl.\ Workshop Internet \& Network Economics
  (WINE)}, 2011.

\bibitem{Chen08}
Po-An Chen and David Kempe.
\newblock Altruism, selfishness, and spite in traffic routing.:
\newblock In {\em Proc.\ 9th Conf.\ Electronic Commerce (EC)}, pages 140--149,
  2008.

\bibitem{Chung00}
Kim-Sau Chung.
\newblock On the existence of stable roommate matchings.:
\newblock {\em Games Econom.\ Behav.}, 33(2):206--230, 2000.

\bibitem{Eshel98}
Ilan Eshel, Larry Samuelson, and Avner Shaked.:
\newblock Altruists, egoists and hooligans in a local interaction model.
\newblock {\em Amer.\ Econ.\ Rev.}, 88(1):157--179, 1998.

\bibitem{Gale62}
David Gale and Lloyd Shapley.:
\newblock College admissions and the stability of marriage.
\newblock {\em Amer.\ Math.\ Monthly}, 69(1):9--15, 1962.

\bibitem{Galeotti10}
Andrea Galeotti, Sanjeev Goyal, Matthew Jackson, Fernando Vega-Redondo, and
  Leeat Yariv.:
\newblock Network games.
\newblock {\em Rev.\ Econom.\ Studies}, 77(1):218--244, 2010.

\bibitem{Goemans06}
Michel Goemans, Li~Li, Vahab Mirrokni, and Marina Thottan.:
\newblock Market sharing games applied to content distribution in ad-hoc
  networks.
\newblock {\em IEEE J. Sel.\ Area Comm.}, 24(5):1020--1033, 2006.

\bibitem{Gusfield89}
Dan Gusfield and Robert Irving.:
\newblock {\em The Stable Marriage Problem: {S}tructure and Algorithms}.
\newblock MIT Press, 1989.

\bibitem{HoeferICALP11}
Martin Hoefer.:
\newblock Local matching dynamics in social networks.
\newblock In {\em Proc.\ 38th Intl.\ Coll.\ Automata, Languages and Programming
  (ICALP)}, volume~2, pages 113--124, 2011.

\bibitem{HoeferIJCAI11}
Martin Hoefer, Michal Penn, Maria Polukarov, Alexander Skopalik, and Berthold
  V\"ocking.:
\newblock Considerate equilibrium.
\newblock In {\em Proc.\ 22nd Intl.\ Joint Conf.\ Artif.\ Intell.\ (IJCAI)},
  pages 234--239, 2011.

\bibitem{HoeferESA09}
Martin Hoefer and Alexander Skopalik.:
\newblock Altruism in atomic congestion games.
\newblock In {\em Proc.\ 17th European Symposium on Algorithms (ESA)}, pages
  179--189, 2009.

\bibitem{HoeferCoor09}
Martin Hoefer and Alexander Skopalik.:
\newblock Stability and convergence in selfish scheduling with altruistic
  agents.
\newblock In {\em Proc.\ 5th Intl.\ Workshop Internet \& Network Economics
  (WINE)}, pages 616--622, 2009.

\bibitem{Irving85}
Robert Irving.:
\newblock An efficient algorithm for the "stable roommates" problem.
\newblock {\em J. Algorithms}, 6(4):577--595, 1985.

\bibitem{Jackson08}
Matthew Jackson.:
\newblock {\em Social and Economic Networks}.
\newblock Princeton University Press, 2008.

\bibitem{Kanoria11}
Yashodhan Kanoria, Mohsen Bayati, Christian Borgs, Jennifer Chayes, and Andrea
  Montanari.:
\newblock Fast convergence of natural bargaining dynamics in exchange networks.
\newblock In {\em Proc.\ 22nd Symp.\ Discrete Algorithms (SODA)}, pages
  1518--1537, 2011.

\bibitem{KleinbergSTOC08}
Jon Kleinberg and {\'E}va Tardos.:
\newblock Balanced outcomes in social exchange networks.
\newblock In {\em Proc.\ 40th Symp.\ Theory of Computing (STOC)}, pages
  295--304, 2008.

\bibitem{Kleinberg11}
Jon~M. Kleinberg and Sigal Oren.:
\newblock Mechanisms for (mis)allocating scientific credit.
\newblock In {\em Proc.\ 43rd Symp.\ Theory of Computing (STOC)}, pages
  529--538, 2011.

\bibitem{Ledyard97}
John Ledyard.:
\newblock Public goods: A survey of experimental resesarch.
\newblock In John Kagel and Alvin Roth, editors, {\em Handbook of Experimental
  Economics}, pages 111--194. Princeton University Press, 1997.

\bibitem{Levine98}
David Levine.:
\newblock Modeling altruism and spitefulness in experiments.
\newblock {\em Rev.\ Econom.\ Dynamics}, 1:593--622, 1998.

\bibitem{Mathieu08}
Fabien Mathieu.:
\newblock Self-stabilization in preference-based systems.
\newblock {\em Peer-to-Peer Netw.\ Appl.}, 1(2):104--121, 2008.

\bibitem{Meier08}
Dominic~Meier, Yvonne Anne~Oswald, Stefan Schmid, and Roger Wattenhofer.:
\newblock On the windfall of friendship: {I}noculation strategies on social
  networks.
\newblock In {\em Proc.\ 9th Conf.\ Electronic Commerce (EC)}, pages 294--301,
  2008.

\bibitem{Peleg03}
Bezalel Peleg and Peter Sudh{\"o}lter.:
\newblock  Introduction to the Theory of Cooperative Games.
\newblock {\em Kluwer Academic Publishers}, 2003.

\bibitem{RothBook90}
Alvin Roth and Marilda~Oliveira Sotomayor.:
\newblock Two-sided Matching: A study in game-theoretic modeling and
  analysis.
\newblock {\em Cambridge University Press}, 1990.

\bibitem{Roth90}
Alvin Roth and John H.~Vande Vate.:
\newblock Random paths to stability in two-sided matching.
\newblock {\em Econometrica}, 58(6):1475--1480, 1990.

\bibitem{TardosChapter07}
{\'E}va Tardos and Tom Wexler.:
\newblock Network formation games.
\newblock In Noam Nisan, {\'E}va Tardos, Tim Roughgarden, and Vijay Vazirani,
  editors, {\em Algorithmic Game Theory}, chapter~19. Cambridge University
  Press, 2007.

\bibitem{Teo98}
Chung-Piaw Teo and Jay Sethuraman.:
\newblock The geometry of fractional stable matchings and its applications.
\newblock {\em Math.\ Oper.\ Res.}, 23(5):874--891, 1998.

\bibitem{Zick}
Yair Zick, Georgios Chalkiadakis, and Edith Elkind.
Overlapping Coalition Formation Games: Charting the Tractability Frontier. {\em AAMAS 2012.}

\end{thebibliography}

\end{document}